\documentclass[10pt,journal,compsoc]{IEEEtran}
%

%
\usepackage{amsmath}
\usepackage{amsfonts}
\usepackage{amsthm}
\usepackage{amssymb}
\usepackage{extarrows}
\usepackage{color}

\usepackage{graphicx}
\usepackage{longtable}
\usepackage{rotating}
\usepackage{multirow}
\usepackage{booktabs} 
\usepackage{subfigure}
\usepackage{makecell}

\usepackage[linesnumbered,ruled,vlined]{algorithm2e}

\usepackage{mathrsfs}
\newtheorem{definition}{Definition}
\newtheorem{theorem}{Theorem}

\newtheorem{assumption}{Assumption}

\makeatletter
\newcommand{\thickhline}{%
    \noalign {\ifnum 0=`}\fi \hrule height 1pt
    \futurelet \reserved@a \@xhline
}
\newcolumntype{I}{!{\vrule width 1pt}}
\makeatother

\ifCLASSINFOpdf
\else
\fi
\hyphenation{op-tical net-works semi-conduc-tor}

\begin{document}
%
\title{ Lightning-Fast and Privacy-Preserving Outsourced Computation  in the Cloud}
%
%
%

\author{Ximeng Liu~\IEEEmembership{Member,~IEEE,}
Robert H. Deng,~\IEEEmembership{Fellow,~IEEE,} Pengfei Wu,~\IEEEmembership{Student Member,~IEEE,} Yang Yang,~\IEEEmembership{Member,~IEEE}
                       
         \thanks{X. Liu,  Y. Yang are with  College of Mathematics and Computer Science, Fuzhou University, Fuzhou, China. E-mail: snbnix@gmail.com (X. Liu),  yang.yang.research@gmail.com (Y. Yang).}
         
         \thanks{X. Liu,  R.H. Deng, Y. Yang  are with the School of Information Systems, Singapore
Management University, Singapore. E-mail:   robertdeng@smu.edu.sg (R.H. Deng)}
         
       \thanks{P. Wu is with School of Software and Microelectronics, Peking University, Beijing, China. E-mail: wpf9808@pku.edu.cn.}

 }

%
%

\markboth{ }%
{ }
%





\IEEEcompsoctitleabstractindextext{
\begin{abstract}
In this paper, we propose a framework for lightning-fast   privacy-preserving   outsourced computation framework in the cloud, which we refer to as LightCom. Using LightCom, a user can securely achieve the outsource     data storage and fast secure data processing   in a single cloud server  different  from  the existing  multi-server outsourced computation  model.
Specifically, we first present a general secure computation framework for LightCom under the cloud server equipped with multiple Trusted Processing Units (TPUs) which face the side-channel attack. Under the LightCom, we design two specified fast processing toolkits which allow the user to achieve the commonly-used  secure integer computation and secure floating-point computation against the  side-channel information leakage of TPUs, respectively.
 Furthermore, our LightCom can also  guarantee access pattern protection during the data processing and achieve user private information  retrieve after the computation. We prove that the proposed LightCom can successfully achieve the goal of single cloud outsourced data processing to avoid the extra computation server and trusted computation server, and demonstrate the utility and the efficiency of LightCom using simulations.

\end{abstract}

\begin{IEEEkeywords}
 Privacy-Preserving;    Secure Outsourced Computation; Homomorphic Encryption;  Secret Sharing Technique; Against Side-channel Attack.   
\end{IEEEkeywords}}

\maketitle

%
\IEEEpeerreviewmaketitle

\section{Introduction}
\label{sec:introduction}

\IEEEPARstart{T}HE internet of things (IoT), embedded with electronics, Internet connectivity, and other forms of hardware (such as sensors), is a computing concept that describes the idea of everyday physical objects being connected to the internet and being able to identify themselves to other devices. With large numbers  of IoT devices, huge amount of data are generated for usage. 
 According to  IDC\footnote{http://www.vebuso.com/2018/02/idc-80-billion-connected-devices-2025-generating-180-trillion-gb-data-iot-opportunities/}, the connect IoT devices  will reach  80 billion  in 2025, and help to  generate 180 trillion gigabytes of new data that year.  A quarter of the data will create in real time, and  95\% is to come from  IoT real time data. With such large volume real-time data are generated, it is impossible  for  the resource-limited IoT devices to store and do the data analytics in time.  
Cloud computing, equipped     
 almost unlimited power of storage and computing, provides diversity of   services on    demand, such as,  storage, databases, networking, software, analytics, intelligence. With the help of cloud computing, 49 percent of data will be stored in public cloud environments by 2025 \footnote{ https://economictimes.indiatimes.com/tech/internet/global-data-to-increase-10x-by-2025-data-age-2025/articleshow/58004862. cms?from=mdr}. Thus, it is unsurprisingly that
 the huge volume  data generated by IoT devices are outsourced  to the cloud for long-term storage and achieve real-time online processing.




Despite the advantages provided by IoT-cloud  data outsourcing  architecture,  the individual IoT users are hesitated to the   system for data storage and processing  without any  protection method. 
In the Internet of Medical Things example \cite{dimitrov2016medical},  patients' wearable mHealth devices that always  equipped with the biometric measurements sensors (such as heart rate, perspiration levels, oxygen levels) to  record  the physical sign of the patient.  The hospital can use client’s PHI decision-making model to automatically check a patients’ health status.  If no   protection method is adopted, patients's physical sign can be capture by adversary.  Moreover, the hospital model can be got  by other third-party company to make profit.  Use the traditional  encryption technique  can protect the data from leakage, however,  the ciphertext lost the original meaning of the plaintext which cannot doing any computations.

  Protecting the data and achieve the secure outsource computation   simultaneously is an eye-catching field to solve the above problems. 
Currently, there are typically   two aspects of  techniques  to achieve secure outsourced computation: theoretical cryptography solution and system security  solution. 
  For the cryptography point of view, homomorphic encryption  \cite{naehrig2011can}  is considered as a super-excellent solution for the outsourced computation which allows the third-party to perform the computation on the encrypted data without reveal the content of the plaintext.  Fully homomorphic encryption \cite{van2010fully} can achieve arbitrary computation on the plaintext corresponding to the    complex operations on ciphertext. However, the computation overhead is still tremendous which is not fit for the piratical usage (e.g., it requires 29.5 s to run secure integer multiplication computation with a common PC \cite{liu2018privacyxxx}).  
  Semi-homomorphic encryption \cite{bendlin2011semi, farokhi2016secure} only supports one types of  homomorphic (e.g. additive homomorphic), can achieve  complex data computation on the encrypted data with the help of  extra honest-but-curious servers. But, the extra computation server will increase possibility of the information leakage. Recently, for the industrial community,  trusted execution environment  (TEE, such as Intel$^\circledR$ Software Guard Extensions (SGX)\footnote{https://software.intel.com/en-us/sgx} and ARM TrustZone\footnote{https://developer.arm.com/ip-products/security-ip/trustzone}) is developed to achieve the secure computation which 
   allows user-level or operating system code to define private regions of memory, also called enclaves.  The data in the enclave are protected and unable to be either read or saved by any process outside the enclave itself. The performance of the TEE is equivalent to the plaintext     computation overhead. Unfortunately, TEE     easily faces the side-channel attack, and the information inside the enclave can be leaked to the adversary\footnote{https://software.intel.com/en-us/articles/intel-sgx-and-side-channels} \footnote{https://www.arm.com/products/silicon-ip-security/side-channel-mitigation}.
Thus,
an fascinating problem appears for creating a system to balance the  usage of   practical outsourced  computation system and  eliminate the extra information leakage risk: \textit{how can a single cloud securely perform the  arbitrary outsourced computation without the help of extra third-party computation server or trusted authority, which interactions between the user and the cloud kept to a minimum.}


In this paper, we seek to address the above-mentioned challenge by presenting a framework for   lightning-fast  and privacy-preserving {o}utsourced {c}omputation Framework in a Cloud (LightCom). We regard the contributions of this paper to be six-fold, namely:
\begin{itemize}

\item  \textit{Secure Data Outsourced Storage.} The LightCom allows each user  to outsource his/her individual data to a cloud data center   for secure storage without compromising the privacy of his/her own   data to the other unauthorized storage.  

\item    \textit{Lightning-fast and Secure Data Processing in   Single Cloud.} The LightCom   can allow  in a single cloud  equipped  with multiple Trusted Processing Units (TPUs), which provides a TEE to achieve the user-centric  outsourced   computation on the user's encrypted data. 
Moreover, the data in  outside untrusted storage are secure against  chosen ciphertext attack for long-term, while data  insider TPUs can be  protected against side-channel attack.

\item    \textit{Outsourced  Computation Primitive Combinable.}   Currently,   the outsourced computation methods focus on a special computation task, such as outsourced exponential computation. Different specific outsourced tasks are constructed with different crypto preliminary. Thus, the previous computation result cannot be directly used for the input of the next computation.
Our LightCom can directly solve the problem with uniform design method which can achieve computation combinable.

\item   \textit{No Trusted Authority Involved.}
In most of the existing cryptosystem,  trusted authority is fully trusted which is an essential party in charge of distributing the public/private keys for all the other parties in the system. Our LightCom does not involve an additional fully trusted party in the system which makes the system more efficient and practical.



\item   \textit{Dynamic Key/Ciphertext Shares Update.} To reduce the user's private key and data leakage risk during the processing, our LightCom randomly splits the key and data into different shares which are processed in different TPUs, cooperatively. To avoid long-term shares leaking for recovering the original secrets, 
our LightCom allows TPUs updating  user's   ``old'' data/private-key shares  into the ``new'' shares on-the-fly dynamically 
 without  the participation of the data user.

\item   \textit{High User Experience.} 
Most existing privacy-preserving computation technique requires user to  preform different  pre-processing technique   according the function type  prior to  data outsourcing. 
The LightCom  does not need the data owner to perform any pre-processing procedure - only needs to encrypt and outsource the data to the cloud for storage. 
Thus, interactions between the user and the cloud     kept to a minimum - send the encrypted data to the cloud,  and received  outsourced  computed results in a single round.
\end{itemize}
\textbf{Motivation and Technique Overview.} As the sensitive information contained inside   TPU   can be attacked, our primary goal of the LightCom framework is to achieve secure computation in a single cloud without the help of an additional party.  The idea is to let the data store in the outside storage, and achieve privacy-preserving computation insider TPU.
  The main challenges are \textit{how to achieve both secure data storage and data processing against side-channel attacks, simultaneously}.  To solve the previous challenge, 
 we use a new  Paillier Cryptosystem Distributed Decryption (PCDD) which can achieve semantic secure data storage.  To prevent information leakage inside  TPU,   our LightCom uses \textit{one-time pad}   by adding some random numbers on plaintext of the  PCDD ciphertext.
Even the ``padded'' ciphertext for TPU enclave for decryption and process, the attacker still cannot get the original message of the plaintext.  To achieve ciphertext decryption, our LightCom uses multiple TPUs, and each TPU only stores a share of the private key to prevent the user's key leakage risk. 
Even some partial private key/data shares may leak to the adversary; our framework can successfully update these shares dynamically inside the TPU to make the leaked shares useless.    
More importantly, all the secure execution environment  (called TPU enclaves) in TPUs are dynamically building and release for the secure computation in our LightCom framework,  which can further decrease the information leak risk in the enclave.

\section{Preliminary}

\label{sec:preliminary}
\subsection{Notations}
 
Throughout the paper, we use $\|x\|$ to denote bit-length of $x$, while  ${\cal{L}}(x)$  denotes the number of element in $x$. Moreover, we use $pk_a$ and $sk_a$ to denote the public and private keys of a Request User (RU) $a$,  $sk_a^{(1)}, sk_a^{(2)}$ to denote the partial private keys that form $sk_a$, $[\![x]\!]_{pk_a}$ to denote the encrypted data of $x$ using $pk_a$ in public-key cryptosystem. For simplicity, if all ciphertexts belong to a specific RU, say $a$, we   simply use $[\![x]\!]$ instead of $[\![x]\!]_{pk_a}$. We use notion $\langle m  \rangle$ to denote the data share of $m$, i.e., each party  $i$ $(i = 1,\cdots, {\cal{P}})$ holds $m_i$, such that $\sum_{i=1}^{{\cal{P}}}m_i = m$.
 
  \subsection{Additive Secret Sharing Scheme (ASS)}
 
 Give $m\in \mathbb{G}$ ($\mathbb{G}$ is a finite abelian
group under addition),  the  
additive secret sharing scheme (a.k.a. $\cal{P}$-out-of-$\cal{P}$ secret sharing scheme) can be classified into   the following two algorithms -- Data Share Algorithm (\texttt{Share}) and Data Recovery Algorithm (\texttt{Rec}): 

$\texttt{Share}(m):$ Randomly generate $X_1,\cdots, X_{{\cal{P}}-1} \in \mathbb{G}$,  the algorithm  computes $X_{\cal{P}} = m -(X_1+\cdots +X_{{\cal{P}}-1})$, and outputs $X_1,\cdots, X_{{\cal{P}}}.$

 $\texttt{Rec}(X_1,\cdots,X_{\cal{P}}):$ With the shares  $X_1,\cdots,X_{\cal{P}}$, the algorithm can recover the message $m$ by computing  
 with $m = X_1+\cdots+X_{\cal{P}}$ under group $\mathbb{G}$.
 
\subsection{Additive Homomorphic Encryption Scheme}
\label{sec:AHES}

To reduce the communication cost of the LightCom, we used an Additive Homomorphic Encryption (AHE)  scheme as the basis.  Specifically, we use one of the AHE support threshold decryption called  Paillier Cryptosystem Distributed Decryption  (PCDD) in our previous work which contains six algorithms called 
 {Key Generation (\texttt{KeyGen})},  {Data Encryption (\texttt{Enc})},  {Data Decryption (\texttt{Dec})}, 
 {Private Key Splitting (\texttt{KeyS})},   
 {Partially decryption (\texttt{PDec})}, 
 {Threshold decryption  (\texttt{TDec}).}    
 The plaintext is belonged to $\mathbb{Z}_N$ and the ciphertext is belonged to $\mathbb{Z}_{N^2}$.
 The construction of the above algorithms can be found in Supplementary Materials Section C. 
Here, we introduce the two properties of the PCDD as follows:
1)  \textit{Additive Homomorphism}:  Given ciphertexts $[\![m_1]\!]$ and $[\![m_2]\!]$ under a same public key $pk$, the additive homomorphism can be computed  by ciphertext multiplication, i.e., compute
$[\![m_1]\!] \cdot [\![m_2]\!]  = [\![m_1+m_2]\!] $.  
2) \textit{Scalar-multiplicative Homomorphism}: Given ciphertext $[\![m]\!]$  and a constant number $c \in \mathbb{Z}_N$, it has 
$([\![m]\!])^{c}  
= [\![cm]\!].$

With the two properties given above,  we   show that our PCDD have the polynomial  homomorphism property, i.e.,  given $[\![x_1]\!],\cdots, [\![x_n]\!]$ and $a_1,\cdots,a_n$,  it has 
 \begin{center}
 $[\![a_1 \cdot x_1 + a_2 \cdot x_2 + \cdots a_n x_n ]\!]  \leftarrow [\![x_1]\!]^{a_1} \cdot [\![x_2]\!]^{a_2} \cdots [\![x_n]\!]^{a_n}$.
  \end{center}
 
   \subsection{Mathematical Function Computation}
In this section, we define the function which is used for data processing in our LightCom. 

 \begin{definition}[Deterministic Multiple-output  Multivariable Functions]
  Let $D = \{(x_1,\cdots, x_v) : x_i \in \mathbb{G} \}$ be a subset of $\mathbb{G}^v.$ We define the deterministic multiple-output  multivariable function as follows:
 (I) A multiple-output  multivariable function $\cal{F}$ of $v$ variables is a rule which assigns  each
ordered vector $(x_1,\cdots, x_v)$   in $D$ to a unique vector denoted $(y_1,\cdots, y_w)$, denote $ (y_1,\cdots, y_w) \leftarrow {\cal{F}}(x_1,\cdots,x_v)$. 
 (II) The set $D$ is called the domain of $\cal{F}$. 
 (III) The set ${\{\cal{F}}(x_1,\cdots,x_v)| (x_1,\cdots, x_v) \in D\}$ is called the range of $\cal{F}$.
  \end{definition}
  
     Note that the deterministic  multiple-output  multivariable  function is the general case of the deterministic   multiple-output  single-variable function   ($v=1$),  deterministic   single-output  multivariable  function ($w =1$), and deterministic single-output single variable function  ($v =1, w=1$). As all the functions used in our paper can  be successfully executed by a polynomial deterministic Turing machine (See Supplementary materials Section A), we omit the   word  ``deterministic'' in the rest of the paper.

  
  
 \begin{figure}[htbp]
\centering
\includegraphics[width=0.45\textwidth]{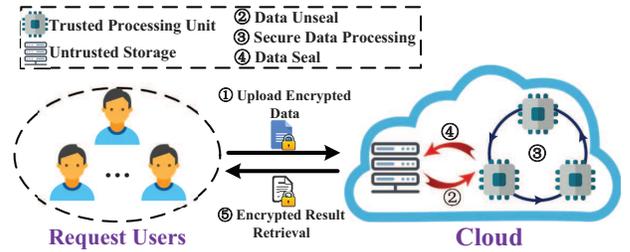}
\caption{System model under consideration}
\label{fig:systemmodel}
\end{figure}

\section{System Model \& Privacy Requirement}
\label{SEC:SPPP}
In this section, we formalize the LightCom system model,  and define the attack model.

\subsection{System Model}
\label{subsec:systemmodel}
In our LightCom system, we mainly focus on how the cloud server responds to a user request on outsourced computation   in a  privacy-preserving manner. The system comprises Request User (\textbf{RU}) and  a Cloud with Untrusted Storage  (\textbf{UnS}) and Trusted Processing Units (\textbf{TPUs})  - see   Fig. \ref{fig:systemmodel}.

\begin{itemize}

  \item  A \textbf{RU}   generates his/her public key, private key shares, and data shares. After that, the RU can securely outsource the public key and private/data shares to the cloud‘s  UnS for secure storage (See \textcircled{1}).  
   Moreover, the RU can also request a cloud to perform some secure outsourced computations on the outsourced data, and securely retrieve the final encrypted results (See  \textcircled{5}).
  \item  A \textbf{UnS} of the cloud has `unlimited' data storage space to store and manage data outsourced from the registered RU.  Also, the UnS also stores all the intermediate and final results for the RU in encrypted form.
  \item The \textbf{TPUs} of the cloud provides online computation ability for each  RUs.   Each TPU provides isolation secure computing environment for individual RU and can load   RU's data shares from UnS (See \textcircled{2}), perform certain calculations over the data shares (See \textcircled{3}), and then securely seal the data shares in UnS for storage (See \textcircled{4})\footnote{See the algorithm \texttt{Seal} and \texttt{UnSeal} in Section \ref{STPUDATA}.}.  Note that one TPU cannot load other TPU's sealed data which are stored in UnS.  
\end{itemize}

\subsection{Attack Model} \label{subsec:securitymodel}
 In our attack model,  the cloud is \textit{curious-but-honest} party, which strictly follow the protocol, but are also interested in learning  data belonged  to the RUs.  The UnS inside the cloud is transparency to both the cloud and the outsider passive attackers. Every  TPU can provide a secure execution environment (a.k.a., TPU enclave) for a RU which is secure against the other RU, the cloud and outsider passive attackers. 
 The inside non-challenge 
 RUs and outside attackers can also be interested to learn challenge RU's data.   Therefore, we introduce  three active adversaries ${\cal{A}}^*_1, {\cal{A}}^*_2, {\cal{A}}^*_3, $ which can  simulate the malicious actions corresponding to the outside attackers, non-challenge RUs, UnS, respectively. The goal of these adversaries is to get the challenge RU's plaintext or try to let the challenge RU get wrong computation result with the following capabilities:
  
1) ${\cal{A}}^*_1$ acts as the outside attacker that  may \textit{eavesdrop} on all communication links and CP's UnS, and  try to decrypt the challenge RU's  encrypted data. 
  2) ${\cal{A}}^*_2$ may \textit{compromise} RUs, with the exception of the challenge RU, to get access to their decryption capabilities, and try to guess all plaintexts belonging to the challenge RU.  
     3) ${\cal{A}}^*_3$ may \textit{compromise} the TPU to guess plaintext values of all data shares sent from the UnS by executing an interactive protocol.  
Noting that the above  adversaries ${\cal{A}}^*_1, {\cal{A}}^*_2, {\cal{A}}^*_3$ are restricted from compromising (i) all the TPUs concurrently\footnote{Note that ${\cal{P}} \geq 3$ TPUs are required in LightCom for the security consideration.}, and (ii) the challenge RU. 


 

\section{Basic Privacy Preserving   Computation Protocols}
\label{PCDDDDD}
In this section, we introduce our general design method of the mathematical function for LightCom. Moreover, the dynamic private/data share update without the participation of the DO are also introduced. 

 

 
 

   \subsection{The LightCom Design Method for the Single  Functions}
   \label{OCSGX:single}
  
   Our LightCom achieves the user data's privacy during the efficiency in the outsourced cloud with three-dimensional protection: 1) secure storing in the untrusted cloud storage; 2) secure processing in TPUs against side-channel attack; 3) efficient and dynamic outsourced key and data shares updating. Specifically, 
     to outsource the data to the cloud, the RU first initializes the system, uses the RU's public key to encrypt the data and outsource these encryptions along with the system parameters to UnS for storage. To achieve the second-dimensional protection,   our LightCom  uses the data sharing-based secure computation method between TPUs which can resist the side-channel attacks even the PPCD ciphertexts are decrypted. After finishing the   processing,   the data are sent back to UnS for further processing to finish the corresponding functionality defined in the program, and the enclaves in TPUs are released.  Moreover, to tackle the leaked private key and data shares, all the TPUs can jointly update these shares without the help of RU. Thus, the LightCom can classify into the following four phases.

 \textit{1) System Initialize Phase:}
 Firstly,  
  the RU generates a    public key $pk$ and  private key is $sk$ of appropriate public key crypto-system, and then   splits the private key $sk$ into ${\cal{P}}$ shares $sk_i $ $(i = 1,\cdots, {\cal{P}})$ with the \texttt{Share} algorithm. After that, 
  for each TPU $i$ in the cloud,    it initials an enclave $i$,  builds a secure channel, and 
   uploads the $sk_i$ to the     enclave $i$ securely. Finally, the TPU $i$ uses the data sealing to securely  stored the   $pk$, $sk_i$ in to UnS.

   \textit{2) Data Upload Phase:} 
  In the phase, the  RU  
    randomly separate  the data  
   $x_{j,1},\cdots,x_{j,{\cal{P}}}  \in \mathbb{G}$, such that $x_{j,1}+ \cdots +x_{j,{\cal{P}}} = x_j   $ for $j = 1,\cdots, v$. Then, the TPU $i$
$(i = 1,\cdots, {\cal{P}})$   creates the enclave $i$. After that, 
       the RU  defines  the program ${\cal{C}}_{i}$ for some specific computation function,  builds a secure channel with TPU enclave  $i$, 
        remotely loads $x_{1,i},\cdots,x_{v,i}, {\cal{C}}_{i}$ into the enclave $i$,  and securely seals      $ x_{1,i},\cdots,x_{v, i}, {\cal{C}}_{i} $ in the UnS.  After that,   TPU $i$ release    enclaves $i$ for all the $i = 1,\cdots, {\cal{P}}$.

    \textit{3) Secure  Computation Phase:}     The goal of the phase is to achieve the secure computation among the TPUs according to the user-defined program ${\cal{C}}_{i}$. Thus, it works as follows:
       \begin{itemize}
\item (3-\uppercase\expandafter{\romannumeral1}) Each TPU $i$ generates an enclave $i$. After that, all the TPUs build a secure channel with each other.     Load  sealed   data  $ x_{1,i},\cdots,x_{v,i}, pk, sk_i,  {\cal{C}}_{i}$  to enclave $i$ from UnS,  and denote them as $S_i$.
\item (3-\uppercase\expandafter{\romannumeral2})  TPUs jointly compute  
        $( y_{1,1},\cdots,y_{w,1}:\cdots: y_{1,{\cal{P}}},\cdots,y_{w,{\cal{P}}} ) \leftarrow \texttt{GenCpt} (S_1:\cdots:S_{\cal{P}})$
        according to the user-defined  program ${\cal{C}}_1,\cdots,{\cal{C}}_{\cal{P}}$. \footnote{The construction of  General Secure Function Computation Algorithm (GenCpt) can be found in section \ref{Sec:GenCpt}.}
        
  \item  (3-\uppercase\expandafter{\romannumeral3})      All the TPUs   jointly  update the private key shares and data shares dynamically. 
\end{itemize}
    
     After  the above computation, the TPU $i$ seals $y_{1,i},\cdots,y_{w,i}$ into the UnS, and releases the enclave.

     
    \textit{4) Data Retrive  Phase:}
If the RU needs to retrieve the computation results from the cloud, the  TPU $i$ creates an enclave $i$, opens the sealed data $y_{1,i},\cdots,y_{w,i} $,  builds a secure channel with the RU, and sends the data shares back to RU. 
   Once all the shares are sends to   RU, the RU computes $y_{j} = \sum_{i=1}^{\cal{P}} y_{j,i}$ for $j = 1,\cdots,w$.

     \subsection{The LightCom Design for   Combination of the Functions}
      \label{OCSGX:Double}
   Our LightCom can support for single data outsourced with  multiple function  operations. The procedure is as follows:
   
   \textit{1) System Initialize Phase:} Same to the LightCom with single function in Section \ref{OCSGX:single}.
   
     \textit{2) Data Upload Phase:}  
     After the system initialize phase, the  RU  
    defines  the program ${\cal{C}}_{i,t}$ for TPU $i$ $(i = 1,\cdots, {\cal{P}})$ with function computation step $t$ ($t = 1,\cdots, \zeta$) and  randomly separates  the data  
   $x_{j,1,1},\cdots,x_{j,1,{\cal{P}}}$, such that $x_{j,1,1}+ \cdots +x_{j,1,{\cal{P}}} = x_{j}$ for $j = 1,\cdots, v$ \footnote{Data share $x_{j,t,i} $ is for TPU enclave $i$ for data $j$ of function computation step-$t$.}.
After that, 
       the RU    builds a secure channel with TPU enclave  $i$, 
        remotely loads ${\cal{C}}_{1,i},\cdots,{\cal{C}}_{\zeta,i}$,  $x_{1,1,i},\cdots,x_{v,1,i}$ into the enclave $i$,  and securely seals     these data in the UnS.  After that,   TPU $i$ release    enclaves $i$ for all the $i = 1,\cdots, {\cal{P}}$.
      
            \textit{3) Secure  Computation Phase:}  The goal of the phase is to achieve the secure computation among the TPUs according to the user-defined program ${\cal{C}}_{t,i}$ for function $t$ ($t = 1,\cdots, \zeta$). Thus,  for each step    $t$, the   phase works as follows:
   
       \begin{itemize}
\item (3-\uppercase\expandafter{\romannumeral1}) Each TPU $i$ generates an enclave $i$. After that, all the TPUs build a secure channel with each other.     Load  sealed   data  $x_{1,t,i},\cdots,x_{v,t,i}, pk, sk_i, $ ${\cal{C}}_{1,i},\cdots,{\cal{C}}_{\zeta,i}$  to enclave $i$ from UnS,  and    put   them in a set ${\cal{E}}_{t,i}$.

\item (3-\uppercase\expandafter{\romannumeral2})  TPUs jointly compute  
        $( y_{1,t,1},\cdots,y_{w,t,1}:\cdots: y_{1,t,n},\cdots,y_{w,t,{\cal{P}}} ) \leftarrow \texttt{GenCpt} ({\cal{E}}_{t,i}:\cdots:{\cal{E}}_{t,i})$, 
        according to the user-defined  program ${\cal{C}}_{1,i},\cdots,{\cal{C}}_{\zeta,i}$.
        
  \item  (3-\uppercase\expandafter{\romannumeral3})      All the TPUs   jointly  update the private key and data shares. If $t =  \zeta$, the TPU $i$ seals $y_{1,\zeta,i},\cdots,y_{w,\zeta,i}$  into the UnS,   release the enclave. Otherwise, move to  (3-\uppercase\expandafter{\romannumeral4}) for further computation.

\item (3-\uppercase\expandafter{\romannumeral4}) Select  $x_{1,t+1,i},\cdots,x_{v,t+1,i}$ from the  $y_{1,t,i},\cdots,y_{w,t,i}$ for TPU $i$.  Then, the TPU $i$ seals $x_{1,t+1,i},\cdots,x_{v,t+1,i}$  into the UnS,   release the enclave, and move to  (3-\uppercase\expandafter{\romannumeral1}) for next step computation.    \end{itemize} 
      After the $t$ step is finished, the TPU $i$ seals the set ${\cal{E}}_j$   into the UnS, and releases the corresponding enclave.
         
             \textit{4) Data Retrieve  Phase: } 
                After the computation, TPU $i$ new an enclave $i$, opens the sealed  data $y_{1,\zeta,i},\cdots,y_{w,\zeta,i}$,  builds a secure channel with the RU, and sends these data back to  the RU. 
   Once all the TPU's data are sent, the RU computes the result   $y_{j,\zeta} = \sum_{i=1}^{\cal{P}} y_{j,\zeta,i}$ for step $\zeta$ ($j = 1,\cdots,w$) to get the final results.

        \subsection{General Secure Function Computation Algorithm  (GenCpt)}   
        \label{Sec:GenCpt}
      As the key component of the LightCom, the General Secure Function Computation Algorithm  (\texttt{GenCpt}) are proposed to achieve the  secure deterministic multiple-output multivariable function
     $\cal{F}$ computation which is introduced  in definition 1. Assume TPU $i$  $(i = 1,\cdots, {\cal{P}})$ holds $x_{1,i},\cdots, x_{v,i}$, \texttt{GenCpt} can securely output 
       $y_{1,i},\cdots, y_{w,i}$ for each TPU $i$, such that $(y_1,\cdots, y_w) \leftarrow {\cal{F}}(x_1,\cdots,x_v)$, where $x_{j,1}+\cdots + x_{j,{\cal{P}}} = x_j$ and $y_{k,1}+\cdots + y_{k, {\cal{P}}} = y_k$ for $j = 1,\cdots, v; k = 1,\cdots, w$. The \texttt{GenCpt} can be classified into offline/online stages and constructed as follows:
        
    \textbf{Offline Stage}:  Each TPU $i$ ($i  = 1,\cdots,  {\cal{P}}$) creates an
enclave $i$, loads the sealed   keys $pk, sk_i$ and program ${\cal{C}}_i$ into the enclave from the UnS, builds a secure channel with the other TPUs.\footnote{As  offline stage of   the secure computations needs to do TPU enclave initialization,  we just omit the description in the rest of the section.}   
    With the help of homomorphic public key cryptosystem, all the  TPUs    collaboratively 
    generate the shares of random numbers   and put them into a set ${\cal{R}}_i$. Note the     shares in set   ${\cal{R}}_i$ cannot be known by all the other TPUs during the generation. 
 After the above  computation, each TPU $i$ seals the ${\cal{R}}_i$ into the UnS, respectively.

     \textbf{Online Stage}\footnote{The input data $ x_{1,i},\cdots,x_{v,i},$,  public key $pk$,  private key shares $sk_i$,  and the program  ${\cal{C}}_{i}$ are loaded in the step of (3-I) of both section \ref{OCSGX:single}  and \ref{OCSGX:Double}}.  
   For each TPU $i$ ($i  = 1,\cdots, {\cal{P}}$), loads the sealed   random numbers set ${\cal{R}}_i$ from offline stage into the enclave $i$. 
 All the TPUs cooperatively compute and output the results 
 \begin{center}
 $(y_{1,i},\cdots, y_{w,i}) \leftarrow f_i(x_{1,i},\cdots,x_{v,i}, {\cal{R}}_i),$ \end{center} where $f_i$ is the  combination of $+, \times$ for $\mathbb{Z}_N$ and $\oplus, \land$ for $\mathbb{Z}_2$ with specific functionality according to the   program ${\cal{C}}_i$.


     \subsection{Private Key Share Update against Side-Channel Attack}
   \label{sec:PKSU}
 The private key shares are more sensitive and vulnerable, as the adversary can use the private key  to decrypt the RU's data in the untrusted storage if all shares of the private key are leaked by side channel attack.  Thus, we should frequently update the key shares in the TPU enclave. The intuitive idea is to let the  RU   choose a new private key, separate the new private key into different key shares,   update these key shares in the different individual enclaves, and update all the ciphertext with the new key.  However, the above strategy has the main drawback:  the RU has to be involved in the private/public key update phase which brings extra computation and communication cost. Thus, in this case,   the RU needs to generate and update the public/private keys frequently which is impractical. Thus, we bring the idea of proactive secret sharing into the LightCom: keeps the public/private key unchanged, the TPU will periodicity refresh the key shares without the participation of the RU. 
Mathematically,     to renew the shares at period $t$ $(t = 0, 1,2,\cdots)$,  we need to update the shares  such that $  \sum^{\cal{P}}_{i=1} sk_i^{(t+1)}=     \sum^{\cal{P}}_{i=1}  sk_i^{(t)} +  \sum^{\cal{P}}_{i=1}   \sum^{\cal{P}}_{j=1}   \delta^{(t)}_{i,j}  $, where $ \sum_{j=1}^{{\cal{P}}} \delta_{i,j} =0  $, $\sum_{i=1}^{{\cal{P}}}  sk^{(0)}_i = sk $ and $sk^{(0)}_i= sk_i$ for $i = 1,\cdots, {\cal{P}}$ (See Fig. \ref{fig:KSU} for example of private key update procedure with ${\cal{P}}=3$).  The special  construction is as follows:
 
 \begin{figure}[htbp]
\centering
\includegraphics[width=0.4\textwidth]{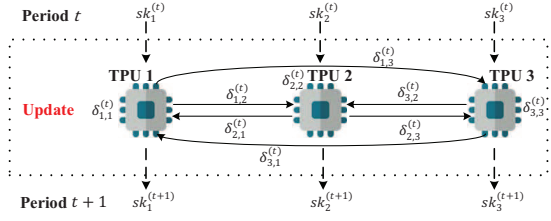}
\caption{Key Shares Update (example of ${\cal{P}} =3 $)}
\label{fig:KSU}
\end{figure}

1) Each TPU $i (i = 1,\cdots, {\cal{P}})$ creates an enclave $i$. After that,  TPU $i$ the  builds a secure channel with TPU $j$'s enclave  $ (j = 1,\cdots, {\cal{P}}; j \neq i)$.  

2) TPU $i$ picks random numbers $\delta_{i,1},\cdots,\delta_{i,{\cal{P}}} \in \mathbb{G}$ such that $\delta_{i,1}+\cdots+\delta_{i, {\cal{P}}} = 0$ under the group $\mathbb{G}$, and then  sends $\delta_{i,j}  $ to TPU  enclave $j$.

 3)  After received  $\delta_{j,i}$, TPU   $i$ 
   computes the new shares $sk^{(t+1)}_{i} \leftarrow sk^{(t)}_{i} + \delta^{(t)}_{1,i}+ \delta^{(t)}_{2,i} + \cdots+\delta^{(t)}_{{\cal{P}},i} \in \mathbb{G}$. After that, TPU $i$ 
 erases all the variables which  it used, except for its
current secret key $sk^{(t+1)}_{i} $.

     \subsection{Data Shares Update against Side-Channel Attack}
   
   As data shares need to load to TPU for processing, the shares can be leaked to the adversary  
    by side channel attack, and reconstruct the RU's original data.   Thus, we also need to dynamically update data shares $x^{(t)}_1,\cdots, x^{(t)}_{\cal{P}}$ at period $t$ $(t = 0,1,2,\cdots)$,   such that $  \sum^{\cal{P}}_{i=1} x_i^{(t+1)}=     \sum^{\cal{P}}_{i=1} x_i^{(t)} +  \sum^{\cal{P}}_{i=1}  \sum^{\cal{P}}_{j=1}  \delta_{i,j}$, where  $\sum^{\cal{P}}_{i=1} x^{(0)}_i = x$, $x^{(0)}_i= x_i$, and $ \sum^{\cal{P}}_{j=1} \delta_{i,j} =0$ for $i = 1,\cdots, {\cal{P}}$.  The construction is same to the private key share update method in section   \ref{sec:PKSU}.

   \section{TPU-based Basic Data Shares Operations}
   In this section, we introduce some basic TPU-based  data shares operations which can be used as the basis of LightCom.  
   
      \subsection{Data Domain and  Storage Format}
   
   Here, we introduce three  the data group domain for LightCom:  $\mathbb{Z}_N = \{0, 1, \cdots,N-1\}$,      $\mathbb{D}_N = \{-\lfloor \frac{N}{2}\rfloor, \cdots,0,\cdots, \lfloor \frac{N}{2}\rfloor)$, and  $\mathbb{Z}_2 = \{0,1\}$. As we use PCDD for offline processing and its plaintext domain is $\mathbb{Z}_N$, we define the   operation $\lceil x \rfloor_N$ which   transforms data $x $ from group $\mathbb{Z}_N$ into the   group $\mathbb{D}_N$, i.e., 
  \begin{equation}
\lceil x \rfloor_N \leftarrow \left\{
\begin{array}{lr}
  x,    & 0 \leq x<N/2 \\
x-N,    &N/2 \leq x<N. \\
\end{array}
\right. \notag
\end{equation}
  Moreover, the data $\lceil x \rfloor_N$ in group $\mathbb{D}_N$ can be directly transformed into group $\mathbb{Z}_N$ with $x = \lceil x \rfloor_N \mod N$. It can be easily verified that group $\mathbb{D}_N$ and  $\mathbb{Z}_N$ are \textit{isomorphism}. 
  
  To guarantee the security of secret sharing, two types of data shares are used in the LightCom, called integer share (belonged to $\mathbb{Z}_N$) and binary share (belonged to $\mathbb{Z}_2$). For the integer share separation, RU only needs to execute $\texttt{Share}(m)$, such that $m = m_1+\cdots+m_{ {\cal{P}}}$, where $m, m_1,\cdots,m_{ {\cal{P}}} \in \mathbb{D}_N$. For the binary shares, RU    executes $\texttt{Share}(\mathfrak{m})$, such that $\mathfrak{m} = \mathfrak{m}_1+\cdots+ \mathfrak{m}_{ {\cal{P}}} $, where $\mathfrak{m}, \mathfrak{m}_1,\cdots, \mathfrak{m}_{ {\cal{P}}} \in \mathbb{Z}_2.$  After that, RU securely sends integer share $m_i$ or binary shares $\mathfrak{m}_i$ to TPU $i$, and seals to UnS for securely storage.

      \subsection{System Initial and Key Distribution}

The LightCom system should be initialized before  achieving the secure computation. Firstly,  
  the RU executes \texttt{KeyGen} algorithm,  output  public key $pk=(N,g)$ and  private key is $sk=\theta$. Then, use \texttt{KeyS} to split key $\theta$ into $\cal{P}$ shares $sk_i = \theta_i$ $(i = 1,\cdots, { {\cal{P}}})$. After that, 
  for each TPU $i$ in the cloud,    it initials an enclave $i$,  builds a secure channel, and 
   uploads the $sk_i$ to the     enclave $i$ securely. Beside, the RU's  PCDD public key  $pk$ and   program ${\cal{C}}_{i}$ for the specific function $\cal{F}$ are needed to securely send to TPU $i$ $(i = 1,\cdots, { {\cal{P}}})$.
   Finally, the TPU $i$ securely seals the data     $pk$, $sk_i$, ${\cal{C}}_{i}$ into UnS.  
   As all the parameters need to load   to the TPU enclaves along with the data shares according the specific functionality, we will not specially  describe it in the rest of the section.



      \subsection{Secure Distributed Decryption Algorithm (\texttt{SDD})}
    Before constructing the  TPU-based operation, we need first to construct the algorithm called  Secure Distributed Decryption   (\texttt{SDD}) which allows all the TPUs decrypt PCDD's ciphertext. Mathematically, 
    if enclave in TPU $\chi$ contains the encryption $[\![x]\!]$, the goal of \texttt{SDD} is to output $x$ which contains following steps:  
   1) The TPU enclave $\chi$  establishes a secure channel with the other TPU
enclave $i ( i \neq \chi)$.  Then, enclave $\chi$
  sends  $ [\![x]\!]  $ to all the other enclave $i$. 
 2)   Once received $ [\![x]\!] $, the TPU $i$ uses
     \texttt{PDec} algorithm to get $CT_i$  and securely  send  $CT_i$ to enclave $\chi$. 3) Finally, the  TPU $\chi$  securely     uses    $CT_\chi$ 
      with \texttt{TDec} algorithm to get $x$.

      \subsection{Secure TPU-based Data Seal \& UnSeal}
      \label{STPUDATA}
    As TPU enclaves are only provide an isolated  computing   environment during the secure processing, the data in the TPU enclave needs to seal to UnS for long-term storage. Thus, we propose two algorithms called \texttt{Seal} and \texttt{UnSeal} to achieve. 
 
 $\texttt{Seal}(x_i):$ The   TPU $i$ encrypts the data share $x$ into $[\![x_i]\!]$, then uses hash function $H: \{0,1\}^* \rightarrow \mathbb{Z}_N$ with input   the $[\![x_i]\!]$ associated  with TPU $t$-time period private key share $sk^{(t)}_i$ to compute 
 $S_{t,i} \leftarrow H([\![x_i]\!] ||sk^{(t)}_i||ID_i||t)$, where $ID_i$ is the transaction identity  for $[\![x_i]\!]$. Then, TPU $i$ sends $[\![x_i]\!]$ with $S_{t,i}$ to UnS for storage. 
    
     $\texttt{UnSeal}([\![x]\!], S_{t,i}):$ The   TPU $i$ loads $[\![x_i]\!]$ with $S_{t,i}$ to the enclave $i$, and computes $H([\![x]\!] ||sk^{(t)}_i||ID_i||t)$ to test whether the result is equal to $S_{t,i}$. If the equation does not  holds, the algorithm stops and outputs $\perp$. Otherwise, the TPU $i$ uses \texttt{SDD} to get the share  $x_i$.

  \subsection{Random  Shares Generation}
The secret sharing based privacy computation requires one-time random numbers for processing. Before constructing the TPU-based computation, we design a protocol called       Random Tuple  Generation Protocol (\texttt{RTG}).     
      The goal of \texttt{RTG} is to let TPUs  cooperatively generate random tuple $ \mathfrak{r}_i^{(1)},\cdots, \mathfrak{r}_i^{(\ell)} \in \mathbb{Z}_2$ and $r_i \in \mathbb{D}_N$  for each TPU    $i$ $(i = 1,\cdots, {\cal{P}})$, such that $r = -\mathfrak{r}^{(\ell)}2^{\ell-1} + \sum_{j =1}^{\ell-1} \mathfrak{r}^{(j)} 2^{j-1}$  and $\mathfrak{r}^{(j)} =  \mathfrak{r}_1^{(j)} \oplus \cdots \oplus \mathfrak{r}_{\cal{P}}^{(j)}$ and $r = r_1 +\cdots+ r_{\cal{P}}$ holds, where $\ell$ is the bit-length of random number $r \in \mathbb{D}_N$. The \texttt{RTG} generates as follows:
     
1)  The TPU 1 randomly generates $\mathfrak{r}_1^{(1)},\cdots, \mathfrak{r}_1^{(\ell)} \in \mathbb{Z}_2$,   encrypts them as $[\![\mathfrak{r}^{(1)}_1]\!],\cdots,[\![\mathfrak{r}^{(\ell)}_1]\!]$,    denotes them as  $[\![\mathfrak{r}^{(1)}]\!],\cdots,[\![\mathfrak{r}^{(\ell)}]\!]$, and sends these ciphertexts to TPU $2$. 
 
2) The TPU $i$  ($i=2,\cdots, {\cal{P}}$) generates $\mathfrak{r}_i^{(1)},\cdots, \mathfrak{r}_i^{(\ell)} \in \mathbb{Z}_2$  and computes 
\begin{center}
$[\![\mathfrak{r}^{(j)}]\!] \leftarrow [\![\mathfrak{r}^{(j)}]\!]^ {(1- \mathfrak{r}_i^{(j)})} \cdot  ([\![1]\!] \cdot [\![\mathfrak{r}^{(j)}]\!]^{N-1})^{\mathfrak{r}_i^{(j)}} =  [\![\mathfrak{r}^{(j)} \oplus \mathfrak{r}_i^{(j)}]\!]$. 
\end{center}
    If $i \neq {\cal{P}}$, the TPU $i$
     sends $[\![\mathfrak{r}^{(1)}]\!],\cdots,[\![\mathfrak{r}^{(\ell)}]\!]$ to  TPU $i+1$. 
     If $i = {\cal{P}}$, the TPU $\cal{P}$ computes 
     \begin{center}
   $ [\![r]\!] \leftarrow  [\![\mathfrak{r}^{(\ell)}]\!]^{N-2^{\ell-1}} \cdot  [\![\mathfrak{r}^{(\ell-1)}]\!]^{2^{\ell-2}} \cdot \cdots \cdot  [\![\mathfrak{r}^{(1)}]\!].$
          \end{center}
3)   
    For   TPU $i$ $(i =  {\cal{P}},\cdots, 2)$,   randomly generates      $r_i \in \mathbb{D}_N$  and computes  $ [\![r]\!]  \leftarrow  [\![r]\!]  \cdot [\![-r_i]\!],$ and sends $ [\![r]\!]$ to TPU $i-1$. 
    Once TPU $1$ gets  $[\![r]\!]$, uses \texttt{SDD} to get $r$, and denotes $ \lceil r \rfloor_N $ as $r_1$. After computation,  each TPU $i$ $(i = 1,\cdots,  {\cal{P}})$ holds randomly bits $\mathfrak{r}_i^{(1)},\cdots, \mathfrak{r}_i^{(\ell)}\in \mathbb{Z}_2$ and integer  $r_i \in \mathbb{D}_N$.

   \subsection{Share Domain Transformation}

  \subsubsection{Binary  Share to Integer Share Transformation  (\texttt{B2I}) }
 
    Suppose TPU $i$ hold  a bit share $\mathfrak{a}_i \in \mathbb{Z}_2$, where $\mathfrak{a}_1  \oplus \cdots \oplus \mathfrak{a}_{\cal{P}} =  \mathfrak{s} \in   \mathbb{Z}_2$, the   goal of the protocol is to generate  a random integer share $  {b}_i  \in \mathbb{Z}_N$ for each TPU $i$, such that $  {b}_1  + \cdots +  {b}_{\cal{P}} =  \mathfrak{s} $.  To execute \texttt{B2I},  the TPU 1 randomly generates $ {b}_1  \in \mathbb{Z}_N$, denotes  ${x} =   {b}_1$ and
   $ \mathfrak{s} =   \mathfrak{a}_1 $,   encrypts $ {x}$  as $[\![  {x}]\!] $,   $\mathfrak{s}$ as $[\![ \mathfrak{s}]\!] $, and sends $[\![ {x}]\!]$  and  $[\![\mathfrak{s}]\!] $ to TPU $2$.
   After that, the TPU $i$ $(i=2,\cdots, {\cal{P}}-1)$
    generates $ {b}_i  \in \mathbb{Z}_N$  and computes 
    \begin{center}
$[\![\mathfrak{s}]\!] \leftarrow [\![\mathfrak{s}]\!]^ {(1- \mathfrak{a}_i)} \cdot  ([\![1]\!] \cdot [\![\mathfrak{s} ]\!]^{N-1})^{\mathfrak{a}_i } =  [\![\mathfrak{s}\oplus \mathfrak{a}_i]\!]$, 
$[\![x]\!] \leftarrow [\![x]\!] \cdot [\![b_i]\!] $,
\end{center} 
   and sends  $[\![ x]\!] , [\![ \mathfrak{s}]\!] $  to TPU $i+1$.
Once received the $[\![ x]\!], [\![ \mathfrak{s}]\!]$, TPU ${\cal{P}}$ computes 
\begin{center}
$[\![\mathfrak{s}]\!] \leftarrow [\![\mathfrak{s}]\!]^ {(1- \mathfrak{a}_{\cal{P}})} \cdot  ([\![1]\!] \cdot [\![\mathfrak{s} ]\!]^{N-1})^{\mathfrak{a}_{\cal{P}}} =  [\![\mathfrak{s}\oplus \mathfrak{a}_{\cal{P}}]\!]$, 
$[\![ b_{\cal{P}}]\!] \leftarrow [\![\mathfrak{s}]\!]  \cdot [\![x]\!]^{N-1}$, 
\end{center}
and uses the \texttt{SDD} to decrypt and gets $ {b_{\cal{P}}}$.

  \subsubsection{Integer Share to Binary  Share Transformation  (\texttt{I2B}) }
 
    Suppose TPU $i$ hold  an integer share $ {a}_i \in \mathbb{Z}_N$, where $ {a}_1  + \cdots +  {a}_{\cal{P}} =  \mathfrak{s} \in   \mathbb{Z}_2$, the   goal of the \texttt{I2B} protocol is to generate  a random bit share $\mathfrak{b}_i  \in \mathbb{Z}_2$ for each TPU $i$, such that $  \mathfrak{b}_1  \oplus \cdots \oplus  \mathfrak{b}_{\cal{P}} =    \mathfrak{s}$.  To execute  \texttt{I2B},  the TPU 1 lets $  {y} =    a_1 $,  encrypts $ {y}$ as $[\![  {y}]\!] $, and sends the ciphertext to TPU 2 for computation.  After that, the TPU $i$ $(i=2,\cdots, {\cal{P}})$
    uses the share to compute $[\![y]\!] \leftarrow [\![y]\!] \cdot [\![a_i]\!] $. If  $i \neq {\cal{P}}$, TPU $i $ sends $[\![y]\!]$ to TPU $i+1$.      After that, denote $[\![\mathfrak{s} ]\!]\leftarrow [\![ y ]\!]$, and each TPU $i$ $(i={\cal{P}},\cdots, 2)$  generates $\mathfrak{b}_i  \in \mathbb{Z}_2$  and computes 
      \begin{center} 
$[\![\mathfrak{s}]\!] \leftarrow [\![\mathfrak{s}]\!]^ {(1- \mathfrak{b}_i)} \cdot  ([\![1]\!] \cdot [\![\mathfrak{s}]\!]^{N-1})^{\mathfrak{b}_i } =  [\![\mathfrak{s} \oplus \mathfrak{b}_i]\!]$, 
\end{center} 
    and sends $[\![\mathfrak{s}]\!]$ to TPU $i-1$. Once received $[\![\mathfrak{s}]\!]$, TPU 1  uses the \texttt{SDD} to decrypt  $[\![\mathfrak{s}]\!]$ and denotes the result $\mathfrak{s}$ as $ \mathfrak{b}_1$.

   \section{TPU-based Secure Outsourced  Computing Toolkits in the Cloud}
   \label{TPU:SOCT}
   In this section, we introduce and construct the commonly used  secure outsourced  binary and  integer   computation  sub-protocols for  a single cloud. 
 
      \subsection{Secure Computation over Binary Shares}

       \subsubsection{Secure Bit Multiplication   Protocol (\texttt{SBM})}

The   {\texttt{SBM}}    can achieve plaintext multiplication on bit shares and output are bit shares, i.e.,  given two shares $  \mathfrak{x}_i  ,    \mathfrak{y}_i   \in \mathbb{Z}_2$ $(i = 1,\cdots,{\cal{P}})$ for  TPU $i$ as input,  {\texttt{SBM}}   securely outputs $   \mathfrak{f}_i   \in \mathbb{Z}_2$ for TPU $i$, such that $\bigoplus_{i=1}^{\cal{P}}    \mathfrak{f}_i   = (\bigoplus_{i=1}^{\cal{P}}   \mathfrak{x}_i)   \land  (\bigoplus_{i=1}^{\cal{P}}   \mathfrak{y}_i ).$

\textbf{Offline Stage}:   All the TPUs  initialize  their enclaves and load the public parameters to UnS. For
 enclave  1,  generate $\mathfrak{a}_1, \mathfrak{b}_1 \in \mathbb{Z}_2$,  compute $\mathfrak{c} = \mathfrak{a}_1 \cdot \mathfrak{b}_1 \in \mathbb{Z}_2$. 
 Encrypt $[\![\mathfrak{a}_1]\!],[\![\mathfrak{b}_1]\!] $ and $[\![\mathfrak{c}]\!]$, and denote them as   $[\![\mathfrak{a}]\!],[\![\mathfrak{b}]\!], [\![\mathfrak{c}]\!]$, respectively. After that, TPU enclave $i$ $(i = 1,\cdots, {\cal{P}}-1)$ sends $[\![\mathfrak{a}]\!], [\![\mathfrak{b}]\!], [\![\mathfrak{c}]\!]$ to enclave $i+1$, TPU $i+1$ generates $\mathfrak{a}_{i+1}, \mathfrak{b}_{i+1}$ and compute 
 \begin{center}
 $[\![A]\!] \leftarrow [\![\mathfrak{a}]\!] \cdot [\![\mathfrak{c}]\!]^{N-1} = [\![\mathfrak{a}  \cdot (1-\mathfrak{b})]\!],$
    \end{center}
     \begin{center}
  $[\![B]\!] \leftarrow [\![\mathfrak{b}]\!] \cdot [\![\mathfrak{c}]\!]^{N-1} = [\![\mathfrak{b} \cdot (1-\mathfrak{a})]\!],$
      \end{center}
     \begin{center}
      $[\![C]\!] \leftarrow [\![1]\!] \cdot [\![\mathfrak{a}]\!]^{N-1}  \cdot [\![\mathfrak{b}]\!]^{N-1}  \cdot [\![\mathfrak{c}]\!] = [\![(1-\mathfrak{b})\cdot (1-\mathfrak{a})]\!],$
          \end{center}
     \begin{align}
      [\![\mathfrak{c}]\!] & \leftarrow [\![\mathfrak{c}]\!]^{(1-\mathfrak{a}_{i+1}) (1-\mathfrak{b}_{i+1}) } \cdot [\![C]\!]^{\mathfrak{a}_{i+1}\cdot \mathfrak{b}_{i+1}} \cdot  [\![A]\!]^{(1-\mathfrak{a}_{i+1}) \mathfrak{b}_{i+1}}  \notag  \\   &   \cdot  [\![B]\!]^{\mathfrak{a}_{i+1}(1-\mathfrak{b}_{i+1})}=  [\![(\mathfrak{a} \oplus \mathfrak{a}_{i+1}) \land (\mathfrak{b} \oplus \mathfrak{b}_{i+1}) ]\!]. \notag
              \end{align}
     \begin{center}
     $[\![\mathfrak{a}]\!] \leftarrow [\![\mathfrak{a}]\!]^ {(1- \mathfrak{a}_{i+1})} \cdot  ([\![1]\!] \cdot [\![\mathfrak{a}]\!]^{N-1})^{\mathfrak{a}_{i+1}} = [\![\mathfrak{a} \oplus \mathfrak{a}_{i+1}]\!], $     \end{center}
          \begin{center}
  $[\![\mathfrak{b}]\!] \leftarrow [\![\mathfrak{b}]\!]^ {(1- \mathfrak{b}_{i+1})} \cdot  ([\![1]\!] \cdot [\![\mathfrak{b}]\!]^{N-1})^{\mathfrak{b}_{i+1}} =  [\![\mathfrak{b} \oplus \mathfrak{b}_{i+1}]\!].$
   \end{center}

  After the above computation,       enclave $i $  $(i = {\cal{P}},\cdots, 2)$ randomly generates $\mathfrak{c}_{i} \in \mathbb{Z}_N$ and computes $[\![\mathfrak{c}]\!] \leftarrow [\![\mathfrak{c}]\!]^ {(1- \mathfrak{c}_{i})} \cdot  ([\![1]\!] \cdot [\![\mathfrak{c}]\!]^{N-1})^{\mathfrak{c}_{i}} = [\![\mathfrak{c} \oplus \mathfrak{c}_{i}]\!]$. 
When the TPU $2$ sends   $[\![\mathfrak{c}]\!]$ to TPU 1, 
 the TPU $1$ uses \texttt{SDD} to get $\mathfrak{c}$ and denotes as $\mathfrak{c}_1$. 
 After the above computation, each enclave holds $\mathfrak{a}_i, \mathfrak{b}_i, \mathfrak{c}_i$,  which satisfies  $\mathfrak{a}_1\oplus \cdots \oplus \mathfrak{a}_{\cal{P}} = \mathfrak{a}$,  $\mathfrak{b}_1 \oplus \cdots \oplus \mathfrak{b}_{\cal{P}} = \mathfrak{b}$, $\mathfrak{c}_1 \oplus \cdots \oplus \mathfrak{c}_{\cal{P}} = \mathfrak{c}$ and $\mathfrak{c} = \mathfrak{a} \land \mathfrak{b}$. Finally, each TPU $i$ seals $\mathfrak{a}_i, \mathfrak{b}_i,\mathfrak{c}_i$ to UnS for storage individually.


    
    \textbf{Online Stage:}  For each TPU   $i$ $(i = 1,\cdots, {\cal{P}})$, load the   $\mathfrak{a}_i, \mathfrak{b}_i, \mathfrak{c}_i$ into the enclave $i$. Then, 
compute $X_i =  \mathfrak{x}_i \oplus \mathfrak{a}_i  $ and $Y_i =  \mathfrak{y}_i  \oplus \mathfrak{b}_i$. Securely send $X_i$ and $Y_i$ to other enclave $j$ $(j = 1,\cdots,{\cal{P}}; j \neq i)$.
After receiving other $X_j$ and $Y_j$,  each TPUs  computes  $X = \bigoplus_{i=1}^{{\cal{P}}}X_j$ and    $Y = \bigoplus_{i=1}^{{\cal{P}}} Y_j$.
For TPU $i$ $(i = 1,\cdots, {\cal{P}}-1)$, compute $\mathfrak{f}_i\leftarrow \mathfrak{c}_i \oplus (\mathfrak{b}_i \land X) \oplus (\mathfrak{a}_i \land Y)$. 
Then,  TPU ${\cal{P}}$  
computes $\mathfrak{f}_{\cal{P}} \leftarrow  \mathfrak{c}_{\cal{P}} \oplus(\mathfrak{b}_{\cal{P}} \land X) \oplus (\mathfrak{a}_{\cal{P}}  \land Y) \oplus (X \land Y)$.   Here, we denote the protocol as 
    $\langle  \mathfrak{f} \rangle \leftarrow   \texttt{SBM}(\langle  \mathfrak{x} \rangle,\langle \mathfrak{y}\rangle )$.   
    
     \subsubsection{Secure Bit-wise Addition Protocol (\texttt{BAdd})}
  The  \texttt{BAdd} describes as follows: 
  the TPU $i$ holds bit shares $\mathfrak{a}_i^{(\ell)},\cdots, \mathfrak{a}_i^{(1)}$ of $\ell$ bit-length integer $a$ and $\mathfrak{r}_i^{(\ell)},\cdots, \mathfrak{r}_i^{(1)}$ of $\ell$ bit-length integer $r$. The goal is to compute $\mathfrak{y}_i^{(\ell)},\cdots, \mathfrak{y}_i^{(1)}$, such that $y = a+r$, where $ {y} = - \mathfrak{y}^{(\ell)} 2^{\ell-1} + \sum_{j=1}^{\ell-1} \mathfrak{y}^{(j)} 2^{j-1} $,  $\mathfrak{a}^{(j)} = \mathfrak{a}^{(j)}_1 \oplus \mathfrak{a}^{(j)}_2 \oplus \cdots \oplus \mathfrak{a}^{(j)}_{\cal{P}}$, $\mathfrak{y}^{(j)} = \mathfrak{y}^{(j)}_1 \oplus \mathfrak{y}^{(j)}_2 \oplus \cdots \oplus \mathfrak{y}^{(j)}_{\cal{P}}$ and  $\mathfrak{r}^{(j)}  = \mathfrak{r}^{(j)}_1 \oplus \mathfrak{r}^{(j)}_2 \oplus \cdots \oplus \mathfrak{r}^{(j)}_{\cal{P}}$. The idea is easy and simple:  use the binary addition circuit to achieve the addition, i.e,  compute  the integer addition as  $\mathfrak{y}^{(j)} = \mathfrak{a}^{(j)} \oplus \mathfrak{r}^{(j)} \oplus \mathfrak{c}^{(j)}$ and $\mathfrak{c}^{(j+1)} = (\mathfrak{a}^{(j)} \land \mathfrak{r}^{(j)} ) \oplus ((\mathfrak{a}^{(j)} \oplus \mathfrak{r}^{(j)}) \land \mathfrak{c}^{(j)})$ for $j = 1,\cdots,\ell.$
    The procedure of \texttt{BAdd} works as follows:
    
   1)   For each TPU $i$ $(i = 1,\cdots, {\cal{P}})$ and each bit position $j = 1,\cdots, \ell$, all the TPUs jointly  compute 
   $\mathfrak{d}_i^{(j)} \leftarrow \mathfrak{a}_i^{(j)} \oplus \mathfrak{r}_i^{(j)}$ and
  $ \langle \mathfrak{e}^{(j)}  \rangle \leftarrow  \texttt{SBM}(\langle \mathfrak{a}^{(j)} \rangle, \langle \mathfrak{r}^{(j)} \rangle)$.   After using  the computation of \texttt{SBM}, it indeed computes $\mathfrak{e}^{(j)} = \mathfrak{a}^{(j)} \land \mathfrak{r}^{(j)}.$

2)  Each TPU $i $  sets $\mathfrak{c}_i^{(1)} \leftarrow 0$ and $\mathfrak{y}_i^{(1)} \leftarrow \mathfrak{d}_i^{(1)}$.  Then,    for $j = 2,\cdots, \ell$, all TPUs jointly computes 
  \begin{center} 
 $\langle \mathfrak{d}^{(j-1)} \rangle \leftarrow  \texttt{SBM}(\langle \mathfrak{d}^{(j-1)} \rangle, \langle \mathfrak{c}^{(j-1)} \rangle)$.
  \end{center} 
   Moreover, for each TPU $i$  locally computes
  \begin{center}
 $\mathfrak{c}_i^{(j)} \leftarrow  \mathfrak{d}_i^{(j-1)}  \oplus \mathfrak{e}_i^{(j-1)}$ and 
 $\mathfrak{y}_i^{(j)} \leftarrow    \mathfrak{d}_i^{(j)}  \oplus \mathfrak{c}_i^{(j)}$.    
 \end{center}
 and outputs  $\mathfrak{y}_i^{(j)} $ for all $j$.

      \subsubsection{Secure Bit Extraction  Protocol (\texttt{BExt})}
  
  Suppose  TPU  $i$  $(i=1,\cdots,{\cal{P}})$ contains an integer share $ {u}_i$, where  $u = \sum_{i=1}^{{\cal{P}}} u_i$.
  The goal of \texttt{BExt} is to output the bit extraction  shares  $\mathfrak{u}^{(\ell)}_i,\cdots, \mathfrak{u}^{(1)}_i$ for each TPU $i$  ($i = 1,\cdots, {\cal{P}}$), where $ {u} = - \mathfrak{u}^{(\ell)} 2^{\ell-1} + \sum_{j=1}^{\ell-1} \mathfrak{u}^{(j)} 2^{j-1} $  and $\mathfrak{u}^{(j)} = \bigoplus_{i=1}^{{\cal{P}}} \mathfrak{u}^{(j)}_i$. The \texttt{BExt} also contains offline/online phase which describes as follows:

  \textbf{Offline Phase:} Execute \texttt{RTG} to get $\mathfrak{r}_i^{(\ell)},\cdots, \mathfrak{r}_i^{(1)}$ and $r_i$ for party $i$.
  Then,  all the TPUs need to jointly compute $\mathfrak{a}^{(\ell)},\cdots, \mathfrak{a}^{(1)} \in \mathbb{Z}_2$, such that $\mathfrak{a}^{(\ell)} \oplus \cdots \oplus \mathfrak{a}^{(1)} = 0$.
  Firstly, 
  TPU 1 randomly generates $\mathfrak{a}_1^{(\ell)},\cdots, \mathfrak{a}_1^{(1)} \in \mathbb{Z}_2$ and let
   $\mathfrak{t}^{(j)} = \mathfrak{a}_1^{(j)} $ for $j = 1,\cdots,{\cal{P}}$. After that, the TPU $i$
    generates $\mathfrak{a}_i^{(\ell)},\cdots, \mathfrak{a}_i^{(1)} \in \mathbb{Z}_2$, computes 
    \begin{center}
$[\![\mathfrak{t}^{(j)}]\!] \leftarrow [\![\mathfrak{t}^{(j)}]\!]^ {(1- \mathfrak{a}_i^{(j)})} \cdot  ([\![1]\!] \cdot [\![\mathfrak{t}^{(j)}]\!]^{N-1})^{\mathfrak{a}_i^{(j)}} =  [\![\mathfrak{t}^{(j)} \oplus \mathfrak{a}_i^{(j)}]\!]$, 
\end{center}
    and sends these ciphertexts to TPU $i+1$.
Once  the $[\![\mathfrak{t}^{(\ell-1)}]\!],\cdots, [\![\mathfrak{t}^{(0)}]\!]  $ are received, the TPU ${\cal{P}}$   uses the \texttt{SDD} to decrypt, gets $\mathfrak{t}^{(\ell)},\cdots, \mathfrak{t}^{(1)}$ and denotes them as $\mathfrak{a}_{\cal{P}}^{(\ell)},\cdots, \mathfrak{a}_{\cal{P}}^{(1)}$.  After that, each TPU $i$ seals $\mathfrak{r}_i^{(\ell)},\cdots, \mathfrak{r}_i^{(1)}$ and $r_i$ and $\mathfrak{a}_i^{(\ell)},\cdots, \mathfrak{a}_i^{(1)}$ in UnS, respectively.

 \textbf{Online Phase:} The TPU  $i$ computes $v_i = u_i-r_i,$ encrypts $v_i$ and  sends $[\![v_i]\!]$ to TPU ${\cal{P}}$. After received all the encryptions, the TPU ${\cal{P}}$
  computes $[\![v]\!]  \leftarrow \prod_{i=1}^n [\![v_i]\!]$ and executes \texttt{SDD} to get the 
   $v$, and computes $\lceil v \rfloor_N$.  Then,  TPU ${\cal{P}}$ generates its two’s complement binary representation $\mathfrak{v}^{(\ell-1)},\cdots, \mathfrak{v}^{(0)},$
  and computes $\mathfrak{v}^{(j)}_{\cal{P}} \leftarrow \mathfrak{v}^{(j)} \oplus \mathfrak{a}_{\cal{P}}^{(j)}$, where $j = 1,\cdots, \ell$. Other TPU $i$ $(i = 1,\cdots,{\cal{P}}-1)$ keeps other $\mathfrak{v}^{(\ell)}_i \leftarrow \mathfrak{a}^{(\ell)}_i,\cdots, \mathfrak{v}^{(1)}_i  \leftarrow \mathfrak{a}^{(1)}_i$ unchanged.

  After that, all the TPUs jointly compute  
  \begin{center}
  $(\vec{\mathfrak{u}}_1,\cdots,\vec{\mathfrak{u}}_{\cal{P}}) \leftarrow \texttt{BAdd} (\vec{\mathfrak{v}}_1,\cdots,\vec{\mathfrak{v}}_{\cal{P}}; \vec{\mathfrak{r}}_1,\cdots,\vec{\mathfrak{r}}_{\cal{P}})$, 
    \end{center}
  where     
  $\vec{\mathfrak{u}}_i = (\mathfrak{u}_i^{(\ell)} ,\cdots, \mathfrak{u}_i^{(1)})$, $\vec{\mathfrak{v}}_i = (\mathfrak{v}_i^{(\ell)} ,\cdots, \mathfrak{v}_i^{(1)})$, $\vec{\mathfrak{r}}_i = (\mathfrak{r}_i^{(\ell)} ,\cdots, \mathfrak{r}_i^{(1)})$. Finally, the \texttt{BExt} algorithm outputs
   $\vec{\mathfrak{u}}_i = (\mathfrak{u}_i^{(\ell)} ,\cdots, \mathfrak{u}_i^{(1)})$ for TPU $i = 1,\cdots, {\cal{P}}$.

       \subsection{Secure Integer Computation }
       \label{TPUSIC}
   \subsubsection{Secure Multiplication Protocol (\texttt{SM})}

The   {\texttt{SM}}      achieves integer  multiplication over integer shares, i.e.,  given   shares $x_i, y_i$ $(i = 1,\cdots, {\cal{P}})$ for  TPU $i$ as input,  {\texttt{SM}}   securely outputs $f_i$ for TPU $i$, such that $\sum_{i=1}^{\cal{P}} f_i = x \cdot y,$ where data shares $x_i, y_i$ satisfy $x = \sum_{i=1}^{\cal{P}} y_i$ and $y = \sum_{i=1}^{\cal{P}} y_i$.

   \textbf{Offline Stage}:    All the TPUs  initialize  their enclaves and load the public parameters to the UnS. Then, for the 
 enclave  1,  it generates $a_1, b_1 \in \mathbb{D}_N$,  computes $z = a_1 \cdot b_1$,  encrypts $[\![a_1]\!],[\![b_1]\!], [\![z]\!]   $, and lets them be $[\![a]\!],[\![b]\!], [\![c]\!]$, respectively. After that, enclave $i$ ($i = 1,\cdots,{\cal{P}}-1$) sends $[\![a]\!], [\![b]\!], [\![c]\!]$ to enclave $i+1$, TPU $i+1$ generates $a_{i+1}, b_{i+1}$ and computes 
 \begin{center}
 $[\![c]\!] \leftarrow [\![c]\!] \cdot [\![a_{i+1}\cdot b_{i+1}]\!] \cdot  [\![a]\!]^{b_{i+1} }  \cdot  [\![b]\!]^{a_{i+1}} $,   \end{center}\begin{center}
 $[\![a]\!] \leftarrow [\![a]\!] \cdot [\![a_{i+1}]\!],$   $[\![b]\!] \leftarrow [\![b]\!] \cdot [\![b_{i+1}]\!]$.
  \end{center}
  After the computation, for  $i = {\cal{P}},\cdots, 2$,  TPU enclave $i$ generates   $c_i \in \mathbb{D}_N$
  and
  computes $[\![c]\!] = [\![c]\!] \cdot [\![c_i]\!]^{N-1}.$ After the computation, the TPU   $2$ sends   $[\![c]\!]$ to TPU   1.
  Then, TPU $1$ uses \texttt{SDD} to get $c$ and denotes the final result $\lfloor c \rceil_N $ as $c_1$. 
 After the above computation, each enclave hold $a_i, b_i, c_i$, such that $
 \lceil a_1+\cdots + a_{\cal{P}} \rfloor_N
  =  \lceil a \rfloor_N$,  $\lfloor b_1 +\cdots + b_{\cal{P}} \rceil_N = \lceil b \rfloor_N$, $\lceil c_1 + \cdots + c_{\cal{P}} \rfloor_N = \lceil c \rfloor_N$ and $c = a\cdot b \mod N$. After the computation, each TPU enclave $i$ seals $a_i, b_i,c_i$ to UnS for storage individually. 
  
\textbf{Online Stage:}    TPU   $i$ loads the   $a_i, b_i, c_i$ into the enclave $i$. Then,  
compute $X_i =  x_i - a_i  $ and $Y_i =  y_i  - b_i$. Securely send $X_i$ and $Y_i$ to other enclave $j$ $(j = 1,\cdots,{\cal{P}}; j \neq i)$.
After receiving  other $X_j$ and $Y_j$, the each TPU $i$ computes  $X = \sum_{i=1}^{{\cal{P}}}X_j$ and    $Y = \sum_{i=1}^{{\cal{P}}} Y_j$.
After that, 
for each TPU $i$ $(i = 1,\cdots, {\cal{P}}-1)$, compute $f_i\leftarrow 
\lceil c_i + b_i X + a_i Y  \rfloor_N
 $. 
For  TPU ${\cal{P}}$,  
 compute $f_{\cal{P}} \leftarrow    \lceil c_{\cal{P}} + b_{\cal{P}} X + a_{\cal{P}} Y + X \cdot Y  \rfloor_N$.   Here, we denote the protocol as 
   $  \langle f \rangle \leftarrow  \texttt{SM}( \langle x \rangle,\langle y\rangle )$.   
     
   \subsubsection{Secure Monic Monomials Computation   (\texttt{SMM})}
  
 The   {\texttt{SMM}}  protocol can achieve monic monomials  computation over integer shares, i.e.,  given   a share $x_i$ $(i = 1,\cdots, {\cal{P}})$ and a public integer number $k$ for  TPU $i$ as input,  {\texttt{SMM}}   securely outputs $f_i$ for TPU $i$, such that $\sum_{i=1}^{\cal{P}} f_i = x^k,$ where data shares $x_i$ satisfy $x = \sum_{i=1}^{\cal{P}} x_i$. The construction of the {\texttt{SMM}} is list as follows:
Denote $k$ as binary form  $\mathfrak{k}_\ell,\cdots, \mathfrak{k}_1$. Initialize the share $f_i \leftarrow x_i$ for each TPU $i$.
For $j = \ell-1,\cdots, 1$, compute    $  \langle f^* \rangle \leftarrow  \texttt{SM}( \langle f \rangle,\langle f \rangle )$. If $\mathfrak{k}_j = 1$, compute $  \langle f \rangle \leftarrow  \texttt{SM}( \langle f^* \rangle,\langle x \rangle )$. Otherwise, let $  \langle f \rangle \leftarrow  \langle f^* \rangle$.  Here, the algorithm outputs $\langle f \rangle$ and  denotes the protocol as 
   $  \langle f \rangle \leftarrow  \texttt{SMM}( \langle x \rangle, k )$.



      \subsubsection{Secure Binary  Exponential Protocol ($\texttt{SEP}_2$)}
    The    $\texttt{SEP}_2$   can achieve   exponential over binary  shares with a public base, i.e.,  given  a binary  share $\mathfrak{x}_i \in \mathbb{Z}_2$ $(i = 1,\cdots, {\cal{P}})$ and a public integer  $\beta$ for  TPU $i$ as input\footnote{$\beta$ is   a small positive number which satisfies $gcd(\beta, N) =1$. }, $\texttt{SEP}_2$   securely outputs an integer share $f_i \in \mathbb{Z}_N$ for TPU $i$, such that $\sum_{i=1}^{\cal{P}} f_i = \beta^{\mathfrak{x}},$ where   $\mathfrak{x} =   \bigoplus_{i=1}^{{\cal{P}}} \mathfrak{x}_i$.

     \textbf{Offline Stage}:    All the TPUs  initialize  their enclaves and load the public parameters to the UnS. Then, for the 
 enclave  1,  it generates ${\mathfrak{a}}_1\in \mathbb{Z}_2$,   encrypts $\mathfrak{a}_1$ as $[\![\mathfrak{a}_1]\!]$, and lets it be $[\![\mathfrak{a}]\!] $.
  After that, enclave $i$ ($i = 1,\cdots,{\cal{P}}-1$) sends $[\![\mathfrak{a}]\!]  $ to enclave $i+1$, TPU $i+1$ generates $\mathfrak{a}_{i+1} \in \mathbb{Z}_2$,  computes     \begin{center}
     $[\![\mathfrak{a}]\!] \leftarrow [\![\mathfrak{a}]\!]^ {(1- \mathfrak{a}_{i+1})} \cdot  ([\![1]\!] \cdot [\![\mathfrak{a}]\!]^{N-1})^{\mathfrak{a}_{i+1}} = [\![\mathfrak{a} \oplus \mathfrak{a}_{i+1}]\!], $     \end{center}
  
  Once    $[\![a]\!]$ is received, TPU computes  
   \begin{center}
 $[\![b]\!]= [\![\mathfrak{a}]\!]^ {\beta} \cdot  ([\![1]\!] \cdot [\![\mathfrak{a}]\!]^{N-1}) = [\![\beta \cdot \mathfrak{a} +(1-\mathfrak{a})]\!] =  [\![\beta^{\mathfrak{a}} ]\!] $
   $[\![b^*]\!]= [\![\mathfrak{a}]\!] \cdot  ([\![1]\!] \cdot [\![\mathfrak{a}]\!]^{N-1})^ {\beta} = [\![   \mathfrak{a} + \beta(1-\mathfrak{a})]\!] =  [\![\beta^{1-\mathfrak{a}} ]\!] $
  \end{center}

   After the computation, for  $i = {\cal{P}},\cdots, 2$,  TPU   $i$ generates   $b_i, b^*_i  \in \mathbb{D}_N$
  and
  computes $[\![b]\!] = [\![b]\!] \cdot [\![b_i]\!]^{N-1}$ and $[\![b^*]\!] = [\![b^*]\!] \cdot [\![b^*_i]\!]^{N-1}.$ After the computation, the TPU   $2$ sends   $[\![b]\!]$ and $[\![b^*]\!]$ to TPU   1, and TPU 1 uses \texttt{SDD} to get $b$, $b^*$ and denote them as $b_1$ and $b^*_1$, respectively.
 After the above computation, each TPU $i$ holds $\mathfrak{a}_i, b_i$, which satisfies  $
   \mathfrak{a}_1\oplus \cdots \oplus \mathfrak{a}_{\cal{P}}  
  = \mathfrak{a}$,  $
    b_1+ \cdots + b_{\cal{P}}  
  = \beta^\mathfrak{a}$, $
    b^*_1+ \cdots + b^*_{\cal{P}}  
  = \beta^{1-\mathfrak{a}}$. After the computation, each TPU   $i$ seals $ \mathfrak{a}_i, b_i$ to UnS for storage individually.


    
    \textbf{Online Stage:}   
     TPU   $i$ loads the data share  $\mathfrak{x}_i$ and random shares $\mathfrak{a}_i, b_i$ into the its enclave. Then, TPU $i$ locally 
computes $X_i =   {\mathfrak{x}_i \oplus \mathfrak{a}_i}  $. Securely send $X_i$  to other enclave $j$ $(j = 1,\cdots,{\cal{P}}; j \neq i)$.
After receiving  other $X_j$,   each TPU $i$ locally computes  $X = \bigoplus_{i=1}^{{\cal{P}}}X_i$ and   $f_i\leftarrow  \lceil     (b^*_i)^{X} \cdot (b_i)^{1-X} \rfloor_N$. We can easily verify that $\sum_{i=1}^{\cal{P}} f_i = \beta^{(1-\mathfrak{a})({\mathfrak{x} \oplus \mathfrak{a}})+ \mathfrak{a}{(1-\mathfrak{x} \oplus \mathfrak{a}})}=  \beta^{\mathfrak{x}}$. 
   Here, we denote the protocol as 
   $  \langle f \rangle \leftarrow  \texttt{SEP}_{2}( \langle {\mathfrak{x}} \rangle, \beta )$.

   \subsubsection{Secure Integer  Exponential Protocol (\texttt{SEP})}

The    {\texttt{SEP}}    can achieve   exponential over integer  shares with a public base, i.e.,  given  an integer  share $x_i \in \mathbb{D}_N$ $(i = 1,\cdots, {\cal{P}})$ and a public integer  $\beta$ for  TPU $i$ as input , {\texttt{SEP}}   securely outputs shares $f_i \in \mathbb{D}_N$ for TPU $i$, such that $\sum_{i=1}^{\cal{P}} f_i = \beta^{x},$ where data shares $x_i$ satisfy $x = \sum_{i=1}^{\cal{P}} x_i$ and $x $ is relative small positive number with $\ell$ bit-length.

i) Compute 
 $(\vec{\mathfrak{x}}_1,\cdots,\vec{\mathfrak{x}}_{\cal{P}}) \leftarrow \texttt{BExt} (x_1,\cdots,x_{\cal{P}}),$ 
  where $\vec{\mathfrak{x}}_i = (\mathfrak{x}_i^{(\ell)} ,\cdots, \mathfrak{x}_i^{(1)})$ for TPU $i = 1,\cdots, {\cal{P}}$, and $ \mathfrak{x}^{(j)} =  \bigoplus_{i=1}^{\cal{P}} \mathfrak{x}_i^{(j)}$,  and $x = \sum_{j=1}^{\ell} \mathfrak{x}^{(j)} 2^{j-1}.$
  
ii) Execute $ \langle {\mathfrak{f}} \rangle \leftarrow \texttt{SEP}_2 ( \langle {\mathfrak{x}}^{(1)} \rangle, \beta).$ For $j = 2,\cdots, \ell$, compute $ \langle { {f}}_j \rangle \leftarrow \texttt{SEP}_2 ( \langle {\mathfrak{x}}^{(j)} \rangle, \beta)$, $ \langle { {f}}^*_j \rangle \leftarrow \texttt{SMM} ( \langle { {f}}_{j} \rangle, 2^{j-1})$,  and $  \langle { {f}}  \rangle \leftarrow \texttt{SM} ( \langle { {f}}  \rangle, \langle { {f}}^*_j  \rangle)$. The  {\texttt{SEP}} outputs  $ \langle { {f}}  \rangle$, and we denote the protocol as 
   $  \langle f \rangle \leftarrow  \texttt{SEP}( \langle x \rangle,\beta)$.

    \subsubsection{Secure Comparison  Protocol (\texttt{SC})}
 The \texttt{SC} can securely compute the relationship between integer $u$ and $v$, where 
  each TPU $i$ holds shares $u_i$ and $v_i$, where
$u = u_1 + \cdots + u_{{\cal{P}}}$, $v = v_1 + \cdots + v_{{\cal{P}}}$.  The construction of \texttt{SC} is listed  as follows:

 i) Each TPU $i$ ($i =1,\cdots, {\cal{P}}$) locally computes $w_i = u_i - v_i$.  After that,  all TPUs jointly compute  
\begin{center}
 $(\vec{\mathfrak{w}}_1,\cdots,\vec{\mathfrak{w}}_{\cal{P}}) \leftarrow \texttt{BExt} (w_1,\cdots,w_{\cal{P}}).$ 
  \end{center} 
  
   ii)  As we use two’s complement binary representation, the most significant digit of $u-v$ will reflect the relationship between the $u$ and $v$. After the above computation, TPU $i$ outputs $\mathfrak{w}_i^{(\ell-1)} \in  \vec{\mathfrak{w}}_i$. The most significant digit $\mathfrak{w}^{(\ell-1)}$ of $w = \sum_{i=1}^{\cal{P}} w_i$  decides the relationship of  $u$ and $v$, specifically, 
   if  $\bigoplus_{i=1}^{\cal{P}} \mathfrak{w}_i^{(\ell-1)} = 0$, it denotes  $u\geq v$. Otherwise, it denotes $u < v$.
  
  \subsubsection{Secure Equivalent  Protocol  (\texttt{SEQ})}
The goal of secure equivalent  protocol \texttt{SEQ} is to test whether the  two values $u, v$ are equal or not by giving the  shares of the two values $\langle u\rangle ,\langle v \rangle$. Mathematically, 
given  two  shares  $\langle u\rangle$  and $\langle v \rangle$,  {\texttt{SEQ}} \cite{liu2016efficientx} outputs the  shares  $\mathfrak{f}_i$ for each TPU $i$ $(i=1,\cdots, {\cal{P}})$ to determine whether the plaintext of the two   data are equivalent (i.e. test $u\stackrel{?}{=}v $. If $\bigoplus^{\cal{P}}_{i=1} \mathfrak{f}_i = 1 $, then $ u=v$; otherwise, $u \neq v$).  The {\texttt{SEQ}}  is described as follows: 

i) All the TPUs jointly calculate

\begin{center}
  $\langle \mathfrak{t}^*_{1}\rangle \leftarrow \texttt{SC} (\langle u\rangle ,\langle v \rangle);$ 
  $\langle \mathfrak{t}^*_{2}\rangle \leftarrow \texttt{SC} (\langle v \rangle ,\langle u\rangle ).$ 
\end{center}

ii) For each TPU $i$, it computes $\mathfrak{f}_i = \mathfrak{t}^*_{1,i} \oplus \mathfrak{t}^*_{2,i}$ locally, and outputs $\mathfrak{f}_i   \in \mathbb{Z}_2$.

   \subsubsection{Secure   Minimum of Two Number Protocol ($\texttt{Min}_2$)}

 The TPU $i$ $(i=1,\cdots, {\cal{P}})$ stores shares  $\langle x \rangle$ and $\langle y\rangle$ of two  numbers    $x$ and     $y$,     The  ${\texttt{Min}}_2$ protocol outputs share $\langle B \rangle$ of   minimum       number  $B $, s.t.,  $ B = min(x,y )$.  
 The   ${\texttt{Min}}_2$ is described as follows:
 
 i) All the TPUs can jointly compute
 \begin{center}
  $\langle \mathfrak{u} \rangle  \leftarrow \texttt{SC}( \langle  x\rangle ,\langle y\rangle);$ 
  $ \langle u \rangle \leftarrow \texttt{B2I} (\langle \mathfrak{u} \rangle);$
  $\langle X \rangle \leftarrow \texttt{SM} (\langle x \rangle ,\langle u \rangle).$ 
  \end{center}
   \begin{center}
    $\langle Y \rangle  \leftarrow \texttt{SM} (\langle y\rangle ,\langle u\rangle ).$
        \end{center}

ii)    The TPU $i$  computes locally and outputs 
$B_i = y_i - Y_i + X_i$.

  \subsubsection{Secure   Minimum of $H$ Numbers Protocol   ($\texttt{Min}_H$)}

 The goal of $\texttt{Min}_H$ is to get the  minimum number among $H$ numbers. 
Given      the shares  $ x_{1,i},  \cdots,  x_{H,i} $ for TPU $i$, the goal is to compute the  share  $ x^*_i$ for TPU $i$ such that $x^*$   stores the minimum  integer value among  $x_1,\cdots,x_H$, where $x^* = \sum_{i=1}^{{\cal{P}}} x^*_{i}$, $x_j = \sum_{i=1}^{\cal{P}} x_{i,j}$ for $j=1,\cdots, H$.
The $\texttt{Min}_{H}$  executes as follows: Each TPU $i$
puts   $x_{1,i} ,\cdots, x_{H,i}$  into a set $S_i$. If $ {\cal{L}} ({S_i}) = 1$, the share remaining in ${\cal{L}}({S_i}) $  is  the final output. Otherwise, the protocol  is processed  according to the following  conditions.

$\bullet$ If   $ {\cal{L}}({S_i})  \mod 2=0$ and ${\cal{L}}({S_i})  > 1$,  1) set $S'_i \leftarrow \emptyset$;
2) for $j = 1,\cdots,    {\cal{L}}({S_i}) /2  $, 
compute 
\begin{equation}
 \langle x_j \rangle   \leftarrow \texttt{Min}_2( \langle x_{2j-1} \rangle,  \langle x_{2j} \rangle),
\end{equation}
and add $x_{j,i} $ to the set $S'_i$;
3) clear   set $S_i$ and   let  ${S_i} \leftarrow S'_i $.

$\bullet$ If $  {\cal{L}}({S_i})  \mod 2 \neq 0$ and $ {\cal{L}}({S_i})  > 1$,   take out the last tuple  $ x_{{\cal{L}}({S_i})-1,i} $   from   set $S_i$ s.t.,   $ {\cal{L}}({S_i})  \mod 2=0$. Run the above procedure ($ {\cal{L}}({S_i})  \mod 2=0$ and $ {\cal{L}}({S_i})  > 1$) to
 generate set $S'_i$.    Put $ x_{ {\cal{L}}({S_i}) -1,i} $    into a set ${S}_i'$ and denote  $S_i' \leftarrow S_i$.
 
 After computation, each set $S_i$ in TPU $i$ only contains one element and we denote it as $x^*_{i}$. Thus, we 
 denote the algorithm as 
$ \langle x^* \rangle  \leftarrow \texttt{Max}_H( \langle x_{1} \rangle,  \cdots, \langle x_{H} \rangle).$

\subsection{Security Extension of Integer Computation}  

The above the secure computation  only consider the data privacy. Two types of information can be leaked to the adversary: 1) the  access pattern  of function's input, and 2) the access pattern of RU's result retrieve. Here, we give two security extension to achieve access pattern hiding and private information retrieve, respectively. 

    \subsubsection{Achieve Input Access pattern Hiding (APH)}  
    \label{sec:AAPH}
 As data are directly sealed in the UnS, the adversary may analysis the access pattern of  UnS without knowing the function's input.  Suppose the system contains  $H$ data  $x^*_1,\cdots,x^*_H \in \mathbb{D}_N$.  The data share $x_{j,i}$ are hold by each TPU $i$   $(j = 1,\cdots,H; i = 1,\cdots, \cal{P})$, such that $x_{j,1}+\cdots+x_{j,{\cal{P}}} = x^*_{j}$.  
  To achieve access pattern hiding, the homomorphic property of PCDD can be used. Specifically, 
    the RU uploads $[\![a_{1}]\!],\cdots, [\![a_{H}]\!]$ to each TPU $i$, s.t., for a specific  $1 \leq \gamma \leq H$, it has $a_\gamma = 1$, and other $j \neq \gamma$ and  $1\leq j\leq H$, it holds $a_j = 0$. Then, the goal of the  algorithm is to securely select the shares of $x_{\gamma,j}$ from the input shares, and constructs  as follows:
    
1) \textit{Obviously select encrypted shares}. Each  TPU  initializes    an enclave.  Then, for each TPU $i$ $(i=1,\cdots, {\cal{P}})$, compute
\begin{center}
$ [\![b_{i}]\!] \leftarrow [\![a_{1}]\!]^{x_{1,i}}\cdot [\![a_{2}]\!]^{x_{2,i}}\cdot \cdots \cdot[\![a_{H}]\!]^{x_{H,i}} \mod N^2$.
\end{center}
 
2) \textit{Securely update share $[\![b_i]\!]$ for   TPU $i$}. Without any share update, the adversary can still know the access pattern once the ciphertexts are decrypted. Thus, all the shares should be dynamically updated before the decryption.

 The TPU $i$ picks random numbers $\delta_{i,1},\cdots,\delta_{i, {\cal{P}}} \in \mathbb{Z}_N$ such that $\delta_{i,1}+\cdots+\delta_{i, {\cal{P}}} = 0 \mod N$, and then encrypts  $\delta_{i,j}  $ and  sends  $[\![\delta_{i,j}]\!]  $ to TPU  enclave $j$. Once all the update shares are received, TPU $i$ computes 
\begin{center}
$ [\![b^*_i]\!] \leftarrow  [\![b_i]\!] \cdot [\![\delta_{1,i}]\!]\cdot [\![\delta_{2,i}]\!]\cdot \cdots \cdot[\![\delta_{{\cal{P}},i}]\!]  \mod N^2$.
\end{center}

Finally, each TPU $i$ uses the \texttt{SDD} to get $b^*_i$ and denotes   $\lceil b^*_i \rfloor_N$ as the final share output.


 \subsubsection{Achieve Private Information Retrieve (PIR)}  
 \label{Sec:APIR}
If   the computation results is needed,  the RU will let the TPU to send the data shares back   via a secure channel. However, if one of the TPU has been compromised, even if the data cannot been known by the adversary, the retrieve access  pattern has been leaked to the adversary.  
Suppose the system contains  $H$ data  $x^*_1,\cdots,x^*_H \in \mathbb{D}_N$.  The data share $x_{j,i}$ are hold by each TPU $i$   $(j = 1,\cdots,H; i = 1,\cdots, \cal{P})$, such that $x_{j,1}+\cdots+x_{j,{\cal{P}}} = x^*_{j}$.      
Thus, to achieve the private information retrieve, the RU uploads $[\![a_1]\!],\cdots, [\![a_H]\!]$ to each TPU, s.t., for a specific  $1 \leq \gamma \leq H$, it has $a_\gamma = 1$, and other $j \neq \gamma, 1\leq j\leq H$, it holds $a_j = 0$. The goal of \texttt{PIR} is to let RU privately  retrieve $x_\gamma$. Then, the algorithm computes among all TPUs as follows:

  1)  For each TPU $i$, compute
\begin{center}
$ [\![b_{i}]\!] \leftarrow [\![a_{1}]\!]^{x_{1,i}}\cdot [\![a_{2}]\!]^{x_{2,i}}\cdot \cdots \cdot[\![a_{H}]\!]^{x_{H,i}} \mod N^2$.
\end{center}
 
2) TPU $1$ denotes $[\![b^*]\!] \leftarrow [\![b_1]\!]$, and sends $[\![b^*]\!]$ to TPU 2. Then, each TPU $i =2 ,\cdots, {\cal{P}}$, computes  $[\![b^*]\!] \leftarrow [\![b^*]\!] \cdot [\![b_{i}]\!] \mod N^2$. If $i =  {\cal{P}}$, then send $[\![b^*]\!] $  to RU. Otherwise,   $[\![b^*]\!] $ is sent from TPU $i$ to $i+1$.
Finally, RU  uses the \texttt{Dec} to get the  $b^*$, such that $x_{\gamma} = \lceil b^*_i \rfloor_N$ is the output share.
  

      \subsection{Secure  Floating Point Number Computation}
   
   \subsubsection{Data Format of Floating-Point Number}
   
     To achieve the  real number storage and computation, we can  refer to  the IEEE 754 standard to use Floating-Point Number (FPN) for real number storage.  To support the LightCom, we change the traditional FPN and  describe the FPN  by four integers: 1) a radix (or base) $\beta \geq 2$; 2) a precision $\eta  \geq 2$ (roughly speaking, $\eta$ is the number of ``significant digits'' of the representation); 3)  two extremal exponents $e_{min}$ and $e_{max}$ such that   $e_{min} < 0 < e_{max}$. A finite FPN $\hat{a}$ in such a format is a number for which
there exists at least one representation triplet $(m, e)$ with public parameters $\beta,\eta,e_{min},e_{max}$, such that,
  $$\hat{a}  =      m \cdot \beta^{e-\eta+1}.$$
  \begin{itemize}
\item {$m$ is an integer which $-\beta^{\eta} + 1 \leq m \leq \beta^{\eta} - 1$. It is called
the integral \textit{significand} of the representation of $x$;}
\item{$e$ is an integer such that $e_{min} \leq e \leq e_{max}$, called the \textit{exponent} of the
representation of $a$.}
\end{itemize}
 
As only the  \textit{significand} and  \textit{exponent}  contains sensitive information, we assume  all the FPNs have the same public base $\beta = 10$, and use   the fix bit-length to store the integer $m$. 
 Thus, 
 to achieve the secure storage, 
 the RU only needs to random share the $\hat{a}$ into $\hat{a}_1=(m_1,e_1),\cdots, \hat{a}_{\cal{P}}= (m_{\cal{P}},e_{\cal{P}})$, and sends $\hat{a}_i$ to TPU $i$ for storage, respectively.   

  For the secure FPN computation,  if  all the FPNs are transformed with the same exponential, we can directly use  secure integer computation method introduced in Section \ref{TPU:SOCT}.   Thus, the key problem to achieve the secure FPN computation is  how to allow all the FPNs securely transformed with the same exponential.  Here, we first construct an algorithm called Secure Uniform Computation (\texttt{UNI}) and then achieve the commonly-used  FPN computations.   

    \subsubsection{Secure Uniform Computation (\texttt{UNI})}

     Assume each TPU $i (i=1,\cdots, {\cal{P}})$  stores   into $\hat{a}_{j,i}=(m_{j,i},e_{j,i})$ , the goal of \texttt{UNI} is to output $\hat{a}^*_{j,1}=(m^*_{j,1},e^*_{j,1})$ for $j = 1,\cdots,H$,
and  the construction of \texttt{UNI} can be described as follows: 
   
i) All the TPUs jointly  compute     \begin{equation}
  \langle e^* \rangle  \leftarrow \texttt{Min}_H( \langle e_{1} \rangle,  \cdots, \langle e_{H} \rangle).
\end{equation}
    
 ii)  Each TPUs locally computes $\langle c_{j}  \rangle=   {\langle e_{j} \rangle - \langle e^*_{i}}\rangle  $. As $e_{j}- e^*$ is a relative small number,  TPUs jointly executes 
$   \langle 10^{  e_{j} - e^*  } \rangle  \leftarrow \texttt{SEP}( \langle c_{j} \rangle, 10)$ and   
   $\langle  m^*_j \rangle \leftarrow \texttt{SM} (\langle 10^{  e_{j} - e^*}  \rangle,\langle m_j \rangle)$.  
  
  After computation, all the $\langle a_{1} \rangle,\cdots, \langle a_{H} \rangle $ will transform to $\langle a^*_{1} \rangle,\cdots, \langle a^*_{H} \rangle $ which shares the same $e^*$, where $\langle \hat{a}_{j} \rangle = (\langle m^*_j \rangle, \langle e^* \rangle).$

 \subsubsection{Computation Transformation}
 
  The secure floating-point number computation  can be transformed into the secure integer computation protocols with the usage of \texttt{UNI}. Formally, given FPN shares   $\langle \hat{a}_{j}\rangle =(\langle m_{j} \rangle, \langle e_{j} \rangle) $, (for $j = 1,\cdots,H$), we can first compute 
  \begin{center}
  $(\langle \hat{a}^*_{1}\rangle,\cdots, \langle \hat{a}^*_H \rangle) \leftarrow \texttt{UNI}(\langle \hat{a}_{1}\rangle,\cdots, \langle \hat{a}_H \rangle),$
   \end{center}
   where $\langle \hat{a}^*_{j}\rangle = (\langle  {m}^*_{j}\rangle,\langle \hat{e}^*\rangle)$. 
 Then, 
   \begin{center}
  $( \langle y^*_1 \rangle,\cdots, \langle y^*_\zeta \rangle) \leftarrow {\cal{SIF}}( \langle m^*_1 \rangle,\cdots, \langle m^*_\xi \rangle),$
   \end{center}
   where $\cal{SIF}$ denote secure integer computation protocol designed in Section \ref{TPU:SOCT},  and $\langle y^*_1 \rangle,\cdots, \langle y^*_\zeta \rangle$  can be either integer shares or binary shares according to the function type. 
   If the ${\cal{SIF}}$ is the \texttt{SC} and \texttt{SEQ}, then the ${\cal{SIF}}$ output the binary share $\langle y^* \rangle$ as the final output, and we denote these two algorithms as secure FPN comparison  (\texttt{FC}) and secure FPN  equivalent  test protocol (\texttt{FEQ}).    If the ${\cal{SIF}}$ is the \texttt{SM},   \texttt{SMM}, $\texttt{Min}_2$ and  $\texttt{Min}_H$, then the ${\cal{SIF}}$ outputs the integer  share $\langle  {y}^*_1 \rangle$, and denotes
   $\langle \hat{y}^* \rangle = (\langle  {y}^*_1 \rangle, \langle e^*\rangle)$ as the secure FPN's output, and we denote above four algorithms as secure FPN multiplication   (\texttt{FM}), secure FPN monic monomials computation (\texttt{FMM}), secure minimum of two FPNs protocol   ($\texttt{FMin}_2$), and secure minimum of $H$ FPNs protocol   ($\texttt{FMin}_H$), respectively. 
   Specifically, for the multiple  FPN addition (\texttt{FAdd}), given FPN shares   $\langle \hat{a}_{j}\rangle =(\langle m_{j} \rangle, \langle e_{j} \rangle) $, (for $j = 1,\cdots,H$), we can first compute  $\langle \hat{a}^*_{1}\rangle ,\cdots, \langle \hat{a}^*_{H}\rangle $ with the $\texttt{UNI}$, 
   where $\langle \hat{a}^*_{j}\rangle = (\langle  {m}^*_{j}\rangle,\langle  {e}^*\rangle)$. Then, compute $\langle  {y}^*\rangle \leftarrow \sum_{j=1}^{H} \langle  {m}^*_{j}\rangle$ and denote the final FPN addition  result as $\langle \hat{y}\rangle  =  (\langle  {y}^*\rangle, \langle  {e}^*\rangle)$.

\subsubsection{Secure Extension  for FPN Computation}  
Similar to the secure integer computation, we have the three following extension for LightCom.

    \textbf{Access Pattern Hiding: }
 As all the secure FPN  computation can be transformed in to secure integer computation with the  help of the \texttt{UNI}, we can also use  the same  method   in section   \ref{sec:AAPH} to achieve  input access pattern hiding  for the secure FPN  computation.

 \textbf{Achieve Private FPN Retrieve:}  
In out LightCom, one floating point number can be securely stored as two    integer numbers.  Thus, we can use the method in section \ref{Sec:APIR} to  privately retrieve integer for twice to   achieve the private  FPN retrieve.

     \subsection{Functional Extension for LightCom}  
   \subsubsection{Non-numerical Data Storage and Processing}

   For the non-numerical data storage,   the   traditional character encodings with Unicode~\cite{unicode1997unicode}   and its standard Unicode Transformation Format (UTF) schemes can be used  which maps a character into an integer.    
   Specifically, for secure storage,   use UTF-8 to map the character into 32-bit number $x$, randomly splits $x$ into $x_1,\cdots,x_{\cal{P}}$, such that $x_1 +\cdots+x_{\cal{P}} = x $, and sends   $a_i$ to TPU $i$ for processing.   In this case, all the non-numerical data processing can be transformed into secure integer computation which can be found in section \ref{TPU:SOCT}. For the secure storage, each TPU $i$ securely seals the share $a_i$ into the UnS with the algorithm \texttt{Seal} in Section \ref{STPUDATA}.
   Once the data shares are needed for processing, TPUs needs to use \texttt{UnSeal} algorithm to recover the message from UnS.
   
     \subsubsection{Extension of Multiple User Computation}  
     
All the secure  computations in the previous section are designed for the single user setting, i.e., all the data are encrypted under a same RU's public key.   If all RUs  want to jointly achieve a secure computation, each RU $j (j=1,\cdots,  \psi)$  executes \texttt{KeyGen}  to generate  public key $pk_j$ and  private key is $sk_j$ locally. Then, RU $j$ uses \texttt{KeyS} to split key $sk_j$ into $\cal{P}$ shares $\langle sk_j \rangle$, and sends these shares to TPUs in the cloud.  Assume RU $j$'s ciphertext $[\![x_j]\!]_{pk_j}$ is securely stored in UnS, TPUs  can  get data shares $\langle x_j \rangle$  with $\texttt{UnSeal}$ and achieve the corresponding secure computations $\texttt{GenCpt}$ in Section \ref{Sec:GenCpt} with these shares.

    \section{Security Analysis }
  \label{sec:Securityanalysis}
  In this section, we first analyze the security of the basic crypto primitives and the sub-protocols, before demonstrating the security of our LightCom framework.

  \subsection{Analysis of  Basic Crypto Primitives}

   \subsubsection{The Security of   Secret Sharing Scheme}
   Here, we give the following theorem to show the  security of the additive secret sharing scheme.
   \begin{theorem}
   \label{Theo:1}
  A additive secret sharing scheme achieves an information theoretic secure when the  $\cal{P}$ participants can reconstruct the secret $x \in \mathbb{G}$, while any smaller set cannot discover anything information about the secret.
     \end{theorem}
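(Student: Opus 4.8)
The plan is to establish the two defining properties of a ${\cal{P}}$-out-of-${\cal{P}}$ scheme separately: \emph{correctness} (the full set of ${\cal{P}}$ shares reconstructs $x$) and \emph{perfect privacy} (any proper subset is statistically independent of $x$). Correctness is immediate from the group structure. By the definition of \texttt{Share} we have $X_{\cal{P}} = x - (X_1 + \cdots + X_{{\cal{P}}-1})$, so applying \texttt{Rec} gives $X_1 + \cdots + X_{\cal{P}} = (X_1 + \cdots + X_{{\cal{P}}-1}) + (x - (X_1 + \cdots + X_{{\cal{P}}-1})) = x$, using associativity, commutativity, and the existence of inverses in the finite abelian group $\mathbb{G}$. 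I would dispatch this in a single line.

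The substance is the privacy direction, which I would formalize as: for every index set $T \subsetneq \{1,\dots,{\cal{P}}\}$, the joint distribution of $(X_i)_{i \in T}$ is uniform on $\mathbb{G}^{|T|}$ and, in particular, does not depend on $x$. Since observing a subset is never more informative than observing a superset, it suffices to treat $|T| = {\cal{P}}-1$, i.e. exactly one share, say $X_k$, is hidden. The key idea is a one-time-pad / bijection argument: I would show that for any target tuple $(x_i)_{i \neq k} \in \mathbb{G}^{{\cal{P}}-1}$ and any candidate secret $x$, there is \emph{exactly one} assignment of the internal randomness $(X_1,\dots,X_{{\cal{P}}-1})$ consistent with that observation, namely the observed coordinates together with the unique value of $X_k$ forced by the constraint $X_{\cal{P}} = x - \sum_{i=1}^{{\cal{P}}-1} X_i$. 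Because $(X_1,\dots,X_{{\cal{P}}-1})$ is drawn uniformly from $\mathbb{G}^{{\cal{P}}-1}$, each such observation therefore occurs with probability exactly $|\mathbb{G}|^{-({\cal{P}}-1)}$ regardless of $x$, which yields both uniformity and independence from the secret.

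The one point that needs care is the sub-case split between the hidden index $k = {\cal{P}}$ and $k < {\cal{P}}$, since the scheme treats $X_{\cal{P}}$ asymmetrically as the ``dependent'' share. When $k = {\cal{P}}$ the observed shares $X_1,\dots,X_{{\cal{P}}-1}$ are by construction i.i.d. uniform and manifestly independent of $x$; when $k < {\cal{P}}$ the observed set \emph{contains} the dependent share $X_{\cal{P}}$, and one must argue that the unobserved uniform share $X_k$ perfectly masks $x$ inside $X_{\cal{P}}$. The bijection argument above subsumes both cases uniformly, so I expect the main obstacle to be essentially expository: phrasing ``cannot discover anything information about the secret'' rigorously as the statement that the posterior distribution of $x$ equals its prior for any party holding at most ${\cal{P}}-1$ shares, i.e. information-theoretic (perfect) secrecy, and observing that this is equivalent to the uniform-and-independent claim established above. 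I would close by remarking that, since the argument is purely combinatorial and makes no hardness assumption, it rules out \emph{any} leakage of partial information even against a computationally unbounded adversary, which is precisely the information-theoretic security asserted.
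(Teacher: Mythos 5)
Your proposal is correct and follows essentially the same route as the paper's own proof: the unobserved share acts as a one-time pad, so the ${\cal{P}}-1$ shares an adversary holds reveal nothing about $x$. The paper states this in two informal lines (``since the random value $X'_{{\cal{P}}}$ is unknown, it leaks no information''), whereas you supply the rigor it omits --- the explicit counting/bijection argument showing the observed tuple is uniform on $\mathbb{G}^{{\cal{P}}-1}$ independently of $x$, and the case split handling the asymmetric, dependent role of $X_{\cal{P}}$ --- so your write-up is a strictly more careful rendering of the same idea.
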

     \begin{proof}
     The   shares  $X_1,\cdots,X_{\cal{P}}$ are selected with random uniform distribution  among $\cal{P}$ participants such that $X_1+\cdots+X_{\cal{P}} = m \in  \mathbb{G}$.  Even the attacker $\cal{A}$ holds ${\cal{P}}-1$ shares, (s)he can only compute $x' = \sum_{i=1}^{{{\cal{P}}-1}} X'_i$, where $X'_i$ is selected from  $X_1,\cdots, X_{\cal{P}}$. The element $x$ is still protected due to the $x = x' + X'_{{\cal{P}}}$. Since random value $X'_{{\cal{P}}}$ is unknown for $\cal{A}$, it   leaks no information about the value $x$. 
       \end{proof}
       
          \begin{theorem}
   \label{Theo:2}
  A proactive additive  secret
sharing  scheme achieves an information theoretic secure if satisfies the following properties:
\textbf{I. Robustness:} The new updated shares are corresponding  to the secret  $x$ (i.e., all the new shares can reconstructed the secret  $x$). 
\textbf{II. Secrecy:} The adversary at any time period knows no more than $\cal{P}$ shares (possible a different shares in each time period) learns nothing about the secret.  
 
     \end{theorem}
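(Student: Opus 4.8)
The plan is to establish the two listed properties separately for the share-update construction of Section~\ref{sec:PKSU}, since together they constitute information-theoretic security in the proactive setting; I would treat the private-key update explicitly and note that the data-share update is identical.

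\textbf{Robustness.} First I would verify that a single update leaves the secret invariant by direct computation. Substituting the update rule $sk_i^{(t+1)} = sk_i^{(t)} + \sum_{j=1}^{{\cal{P}}} \delta_{j,i}^{(t)}$ and summing over $i$ gives
\[
\sum_{i=1}^{{\cal{P}}} sk_i^{(t+1)} = \sum_{i=1}^{{\cal{P}}} sk_i^{(t)} + \sum_{i=1}^{{\cal{P}}}\sum_{j=1}^{{\cal{P}}} \delta_{j,i}^{(t)}.
\]
Swapping the order of summation and invoking the generation constraint $\sum_{i=1}^{{\cal{P}}} \delta_{j,i}^{(t)} = 0$ for every source $j$, the correction term vanishes, so $\sum_i sk_i^{(t+1)} = \sum_i sk_i^{(t)}$. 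A short induction on $t$, anchored at $sk_i^{(0)} = sk_i$ with $\sum_i sk_i^{(0)} = sk$, then yields $\sum_i sk_i^{(t)} = sk$ for all $t$, which is exactly robustness.

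\textbf{Secrecy.} Here I would first reduce the single-period case to Theorem~\ref{Theo:1}: within a fixed period $t$ the shares $sk_1^{(t)},\ldots,sk_{{\cal{P}}}^{(t)}$ are additive shares of $sk$ with the uniform distribution enforced by the fresh $\delta$'s, so any set of at most ${\cal{P}}-1$ of them is independent of $sk$. The substantive work is to show that a \emph{mobile} adversary, corrupting a possibly different set $B_t$ with $|B_t|\le {\cal{P}}-1$ in each period, still learns nothing from the union of its views. I would argue this by showing the total view is identically distributed for every candidate secret: in each period at least one party $h\notin B_t$ stays honest, and its share $sk_h^{(t)}$ together with the inter-honest update values $\delta_{h',h}^{(t)}$ (for honest $h',h$) are never exposed; the constraint $\sum_j\delta_{h,j}=0$ still leaves each honest party free to pick its contributions to the honest shares uniformly, so the sum of the honest shares is uniform given everything the adversary observes, which masks $sk$ in every period.

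The cross-period independence is the crux, and I expect it to be the main obstacle. The danger is that combining the ${\cal{P}}-1$ shares seen in period $t$ with a \emph{different} ${\cal{P}}-1$ shares seen in period $t+1$ could jointly pin down $sk$. The resolution rests on two features of the construction: each party erases all intermediate variables except its current share, and fresh honest randomness is injected at every update. I would formalize this with a hybrid/simulation argument, constructing a simulator that, without knowing $sk$, samples the corrupted shares and the adversary-visible $\delta$'s uniformly subject only to the public linear constraints, and then shows inductively that the honest parties' unrevealed randomness can be set consistently with any secret. Because the honest contributions at step $t+1$ are independent of those at step $t$ (the old state having been erased), the simulated per-period views are mutually independent given the public constraints, so the joint real view reveals no more about $sk$ than a single period does, namely nothing.
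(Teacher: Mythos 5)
Your overall route is the same as the paper's: robustness via the telescoping computation $\sum_i sk_i^{(t+1)} = \sum_i sk_i^{(t)} + \sum_j\sum_i \delta^{(t)}_{j,i} = \sum_i sk_i^{(t)}$, and secrecy by reducing each single period to Theorem~\ref{Theo:1} and then claiming that the honest party's unexposed update randomness decouples the periods. The paper's proof is terser --- it simply asserts that the values $\delta^{(t)}_{{\cal{P}}_t,{\cal{P}}_t}$ (the delta the period-$t$ honest party generates for itself) are independently and randomly generated and never compromised, hence mask the cross-period view --- whereas you propose to formalize the same idea with a simulator. Your robustness argument is correct and slightly cleaner than the paper's, which contains index typos.

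However, the step you yourself flag as the crux does not go through as stated, and the paper's version of the same step has the identical hole. With ${\cal{P}}$-out-of-${\cal{P}}$ additive sharing and up to ${\cal{P}}-1$ corruptions per period, there may be exactly one honest party $h$ in period $t$. Your claim that the constraint $\sum_j \delta_{h,j}=0$ ``still leaves each honest party free to pick its contributions to the honest shares uniformly'' fails in that case: $h$'s only contribution to an honest share is $\delta^{(t)}_{h,h}$, and since $h$ sends $\delta^{(t)}_{h,j}$ to every corrupted party $j\neq h$, the adversary computes $\delta^{(t)}_{h,h} = -\sum_{j\neq h}\delta^{(t)}_{h,j}$ and hence knows the entire update $\sum_{j}\delta^{(t)}_{j,h}$ applied to the honest share. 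Concretely, corrupting $\{1,\dots,{\cal{P}}-1\}$ in period $t$ and $\{2,\dots,{\cal{P}}\}$ in period $t+1$ yields $sk^{(t)}_{{\cal{P}}} = sk^{(t+1)}_{{\cal{P}}} - \sum_{j}\delta^{(t)}_{j,{\cal{P}}}$ and therefore the secret, so no simulator of the kind you describe can exist at this corruption level. The argument (yours and the paper's) is sound only under an added restriction --- e.g., that the union of the corrupted sets over any two consecutive periods has size at most ${\cal{P}}-1$, or that at least two parties stay honest through each update so that some $\delta^{(t)}_{h',h}$ genuinely never reaches the adversary. A complete proof must state and use such a restriction; this is precisely why classical proactive schemes use threshold Shamir sharing with threshold below $n/2$ rather than full additive sharing.
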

     \begin{proof}
The data    shares  $X^{(t)}_i$ in time period $t$ are stored in party $i$, s.t.,   $\sum_{i=1}^{\cal{P}}X^{(t)}_1=x$.
Each party $i$ generates shares $\delta^{(t)}_{i,1},\cdots, \delta^{(t)}_{i,{\cal{P}}}$ which satisfies  $\delta^{(t)}_{i,1}+\cdots+\delta^{(t)}_{i,{\cal{P}}} = 0 \mod N$. Thus, the new shares denote $X^{(t+1)}_i = X^{(t)}_i + \delta^{(t)}_{1,i} + \cdots + \delta^{(t)}_{{\cal{P}},i}$, and satisfy $\sum_{i=1}^{\cal{P}} X_i^{(t+1)} = \sum_{i=1}^{\cal{P}} X_i^{(t+1)} + \sum_{i=1}^{\cal{P}}\sum_{i=1}^{\cal{P}}\delta^{(t)}_{i,j} = x$ which   the robustness property hold.

To guarantee the  secrecy property, the data shares in time period $t$  can achieve the  information theoretic secure according to the theorem    \ref{Theo:1}.  Even adversary can get ${\cal{P}} -1$ shares in each time period $t$ ($t \leq t^*$), the adversary can compute $x^{(t)} = x - X^{(t)}_{{\cal{P}}_t} = \sum_{i=1, i \neq {\cal{P}}_t}^{\cal{P}} X_i^{(t)}$, where $X^{(t)}_{{\cal{P}}_t}$ is the non-compromised share  in  time period $t$. The adversary $\cal{A}^*$ still cannot get any information from $x^{(1)},\cdots,x^{(t_*)}$ as $\delta^{(1)}_{{\cal{P}}_1,{\cal{P}}_1},\cdots,\delta^{(t_*)}_{{\cal{P}}_{t_*},{\cal{P}}_{t_*}}$ are independently  and randomly generated and cannot be compromised by the adversary. Thus, the  secrecy property holds.
       \end{proof}

  \subsubsection{The Security of PCDD}

 The security of our PCDD is given by the following theorem.
  \begin{theorem}
  \label{Theo:3}
  The  PCDD scheme described in Section \ref{sec:AHES} is semantically secure, based on the assumed intractability of the DDH assumption over $ {\mathbb{Z}}_{N^2}^*$. \end{theorem}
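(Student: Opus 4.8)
The plan is to establish IND-CPA (semantic) security by a tight reduction to the DDH assumption over $\mathbb{Z}_{N^2}^*$. First I would fix the security game: the challenger runs \texttt{KeyGen} to obtain $pk=(N,g)$ and $sk=\theta$ and hands $pk$ to a PPT adversary $\mathcal{A}$; $\mathcal{A}$ returns two equal-length messages $m_0,m_1\in\mathbb{Z}_N$; the challenger draws a uniform bit $b$, returns the challenge ciphertext $[\![m_b]\!]$, and $\mathcal{A}$ wins if it outputs $b$ with probability non-negligibly above $1/2$. The crucial structural observation is that a PCDD ciphertext has the ElGamal-over-$\mathbb{Z}_{N^2}^*$ shape in which the message-carrying factor $(1+N)^{m}$ -- an element of the order-$N$ subgroup -- is masked by a Diffie--Hellman value $h^{r}=g^{\theta r}$. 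Hence the encryption of $m_b$ becomes statistically independent of $b$ precisely when that masking value is replaced by a uniformly random group element, and this is exactly the gap that DDH measures.

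Next I would build, from any $\mathcal{A}$ with advantage $\epsilon$, a DDH distinguisher $\mathcal{B}$. Given a challenge tuple $(g,g^{\alpha},g^{\beta},T)$ in the cyclic subgroup of $\mathbb{Z}_{N^2}^*$ on which DDH is posed, $\mathcal{B}$ publishes the public key $(N,g)$ together with $h:=g^{\alpha}$, thereby implicitly setting $sk=\alpha$ without ever learning it, and forwards $pk$ to $\mathcal{A}$. When $\mathcal{A}$ submits $m_0,m_1$, $\mathcal{B}$ samples a uniform bit $b$ and reuses the remaining DDH components in the randomness slot of the ciphertext: it places $g^{\beta}$ in the first component and $T\cdot(1+N)^{m_b}$ in the message-masking component. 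Finally $\mathcal{B}$ outputs ``real tuple'' if and only if $\mathcal{A}$'s guess equals $b$.

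The correctness analysis splits on the two DDH cases. When $T=g^{\alpha\beta}$, the second component equals $h^{\beta}(1+N)^{m_b}$, so the challenge is distributed exactly as a genuine \texttt{Enc} of $m_b$ with randomness $\beta$, and $\mathcal{A}$ guesses $b$ with probability $1/2+\epsilon$. When $T$ is uniform and independent, the mask $T$ is uniform, so $T\cdot(1+N)^{m_b}$ is uniform and independent of $b$; $\mathcal{A}$'s view then carries no information about $b$ and it is correct with probability exactly $1/2$. Subtracting, $\mathcal{B}$ distinguishes the two distributions with advantage $\epsilon$, which is non-negligible whenever $\epsilon$ is, contradicting DDH and proving the theorem.

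I expect the main obstacle to be the group-theoretic bookkeeping in $\mathbb{Z}_{N^2}^*$ rather than the reduction skeleton. Concretely, I must pin down the precise subgroup on which DDH is assumed and verify that the order-$N$ factor $(1+N)^{m}$ lies in a subgroup meeting it trivially, so that randomizing the Diffie--Hellman mask genuinely randomizes the whole message-masking component and leaks nothing about $b$ through a shared component; I also need the simulated $h=g^{\alpha}$ to be distributed identically to a real public key, which forces $g$ (hence the DDH instance) to live in exactly the subgroup that \texttt{KeyGen} uses. Confirming these subgroup-orthogonality details, and noting that the distributed key-splitting \texttt{KeyS} is applied only \emph{after} $sk=\theta$ is fixed and therefore does not alter the IND-CPA distribution, is where the real care lies.
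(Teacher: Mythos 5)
Your reduction is, in substance, the right one: the paper itself does not prove this theorem at all but simply cites Bresson--Catalano--Pointcheval (Asiacrypt 2003) for it, and the argument you reconstruct --- embed the DDH instance as $h=g^{\alpha}$, reuse $g^{\beta}$ and $T$ in the challenge ciphertext $\bigl(g^{\beta},\,T\cdot(1+N)^{m_b}\bigr)$, and compare the real and random cases --- is exactly the reduction that reference carries out. As a skeleton, your proposal is faithful to the intended proof and considerably more informative than the one-sentence citation the paper prints.

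There is, however, one step that would fail if taken literally, and it is precisely the point you flagged as needing care --- except that you stated the required condition backwards. For the random-$T$ case to give advantage exactly $1/2$ you need $T\cdot(1+N)^{m_b}$ to be distributed independently of $b$; since $T$ is uniform over the DDH subgroup $\langle g\rangle$, this holds if and only if $(1+N)\in\langle g\rangle$, i.e.\ $N$ divides $\mathrm{ord}(g)$, so that multiplication by $(1+N)^{m_b}$ merely permutes $\langle g\rangle$. If instead $\langle 1+N\rangle$ meets $\langle g\rangle$ trivially --- the condition you proposed to verify --- then $T\cdot(1+N)^{m_b}$ lies in the coset $(1+N)^{m_b}\langle g\rangle$, which determines $m_b$ information-theoretically; the ``no information about $b$'' claim is then false, and at best one could argue computational hiding under an additional partial-discrete-logarithm-type assumption, not under DDH alone. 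This is not a pedantic worry here: the paper's own \texttt{KeyGen} takes $g$ of order $(p-1)(q-1)/2$, which is coprime to $N$ and hence is exactly the bad case, whereas the cited BCP03 scheme avoids it by choosing $g$ as a random square whose order is divisible by $N$. Your proof goes through once the condition is corrected to containment (equivalently, once $g$ is chosen as in BCP03); with the condition as you wrote it, the random-case step collapses.
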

  \begin{proof}
The   security of   PCDD   has been proven to be  semantically  secure under the  DDH assumption over $ {\mathbb{Z}}_{N^2}^*$ in the standard model  \cite{DBLP:conf/asiacrypt/BressonCP03}.  
  \end{proof}
 
   \subsection{Security of  TPU-based Basic Operation}
\begin{theorem}
\label{SP:RTG}
The \texttt{RTG} can securely  generate random shares against adversary who can compromise at most ${\cal{P}}-1$ TPUs, assuming the semantic security of the
PCDD cryptosystem.
\end{theorem}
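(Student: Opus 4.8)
The plan is to establish security in the standard real/ideal simulation paradigm: for any probabilistic polynomial-time adversary that corrupts a set $\mathcal{I} \subsetneq \{1,\dots,{\cal{P}}\}$ of TPUs with $|\mathcal{I}| \le {\cal{P}}-1$, I would exhibit a simulator $\mathcal{S}$ that, given only the corrupted parties' inputs and prescribed outputs (the shares $\mathfrak{r}_i^{(j)}$ and $r_i$ for $i \in \mathcal{I}$), produces a transcript computationally indistinguishable from the corrupted parties' real view. The constraint $|\mathcal{I}| \le {\cal{P}}-1$ guarantees at least one honest index $h \notin \mathcal{I}$, and the whole argument hinges on that single honest TPU: its freshly sampled bits $\mathfrak{r}_h^{(1)},\dots,\mathfrak{r}_h^{(\ell)}$ and integer $r_h$ act as a one-time mask that keeps the reconstructed secret $r$ uniform and independent of everything the adversary sees.

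First I would dispatch correctness, which is routine: unwinding the homomorphic updates of Step~2 shows the ciphertext reaching TPU~${\cal{P}}$ encrypts $\mathfrak{r}^{(j)} = \bigoplus_{i=1}^{{\cal{P}}} \mathfrak{r}_i^{(j)}$ for each bit position, so $[\![r]\!]$ indeed encrypts $-\mathfrak{r}^{(\ell)}2^{\ell-1} + \sum_{j=1}^{\ell-1}\mathfrak{r}^{(j)}2^{j-1}$; the subtractions of Step~3 together with the assignment $r_1 \leftarrow \lceil r \rfloor_N$ then give $\sum_{i=1}^{{\cal{P}}} r_i = r$, matching all the stated relations. The substantive work is the simulator. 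The key structural observation is that every cross-TPU message in Steps~1--3 is a single PCDD ciphertext, and the only plaintext ever exposed is the value recovered by \texttt{SDD} at TPU~1, namely the share $r_1$. I would construct $\mathcal{S}$ so that, for the honest TPU~$h$ whose true contributions it does not know, it replaces every ciphertext $h$ would emit with a PCDD encryption of an independently chosen uniform value (or simply of $0$), while running the real code of corrupted TPUs on the adversary's coins. Because the honest $\mathfrak{r}_h^{(j)}$ and $r_h$ are uniform and enter the aggregate exactly once, the induced distribution on the exposed share $r_1$ is uniform over $\mathbb{D}_N$ and independent of $\{\mathfrak{r}_i^{(j)}, r_i\}_{i\in\mathcal{I}}$, so $\mathcal{S}$ can sample $r_1$ and, when TPU~1 is corrupted, program the \texttt{SDD} output consistently with the prescribed corrupted outputs. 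The \texttt{SDD} call itself is handled by its own simulator, which by the threshold property of PCDD reveals only the target plaintext to TPU~1 and nothing about the honest key share $sk_h$, since the adversary holds at most ${\cal{P}}-1$ key shares.

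The main obstacle, and where I would spend the most care, is bridging the real and simulated transcripts through a hybrid argument resting on the semantic security of PCDD. I would define a sequence of hybrids that, one ciphertext at a time, swaps each message the honest TPU emits from an encryption of its true value to an encryption of the dummy value; a distinguisher between consecutive hybrids yields an adversary breaking the semantic security of PCDD, which is ruled out by Theorem~\ref{Theo:3}. Two subtleties demand attention here: first, the \texttt{SDD} step performs a genuine decryption, so I must argue that the partial decryptions produced with the honest share via \texttt{PDec} do not let the adversary strip the semantic-security guarantee off the remaining honest ciphertexts -- this is exactly where the facts that \texttt{SDD} outputs only to TPU~1 and that one key share is absent from $\mathcal{I}$ become essential; second, the information-theoretic uniformity of the honest mask underlying $r_1$ and the computational indistinguishability of the swapped ciphertexts must be composed without circularity, i.e. the masking claim must be shown to survive each hybrid. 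Once these are in place, indistinguishability of the corrupted view from the simulated view follows, and the theorem is proved.
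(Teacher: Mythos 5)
Your proposal is correct and rests on exactly the same two pillars as the paper's proof---the semantic security of PCDD protecting every inter-TPU message (all of which are ciphertexts), and the information-theoretic hiding of additive secret sharing guaranteeing that the one uncorrupted TPU's contribution masks the secret---but you package them quite differently. The paper gives a two-sentence direct argument: ciphertexts reveal nothing by Theorem~\ref{Theo:3}, and even with ${\cal{P}}-1$ shares the missing share protects $\mathfrak{r}^{(j)}$ and $r$ by Theorem~\ref{Theo:1}. You instead build a full real/ideal simulation with a per-ciphertext hybrid argument, which buys two things the paper's proof does not deliver: (i) an explicit treatment of the one plaintext that \emph{is} exposed (the value $r_1$ recovered by \texttt{SDD} at TPU~1), showing it is uniform and programmable by the simulator, and (ii) the observation that the \texttt{PDec} shares released during \texttt{SDD} must not undermine IND-CPA of the remaining honest ciphertexts---a point the paper silently skips and which strictly requires a threshold-decryption simulatability property of PCDD rather than bare semantic security. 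Your sketch flags this last issue without fully discharging it, but identifying it at all already goes beyond the paper's own argument; the cost of your route is length and the need to verify that the information-theoretic masking claim survives each hybrid, which you correctly note must be composed without circularity.
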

\begin{proof}
For each TPU $i \ (0 \leq i<{\cal{P}})$, only the PCDD encryption $[\![\mathfrak{r}^{(1)}]\!],\cdots,[\![\mathfrak{r}^{(\ell)}]\!]$ are sent to TPU $i+1$.  After that, PCDD encryption  $[\![r]\!]$ is sent from TPU  $i+1$ to $i$. 
 According to   semantically secure of the PCDD  ({theorem}
  \ref{Theo:3}), the TPU $i+1$ cannot get any information  from the ciphertext sent from TPU $i$.
 Even the adversary can compromise at most ${\cal{P}}-1$ TPUs and get the shares $\mathfrak{r}_i^{(1)},\cdots, \mathfrak{r}_i^{(\ell)}, r_i$, (s)he cannot get the secret $\mathfrak{r}^{(1)},\cdots, \mathfrak{r}^{(\ell)}, r$ due to $\mathfrak{r}^{(1)}_{{\cal{P}}},\cdots, \mathfrak{r}^{(\ell)}_{{\cal{P}}}, r_{{\cal{P}}}$ are unknown to adversary according to the security of Theorem \ref{Theo:1}.
\end{proof}

  The security proof of the secure share domain transformation in section, secure binary shares operation in section, secure integer computation, and secure FPN computation are similar to the proof of theorem \ref{SP:RTG}. The security of above operations are based on the semantic security of the
PCDD cryptosystem. Next, we will show that \texttt{AHP} and \texttt{PIR} can achieve its corresponding functionality. 
  
  \begin{theorem}
\label{SP:AHP}
The \texttt{AHP} can securely achieve the access pattern hidden for the function input  under the semantic security of the
PCDD cryptosystem.
\end{theorem}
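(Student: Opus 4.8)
The plan is to prove two claims separately: \emph{correctness}, that \texttt{AHP} returns a valid additive sharing of the selected value $x^*_\gamma$, and \emph{privacy}, that the view of an adversary controlling at most ${\cal{P}}-1$ TPUs is computationally independent of the selection index $\gamma$.

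First I would dispatch correctness using the polynomial homomorphism of PCDD. By the scalar-multiplicative and additive homomorphisms, the first step gives $[\![ b_i ]\!] = [\![ \sum_{j=1}^{H} a_j x_{j,i} ]\!] = [\![ x_{\gamma,i} ]\!]$ for each TPU $i$, since $a_\gamma=1$ and all other $a_j=0$. The masking step then yields $[\![ b^*_i ]\!] = [\![ x_{\gamma,i} + \sum_{k=1}^{{\cal{P}}} \delta_{k,i} ]\!]$, so that $\sum_{i=1}^{{\cal{P}}} b^*_i = \sum_i x_{\gamma,i} + \sum_k \big( \sum_i \delta_{k,i} \big) = x^*_\gamma$, where the double sum vanishes because each sender's masks obey $\sum_i \delta_{k,i}=0$. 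Hence $\{b^*_i\}$ is a \emph{freshly} randomized sharing of exactly the intended value.

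For privacy I would argue by a sequence of hybrids against the ${\cal{P}}-1$-colluding adversary. Every message the adversary sees during the protocol --- the encrypted selection vector $[\![ a_1 ]\!],\ldots,[\![ a_H ]\!]$, the intermediate ciphertexts $[\![ b_i ]\!]$, and the encrypted masks $[\![ \delta_{i,j} ]\!]$ --- is a PCDD ciphertext, so by the semantic security of Theorem~\ref{Theo:3} I can replace them one at a time by encryptions of unrelated plaintexts, obtaining a chain of computationally indistinguishable hybrids in which no ciphertext depends on $\gamma$. The only plaintexts ever exposed are the shares $b^*_j$ recovered by the compromised TPUs through \texttt{SDD}. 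Writing $i^*$ for the single honest TPU, each such share is $b^*_j = x_{\gamma,j} + \sum_k \delta_{k,j}$, and the mask $\delta_{i^*,j}$ supplied by the honest TPU is one component of an information-theoretically hiding zero-sum sharing (Theorem~\ref{Theo:1}) that \texttt{SDD} never decrypts on its own. Since the honest component $\delta_{i^*,i^*}$ is withheld, the ${\cal{P}}-1$ masks visible to the adversary are jointly uniform, so the revealed shares $\{b^*_j\}$ are jointly uniform and independent of $\gamma$ --- this is exactly the proactive-refresh argument of Theorem~\ref{Theo:2}, and it is also what makes the masking step indispensable (without it the adversary would decrypt $b_j=x_{\gamma,j}$ and match it against the stored shares it already holds to recover $\gamma$).

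The hard part will be splicing the computational guarantee (semantic security, which conceals the encrypted selection bits and masks) together with the information-theoretic guarantee (the secret-sharing argument, which conceals the decrypted output shares) inside a single coherent simulation. Concretely, I must verify that $\delta_{i^*,j}$ genuinely acts as a one-time pad on $b^*_j$ from the adversary's standpoint: this needs the facts that \texttt{SDD} hands the adversary no decryption of $\delta_{i^*,j}$ in isolation, and that the zero-sum constraint still leaves the ${\cal{P}}-1$ adversary-visible mask components jointly uniform, which holds precisely because the honest share $\delta_{i^*,i^*}$ remains secret.
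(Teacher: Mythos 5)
Your proposal is correct and follows essentially the same route as the paper's own proof: semantic security of PCDD hides the encrypted selection vector and the transmitted masks, while the zero-sum refresh $\delta_{i,j}$ (the secrecy property of Theorem~2) guarantees that the decrypted shares $b^*_j$ seen by a ${\cal{P}}-1$-colluding adversary reveal nothing about $\gamma$. You additionally supply the correctness of the homomorphic selection and a more careful hybrid/simulation framing than the paper's informal two-step argument, but these are elaborations of the same decomposition rather than a different proof strategy.
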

\begin{proof}
 In the select share phase,  all  $a_{1},\cdots, a_{H}$ are selected and encrypted  by RU, and are sent to TPUs for processing.
 It is impossible for the adversary to know the plaintext of the ciphertext due to the semantic security of  
PCDD. Also, the shares are dynamically update by computing  
$ b^*_i \leftarrow  b_i + \delta_{1,i} + \delta_{2,i}+ \cdots +\delta_{{\cal{P}},i}  \mod N$. As $\delta_{j,i}$ is randomly generated by TPU $i$ and is sent from TPU $j$ to TPU $i$, it is hard for the adversary to recover $b_i$ even adversary compromise the other ${\cal{P}}-1$ TPUs due to the secrecy of Theorem \ref{Theo:2}. Thus,  it is still impossible for the adversary to trace the original shares with the update shares which can achieve the  access pattern hidden.
\end{proof}

  \begin{theorem}
\label{SP:PIR}
The \texttt{PIR} can securely achieve the private information retrieve  under the semantic security of the
PCDD cryptosystem.
\end{theorem}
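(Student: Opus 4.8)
The plan is to establish two facts in sequence: that the RU recovers exactly the queried element $x^*_\gamma$ (correctness), and that no adversary --- including one compromising up to ${\cal{P}}-1$ TPUs --- learns the retrieval index $\gamma$ or the retrieved plaintext (access-pattern privacy). I would model the structure on the proof of Theorem~\ref{SP:AHP}, but note at the outset that \texttt{PIR} performs no dynamic share refresh, so the privacy argument cannot invoke the update secrecy of Theorem~\ref{Theo:2}; it must rest entirely on the semantic security of PCDD together with the fact that only the RU holds the full decryption key.

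First I would verify correctness via the polynomial homomorphism of PCDD from Section~\ref{sec:AHES}. In step~1 each TPU $i$ forms $[\![b_i]\!] = \prod_{j=1}^{H} [\![a_j]\!]^{x_{j,i}} = [\![\sum_{j=1}^{H} a_j x_{j,i}]\!]$; since the RU sets $a_\gamma = 1$ and $a_j = 0$ for $j \neq \gamma$, this collapses to $[\![b_i]\!] = [\![x_{\gamma,i}]\!]$. Step~2 then chains the additive homomorphism across the TPUs, yielding $[\![b^*]\!] = \prod_{i=1}^{{\cal{P}}} [\![b_i]\!] = [\![\sum_{i=1}^{{\cal{P}}} x_{\gamma,i}]\!] = [\![x^*_\gamma]\!]$. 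Decrypting with \texttt{Dec} and applying $\lceil\cdot\rfloor_N$ returns $x^*_\gamma$, exactly the desired output.

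Next I would prove access-pattern privacy. The only message the RU transmits to the cloud is the encrypted selection vector $[\![a_1]\!],\cdots,[\![a_H]\!]$. By the semantic security of PCDD (Theorem~\ref{Theo:3}), no probabilistic polynomial-time adversary can distinguish this vector from an encryption of any other unit vector, so the index $\gamma$ stays hidden; this holds even when the adversary compromises up to ${\cal{P}}-1$ TPUs and reads the shares $x_{j,i}$, since those shares are independent of $\gamma$. The intermediate ciphertexts $[\![b_i]\!]$ and the accumulating $[\![b^*]\!]$ are again PCDD encryptions leaking nothing about their plaintexts. Finally, because $[\![b^*]\!]$ is decrypted only by the RU using the complete private key --- no single TPU holds more than its share $sk_i$ --- the retrieved value $x^*_\gamma$ is never exposed to the cloud.

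The main obstacle I anticipate is making the reduction to PCDD clean despite the adversary observing the data shares: I must argue that the joint distribution of everything the adversary sees (the encrypted query, the homomorphically derived ciphertexts, and any ${\cal{P}}-1$ compromised shares) is independent of $\gamma$, so that any distinguisher for $\gamma$ would yield an IND-CPA distinguisher against PCDD. Unlike Theorem~\ref{SP:AHP}, there is no share-update step to appeal to here, so the entire privacy guarantee reduces to the indistinguishability of the selection vector plus the RU's exclusive possession of the full private key.
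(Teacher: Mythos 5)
Your proposal is correct and rests on the same foundation as the paper's own proof, namely that the encrypted selection vector and all intermediate values $[\![b_i]\!]$, $[\![b^*]\!]$ remain in the ciphertext domain and are protected by the semantic security of PCDD (Theorem~\ref{Theo:3}). The paper's argument is only a two-sentence sketch of the privacy claim, so your additions --- the explicit correctness check via the polynomial homomorphism showing $[\![b^*]\!] = [\![x^*_\gamma]\!]$, and the observation that no share-update step is needed here (unlike \texttt{APH}) because only the RU ever decrypts --- fill in details the paper leaves implicit rather than taking a different route.
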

\begin{proof}
 In  \texttt{PIR}, all  $a_{1},\cdots, a_{H}$ are selected and encrypted  by RU, and sent to TPUs for processing.  After that, $[\![b^*]\!]$
 is transmitted among TPUs. As all the computations in the  \texttt{PIR}  are executed in the ciphertext domain,
 it is impossible for the adversary to know the plaintext of the ciphertext due to the semantic security of  
PCDD, which can achieve the   private information retrieve.
\end{proof}
  
  \subsection{Security of LightCom}
 \begin{theorem}
The LightCom is  secure against side-channel  attack   if $t_c+t_p+t_d< {\cal{P}} \cdot t_{a}$, where $t_c, t_p$  and $t_d$  are the runtime of secure computation $\texttt{GenCpt}$, private key update, and data share update, respectively; $t_{a}$ is the   runtime for attacker successfully compromising the  TPU enclave; ${\cal{P}}$ is the number of TPUs in the system. 
\end{theorem}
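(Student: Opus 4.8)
The plan is to reduce the side-channel security of LightCom to the reconstruction hardness of the underlying ${\cal{P}}$-out-of-${\cal{P}}$ sharing, and then argue that the proactive refresh outpaces the adversary. First I would observe that, to violate security, an adversary ${\cal{A}}^*_3$ must recover either the RU's private key or some protected data value; since both the key and every data item are stored as ${\cal{P}}$-out-of-${\cal{P}}$ additive shares across the TPUs, Theorem \ref{Theo:1} guarantees that learning any fewer than ${\cal{P}}$ shares yields no information whatsoever about the secret. Hence a successful attack necessarily requires the adversary to obtain all ${\cal{P}}$ shares of one and the same secret.

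Next I would invoke the proactive refreshing of Theorem \ref{Theo:2}. Every complete cycle --- one invocation of \texttt{GenCpt}, followed by a private-key-share update and a data-share update --- refreshes both the key shares and the data shares using the zero-sum masks $\delta_{i,j}$. By the secrecy property established in Theorem \ref{Theo:2}, shares belonging to two distinct time periods are mutually independent: combining shares harvested before a refresh with shares harvested after it reveals nothing, because the refreshing mask contributed by any uncompromised enclave is freshly and independently random. Consequently the adversary must extract all ${\cal{P}}$ shares \emph{within a single time period}, and any partial progress is erased at every refresh boundary.

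The timing step ties these two facts together. One time period coincides with one enclave lifetime: the enclaves are freshly built at the start of \texttt{GenCpt} and released only after the key- and data-share updates, so its length is exactly $t_c + t_p + t_d$, after which both the shares and the enclaves themselves are renewed. By the attack-model restriction that no adversary may compromise all TPU enclaves concurrently, the side-channel compromises must proceed one enclave at a time, each costing the attacker $t_a$; thus collecting all ${\cal{P}}$ shares requires elapsed time ${\cal{P}} \cdot t_a$. Under the hypothesis $t_c + t_p + t_d < {\cal{P}} \cdot t_a$, the enclaves are rebuilt and the shares refreshed strictly before the adversary can finish compromising the ${\cal{P}}$-th enclave, so in every period at least one enclave's share remains uncompromised and, by Theorem \ref{Theo:1}, the secret stays information-theoretically hidden. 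The same accounting applies verbatim to ${\cal{A}}^*_1$ and ${\cal{A}}^*_2$ under their respective restrictions.

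I expect the main obstacle to be making the timing and scheduling argument rigorous --- in particular, justifying that the compromises cannot be parallelized below ${\cal{P}} \cdot t_a$, and that a refresh genuinely resets the adversary's accumulated knowledge of shares rather than merely adding a publicly known offset to them. This rests squarely on the ``no concurrent compromise'' clause of the attack model together with the enclave rebuild-and-release discipline, and on the independence of the refresh masks $\delta_{i,j}$ guaranteed by Theorem \ref{Theo:2}; I would therefore be careful to state both dependencies explicitly and to note that the result degrades gracefully if the adversary can overlap attacks, in which case the bound on the right-hand side would shrink accordingly.
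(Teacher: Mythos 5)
Your proposal is correct and follows essentially the same route as the paper's proof: invoke Theorem \ref{Theo:1} to show that fewer than ${\cal{P}}$ shares reveal nothing, invoke the secrecy property of Theorem \ref{Theo:2} to show that shares harvested across refresh boundaries cannot be combined, and then compare the period length $t_c+t_p+t_d$ against the sequential compromise cost ${\cal{P}}\cdot t_a$. Your version is in fact somewhat more explicit than the paper's about the timing and non-parallelizability assumptions; the only piece you omit is the paper's brief remark that the sealed shares at rest in the UnS remain protected by the semantic security of PCDD (Theorem \ref{Theo:3}), which is ancillary to the timing claim.
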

\begin{proof}
In the data upload phase, 
RU's data are randomly separated and uploaded to TPUs via secure channel. According to theorem \ref{Theo:1}, no useful information about the RU's data are leaked to the adversary with compromising  ${\cal{P}}-1$   TPUs encalves.
For the  long-term storage, the data shares are securely sealed  in the UnS with PCDD crypto-system. With the theorem \ref{Theo:3},  we can find the encrypted data shares are semantically secure stored in the UnS.

In the secure online computing phase, all the ciphertext are securely  load to the TPUs with \texttt{UnSeal}. Then, all the TPUs   jointly achieves the secure computation with the  $\texttt{GenCpt}$. During the computing phase, the system attacker can launch the following three types of attacks: \textit{1)  compromise the TPU enclave}: adversary can compromise a TPU enclave to get current data shares and private key shares
with the time $t_a$; \textit{2) stores the old private key shares}: the adversary  tries to  recover the RU's private key with current   and  old private key shares.
\textit{3) stores the old data shares and try to  recover the RU's original data}: the adversary  tries to  recover the RU's data  with current   and  old data shares.  To prevent first type of attack, RU's data are separated  and  distributed among $\cal{P}$ TPUs. Unless adversary can compromise all the TPU enclaves at the same time, $\cal{A}$ can get nothing useful information from compromised shares according to theorem \ref{Theo:1}.  Thanks to the secrecy property of proactive additive secret sharing scheme in Theorem \ref{Theo:2}, it is impossible for the adversary to recover the private key and RU's data by getting ${\cal{P}}-1$ TPUs   at each time period. As the TPU enclaves are dynamically     release after the computation, the attacker needs to restart to compromise the TPU enclaves after the enclaves are built for secure computation.

Thus, the adversary fails to attack our LightCom system   if the data shares are successfully seals in the UnS and  all the TPU enclaves are released before the adversary compromises all the   enclaves in the secure computation phase.  In this case, the LightCom is secure  against adversary side-channel attack   if $t_c+t_p+t_d< {\cal{P}} \cdot t_{a}$.
\end{proof}

\section{Evaluations}
In this section, we evaluate the performance of   LightCom. 
\label{sec:performance}
\subsection{Experiment Analysis}
For evaluating the performance of the LightCom, we build  the framework with C code under the Intel$^\circledR$ Software Guard Extensions (SGX) environment as a special case of TPU, and the experiments are performed on a personal computer (PC) with 3.6 GHz single-core processor and 1 GB  RAM memory (single-thread program are used) on virtual machine with Linux operation system. 
  To test the efficiency of our LightCom, there are two types of metrics are considered, called \textit{runtime}  and \textit{security level} (associate with PCDD parameter $N$). The runtime refers to the secure  outsourced computation   executing duration on  server or user's side   in our testbed.  The security level is an indication of the security strength of a cryptographic primitive. Moreover, we use SHA-256 as the hash function $H(\cdot)$
in LightCom. As the communication latency among CPUs is very low (use Intel$^\circledR$   UltraPath Interconnect (UPI) with 10.4 GT/s transfer speed and theoretical bandwidth is 20.8 GB/s)\footnote{https://www.microway.com/knowledge-center-articles/performance-characteristics-of-common-transports-buses/}, we do not consider the communication overhead as a performance metric   in our LightCom.

\subsubsection{Basic Crypto and System Primitive}

       \renewcommand\arraystretch{1}
\begin{table*}[htbp]
\caption{Protocol Performance: A Comparative Summary ($\ell = 32, H=8, {\cal{P}} =3, 
$ 100-time for average)} \label{Table:Performance} \centering
\begin{tabular}{Ic||c|c|c|c|c|c|c||c|c|c|c|c|c|cI}
  \thickhline
   & \multicolumn{7}{c||}{\textbf{Online Computation Cost (Millisecond)}} & \multicolumn{7}{cI}{\textbf{Offline Computation Cost (Second)}}\\
  \hline
  ${N}$ & \textbf{512} & \textbf{768} & \textbf{\textcolor[rgb]{0,0,1}{1024}} & \textbf{1280} & \textbf{1536} & \textbf{1792} & \textbf{2048} & \textbf{512} & \textbf{768} & \textbf{\textcolor[rgb]{0,0,1}{1024}} & \textbf{1280} & \textbf{1536} & \textbf{1792} & \textbf{2048}\\
  \thickhline
  \textbf{\texttt{RTG}} & 16.66 &  53.61 & \textcolor[rgb]{0,0,1}{117.0}  & 225.92 &  369.92  &  675.15  & 933.13
  & - & - & \textcolor[rgb]{0,0,1}{-} & - & - & - & - \\
  \hline
  \textbf{\texttt{B2I}} & 2.2 &  6.29 & \textcolor[rgb]{0,0,1}{13.92}  & 27.15 &  47.29  &  85.8  & 116.7
  & - & - & \textcolor[rgb]{0,0,1}{-} & - & - &- & - \\
  \hline
  \textbf{\texttt{I2B}} & 2.72 &  8.19 & \textcolor[rgb]{0,0,1}{16.38}  & 31.21 &  51.13  &  70.25  & 102.0
  & - & - & \textcolor[rgb]{0,0,1}{-} & - & - & - & - \\
  \hline
  \textbf{\texttt{SBM} } & 0.001	& 0.001 & 	\textcolor[rgb]{0,0,1}{0.001}& 	 0.002	&  0.002	& 0.002     & 0.002
  & 0.004	& 0.014	&\textcolor[rgb]{0,0,1}{0.024}	& 0.047	& 0.076	& 0.139	& 0.192\\
  \hline
  \textbf{\texttt{BAdd}} & 0.053	& 0.053	& \textcolor[rgb]{0,0,1}{0.054}	& 0.054	& 0.057	&  0.059	& 0.084
  & 0.222	& 0.915	& \textcolor[rgb]{0,0,1}{1.569}	& 3.024	& 4.868 	& 8.897	& 12.346 \\
  \hline
  \textbf{\texttt{BExt}} & 1.36	& 4.24	&\textcolor[rgb]{0,0,1}{8.54}	& 16.6	& 29.1	& 51.93	& 70.13
  & 0.268 & 1.077	&\textcolor[rgb]{0,0,1}{1.882}	& 3.634	& 6.016	& 10.842  	& 14.868\\
  \hline
  \textbf{\texttt{SM}} & 0.003	& 0.003 	&\textcolor[rgb]{0,0,1}{0.005}	& 0.007	& 0.008 	& 0.012	& 0.013
  & 0.009 	& 0.031	&\textcolor[rgb]{0,0,1}{0.066}	&  0.128	& 0.228	& 0.387	& 0.530\\
  \hline
  \textbf{\texttt{SMM}} & 0.140	&  0.215	&\textcolor[rgb]{0,0,1}{0.248}	& 0.356 & 0.457	& 0.46	& 0.614
  & 0.305	& 1.006	&\textcolor[rgb]{0,0,1}{2.114}	& 4.108	& 7.313	& 12.396	& 16.965\\
  \hline
    $\textbf{\texttt{SEP}}_2 $ & 0.001	& 0.001 	&\textcolor[rgb]{0,0,1}{0.001}	& 0.001	&  0.002	& 0.002 	& 0.003
  & 0.004	& 0.013	&\textcolor[rgb]{0,0,1}{0.028}	&  0.056	& 0.103	& 0.171	& 0.232\\
  \hline
  \textbf{\texttt{SEP}} & 2.73	& 5.95	&\textcolor[rgb]{0,0,1}{10.79}	& 19.44	& 34.84	&	53.8 & 74.08
  & 9.867	&  32.739	&\textcolor[rgb]{0,0,1}{68.414}	& 132.96	& 236.28	& 400.87	& 548.76\\
  \hline
  \textbf{\texttt{SC}} & 1.32	& 4.11	&\textcolor[rgb]{0,0,1}{8.77}	& 16.73	& 29.68	& 49.85	& 68.39
  & 	 0.267  &	1.077  &\textcolor[rgb]{0,0,1}{1.882}	&	3.634  &	6.017  &	10.743 & 14.869\\
  \hline
      \textbf{\texttt{SEQ}} & 2.59	& 8.06	&\textcolor[rgb]{0,0,1}{16.96}	& 32.84 	& 54.17	& 98.8	& 139.1
  & 0.535	& 2.155	&\textcolor[rgb]{0,0,1}{3.764}	& 	7.269 & 	12.034 &  21.486	& 29.378\\
  \hline
   $\textbf{\texttt{Min}}_2$ & 3.37	&  9.26	&\textcolor[rgb]{0,0,1}{20.34}	& 38.14	& 66.58	&	97.9 & 145.4
  & 0.286	& 1.141	&\textcolor[rgb]{0,0,1}{2.015}	& 3.891	& 6.474	& 11.517	& 15.929\\
  \hline
 $\textbf{\texttt{Min}}_H$ & 3.75	& 68.89	&\textcolor[rgb]{0,0,1}{150.85}	& 294.49	& 510.82	&  869.74	& 1453.6
  & 2.007	& 7.983	& \textcolor[rgb]{0,0,1}{14.101}	& 27.24	& 45.318	& 80.624	& 111.51\\
  \hline
  \textbf{\texttt{APH}} & 4.955	& 15.12 	&\textcolor[rgb]{0,0,1}{33.29}	& 63.58	& 117.79	& 191.25	& 228.5
  & - 	& -	&\textcolor[rgb]{0,0,1}{-}	& -	&-	& -	&-\\
  \hline
  \textbf{\texttt{PIR}} & 0.926	& 2.492	&\textcolor[rgb]{0,0,1}{5.057}	&9.28 	& 16.6	&  26.6	& 31.67
  & -	& -	&\textcolor[rgb]{0,0,1}{-}	& -	& -	& -	&-\\
  \hline
  \textbf{\texttt{UNI}} & 44.58	& 119.03	&\textcolor[rgb]{0,0,1}{247.27}	& 460.56	& 810.9	&	1181.5 	&1686.9
  & 81.016	& 270.14	&\textcolor[rgb]{0,0,1}{561.94}	& 1090.9	& 1937.4	& 3290.7	&4505.8\\
    \hline
  \thickhline
\end{tabular}
      \label{table111x}

\end{table*}

We first evaluate   the  performance of  our basic operation of cryptographic primitive (PCDD cryptosystem) and basic system operations (\texttt{Seal}, \texttt{UnSeal} and \texttt{SDD} protocol).  We first let $N$ be 1024 bits to achieve 80-bit security \cite{barker2007nist} to test the basic crypto primitive and basic protocol. For PCDD,   it takes 1.153 ms to encrypt a message (\texttt{Enc}), 1.171 ms for \texttt{Dec},  1.309 ms to run \texttt{PDec}, 5.209 $\mu$s to run \texttt{TDec}.   For the basic system operations,   it takes 1.317 ms for \texttt{Seal}, 1.523 ms for \texttt{UnSeal}, and 1.512 ms for \texttt{SDD} (${\cal{P}}=3$).  Moreover,     \texttt{Seal}, \texttt{UnSeal} and  \texttt{SDD} are affected by the PCDD parameter  $N$ and  the number of TPUs ${\cal{P}}$ (See Fig. \ref{fig:111} and Fig. \ref{fig:222} respectively).
From the Figs. \ref{fig:111} and   \ref{fig:222}, we can that the parameter  $N$ will affect greatly on the runtime and communication overhead of the protocols.

\begin{figure}[htbp]
    \centering
    \subfigure[Performance with $N$  (Vector Length =10)]
    {
        \includegraphics[width=0.22\textwidth]{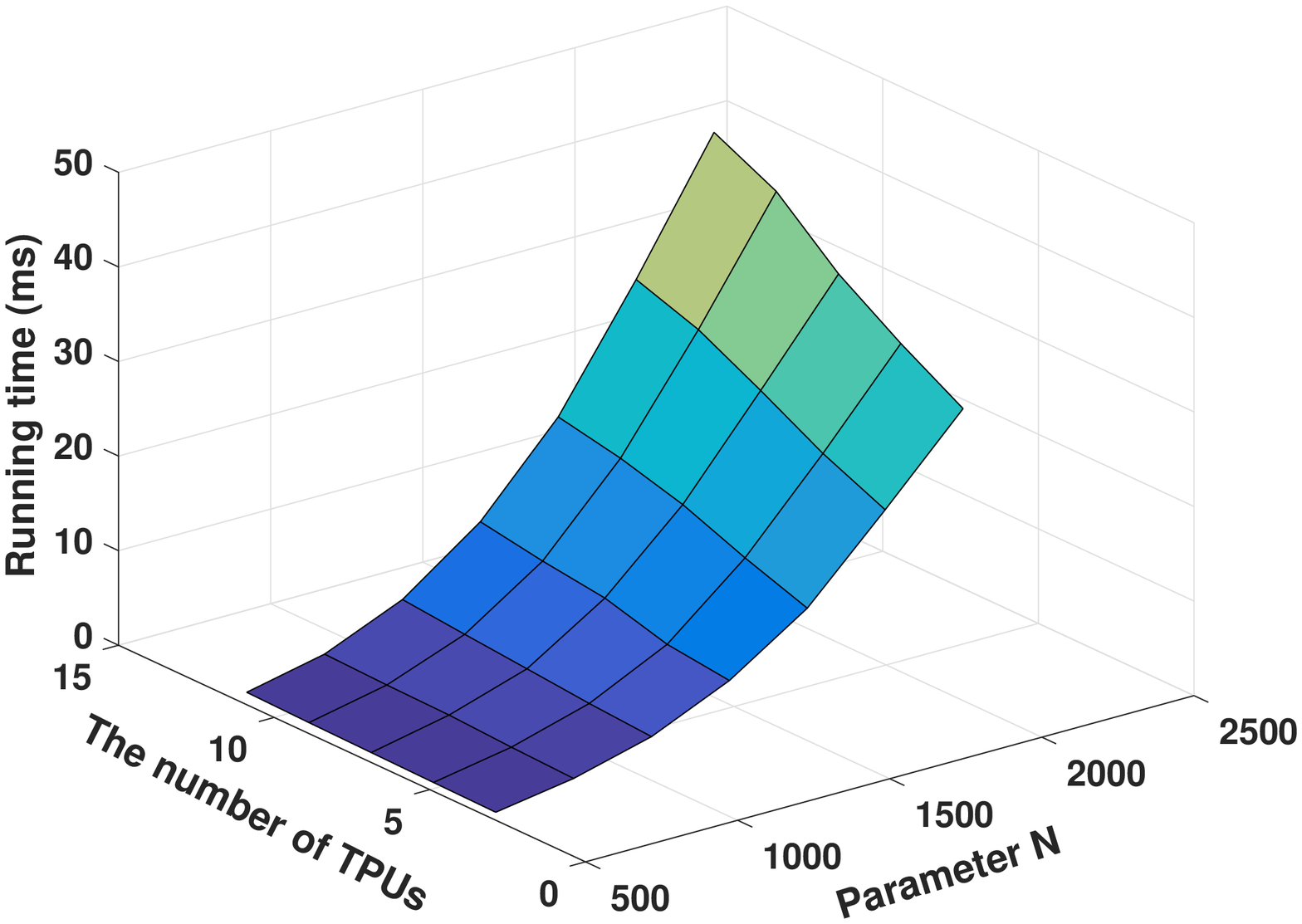}
        \label{fig:111}
    }
    \subfigure[Performance with Encrypted Vector Length  ($N=1024$)]
    {
        \includegraphics[width=0.22\textwidth]{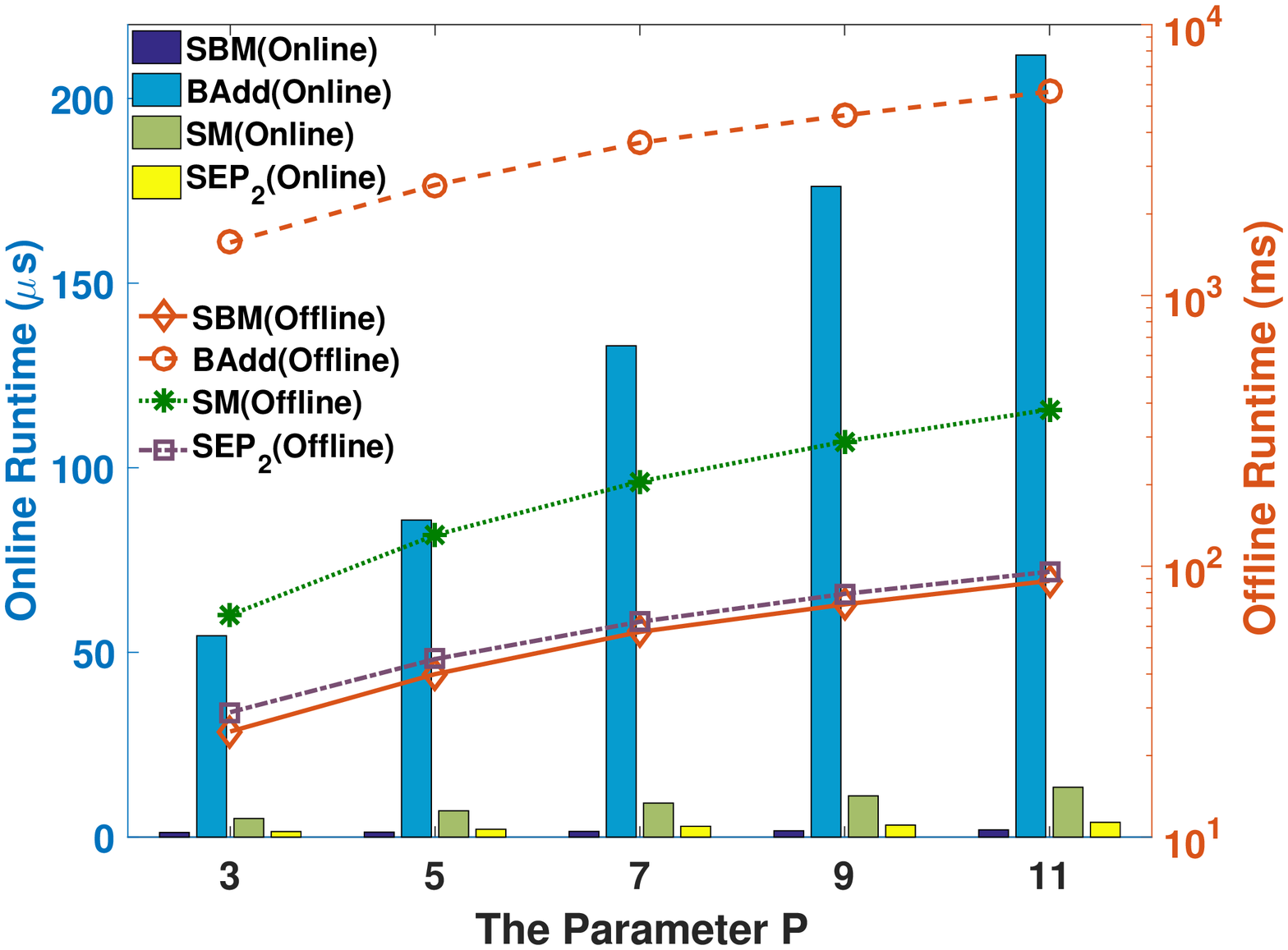}
        \label{fig:222}
    }
     \caption{Simulation results of Basic Protocols}
\end{figure}


\subsubsection{Performance of TPU-based  Integer Computation}
\label{SEC:TPU-I}
Generally, 
there are four factors that affect the performance of TPU-based  integer computation: 1) the number of TPUs $\cal{P}$; 2) the PCDD  parameter $N$;  3) the bit-length  of the integer $\ell$;  4) the number of encrypted data   $H$.  In Fig. \ref{fig:aaa}-\ref{fig:eee}, we can see that   the runtime    of all the protocols  increase  with $\cal{P}$.  It is because  more runtime are needed and more data in online phase and random numbers in offline phase are required to process with extra parties.
Also, 
we can see that   the runtime   of all the TPU-based integer computations  increase   with the bit-length of   $N$ 
from Table \ref{table111x}. It is because the running time of the basic operations (\texttt{Enc} and \texttt{Dec} algorithms of PCDD)   increases  when $N$ increases. Moreover, in Fig. \ref{fig:fff}-\ref{fig:kkk},  the  performance of \texttt{RTG}, \texttt{SMM}, \texttt{BAdd}, \texttt{BExt}, \texttt{SEP},    \texttt{SC},  \texttt{SEQ},  $\texttt{Min}_2$, $\texttt{Min}_H$, $\texttt{UNI}$     are associated with $\ell$. The computational  cost of above protocols are increased  with  $\ell$, as more computation resources are needed to process when $\ell$ increase. Finally, we can  see that performance of  \texttt{APH} and \texttt{PIR}  are increased with  $H$  in Fig.  \ref{fig:lll}. It is because more numbers of PCDD ciphertexts cost more energy with the homomorphic and 
module exponential operations.  



\subsubsection{Performance of TPU-based  FPN Computation}
For the basic TPU-based FPN computation, 
there are four factors that affects    performance of LightCom: 
 1) the number of TPUs; 2) the PCDD  parameter $N$;  3) the bit-length  of the integer $\ell$; 4) the number of encrypted data $H$.    The runtime trends of  FPN computation protocols  (e.g.  \texttt{FC},  \texttt{FEQ}, \texttt{FM},  \texttt{FMM},  $\texttt{FMin}_2$, $\texttt{FMin}_H$)   are similar to the trends of corresponding  secure integer computation (e.g.  \texttt{SC},  \texttt{SEQ}, \texttt{SM},  \texttt{SMM},  $\texttt{Min}_2$, $\texttt{Min}_H$), as the runtime of  FPN computation is equal to the runtime of corresponding  secure integer computation add the runtime of \texttt{UNI}.
 

\begin{figure*}[htbp]
    \centering
    \subfigure[Runtime with $\cal{P}$ ($\ell = 32$, $\|N\| =1024, H = 8$)]
    {
        \includegraphics[width=0.23\textwidth]{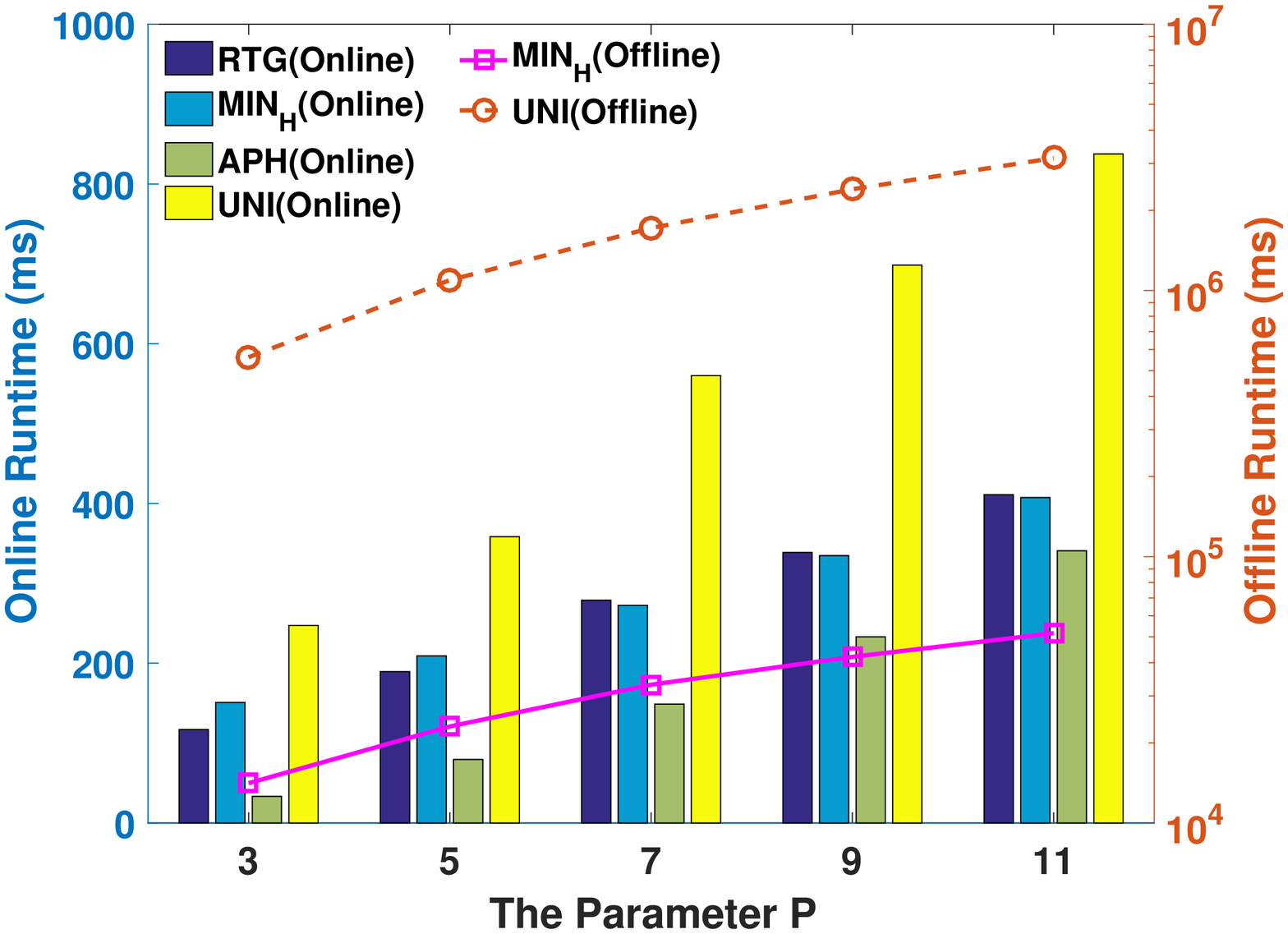}
        \label{fig:aaa}
    }
    \subfigure[Runtime with $\cal{P}$ ($\ell = 32$, $\|N\| =1024, H = 8$)]
    {
        \includegraphics[width=0.23\textwidth]{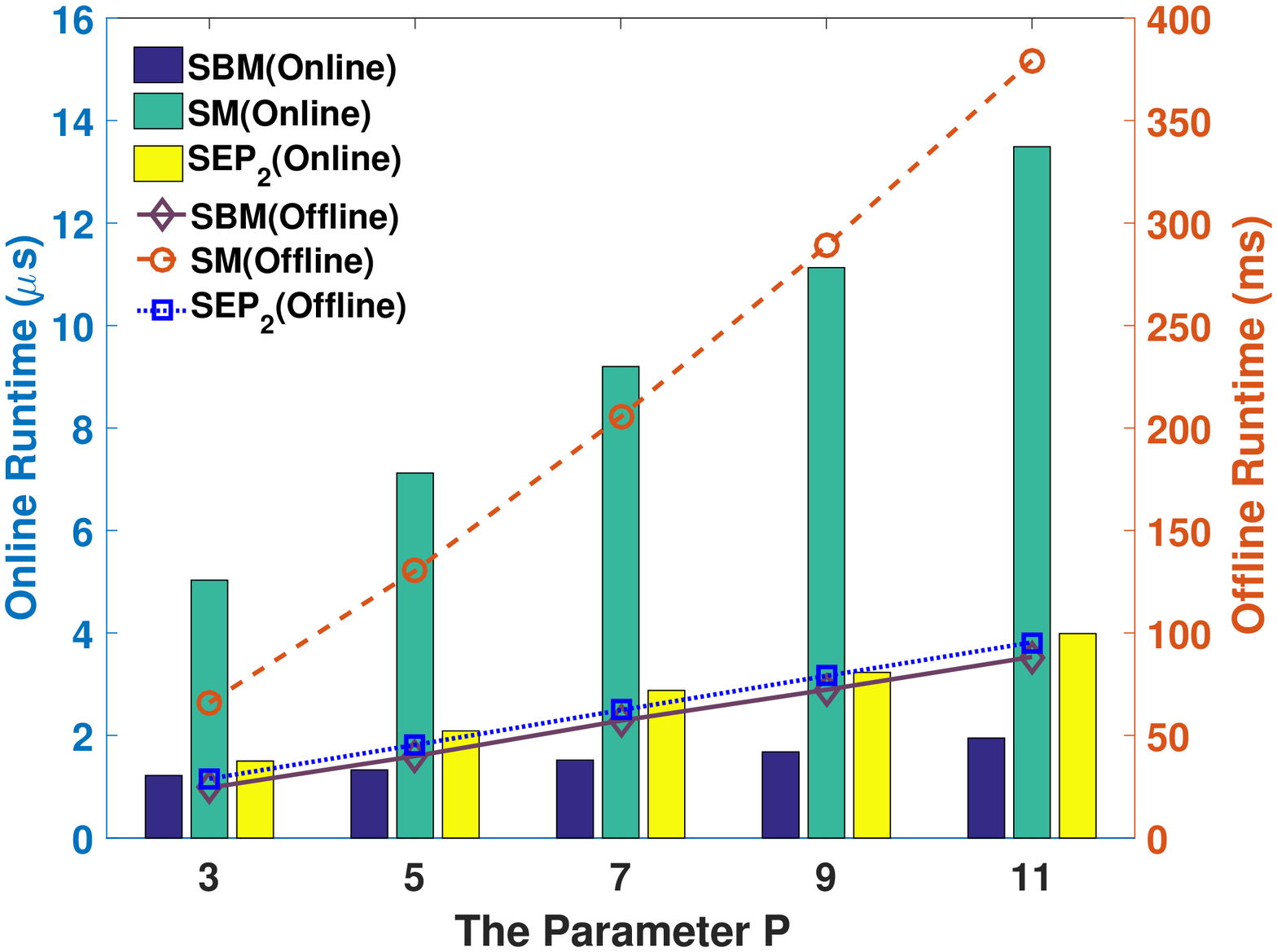}
        \label{fig:bbb}
    }
\subfigure[Runtime with $\cal{P}$ ($\ell = 32$, $\|N\| =1024, H = 8$)]
    {
        \includegraphics[width=0.23\textwidth]{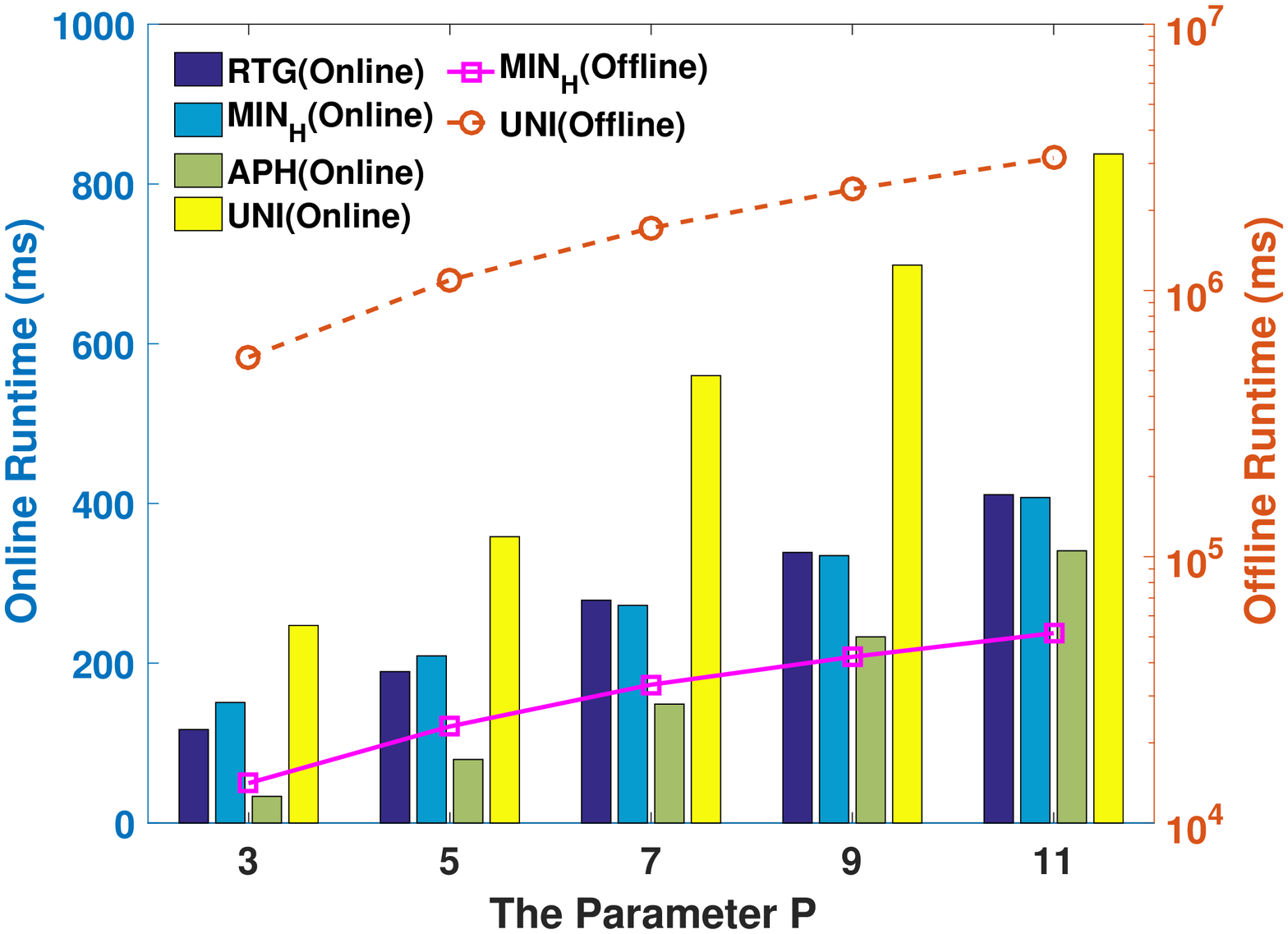}
        \label{fig:ccc}
    }
        \subfigure[Runtime with $\cal{P}$ ($\ell = 32$, $\|N\| =1024, H = 8$)]
    {
        \includegraphics[width=0.23\textwidth]{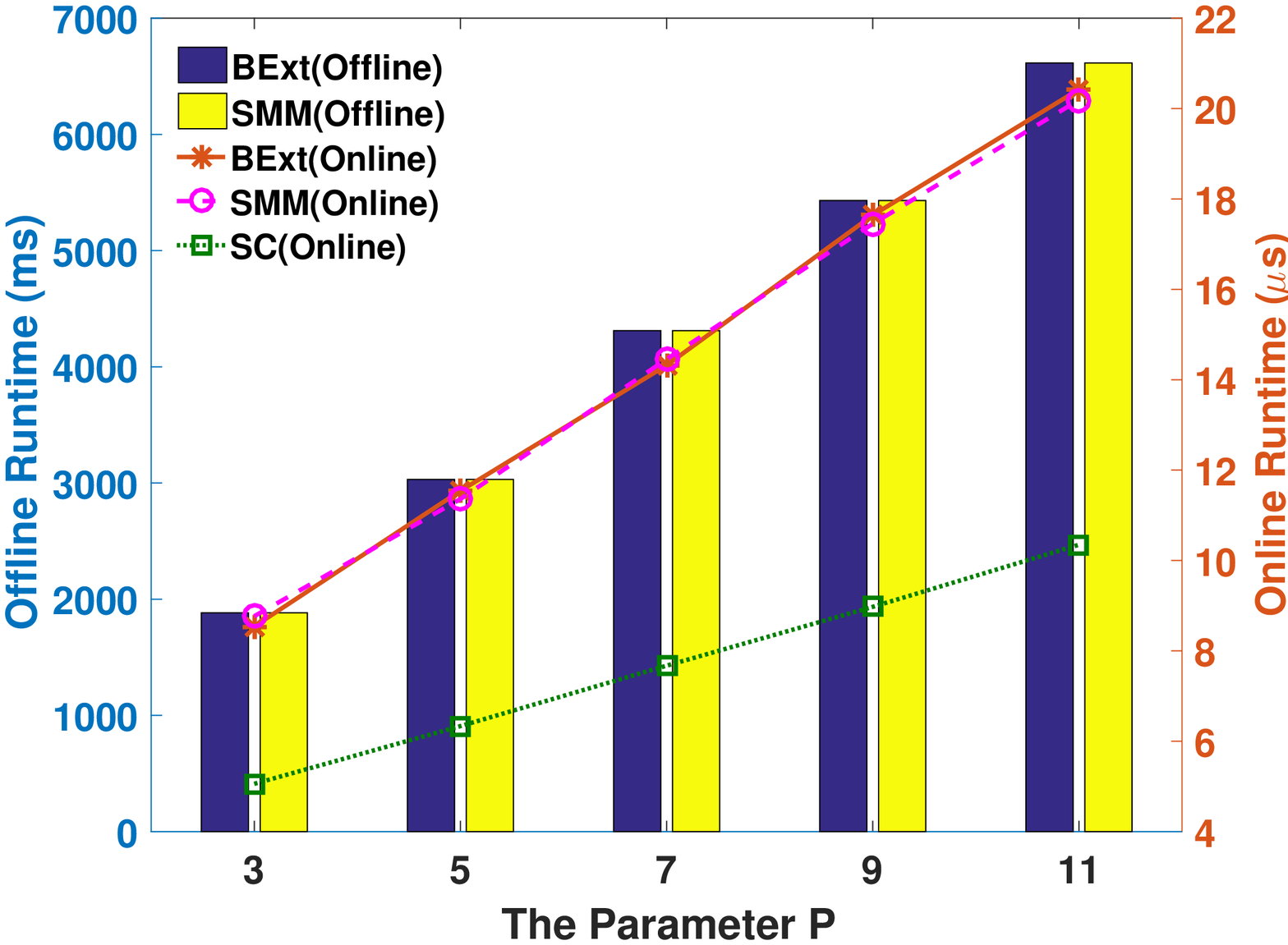}
        \label{fig:ddd}
    }
    \subfigure[Runtime with $\cal{P}$ ($\ell = 32$, $\|N\| =1024, H = 8$)]
    {
        \includegraphics[width=0.23\textwidth]{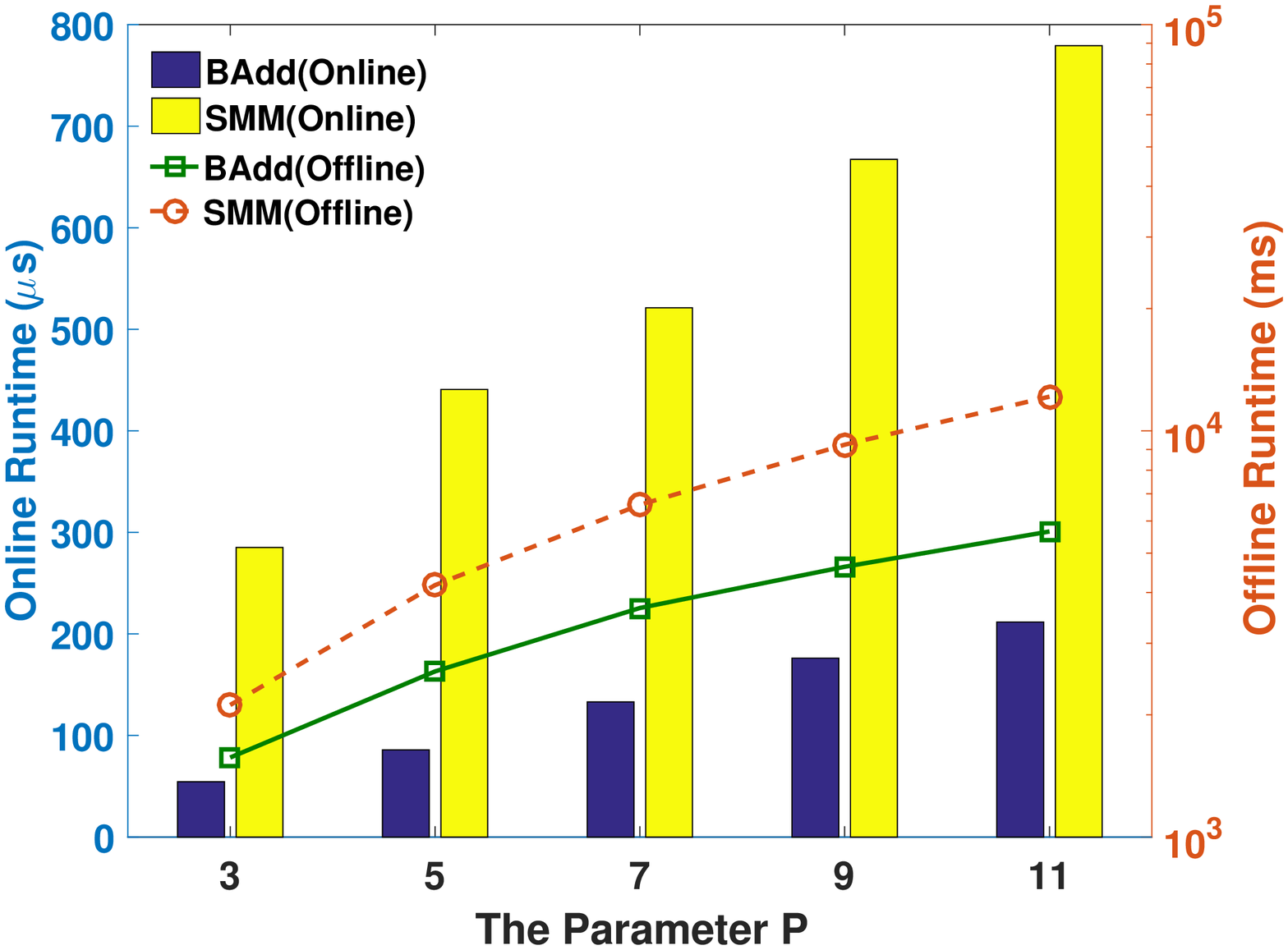}
        \label{fig:eee}
    }
    \subfigure[Runtime with $\ell$ (${\cal{P}} = 3$, $\|N\| =1024, H = 8$)]
    {
        \includegraphics[width=0.23\textwidth]{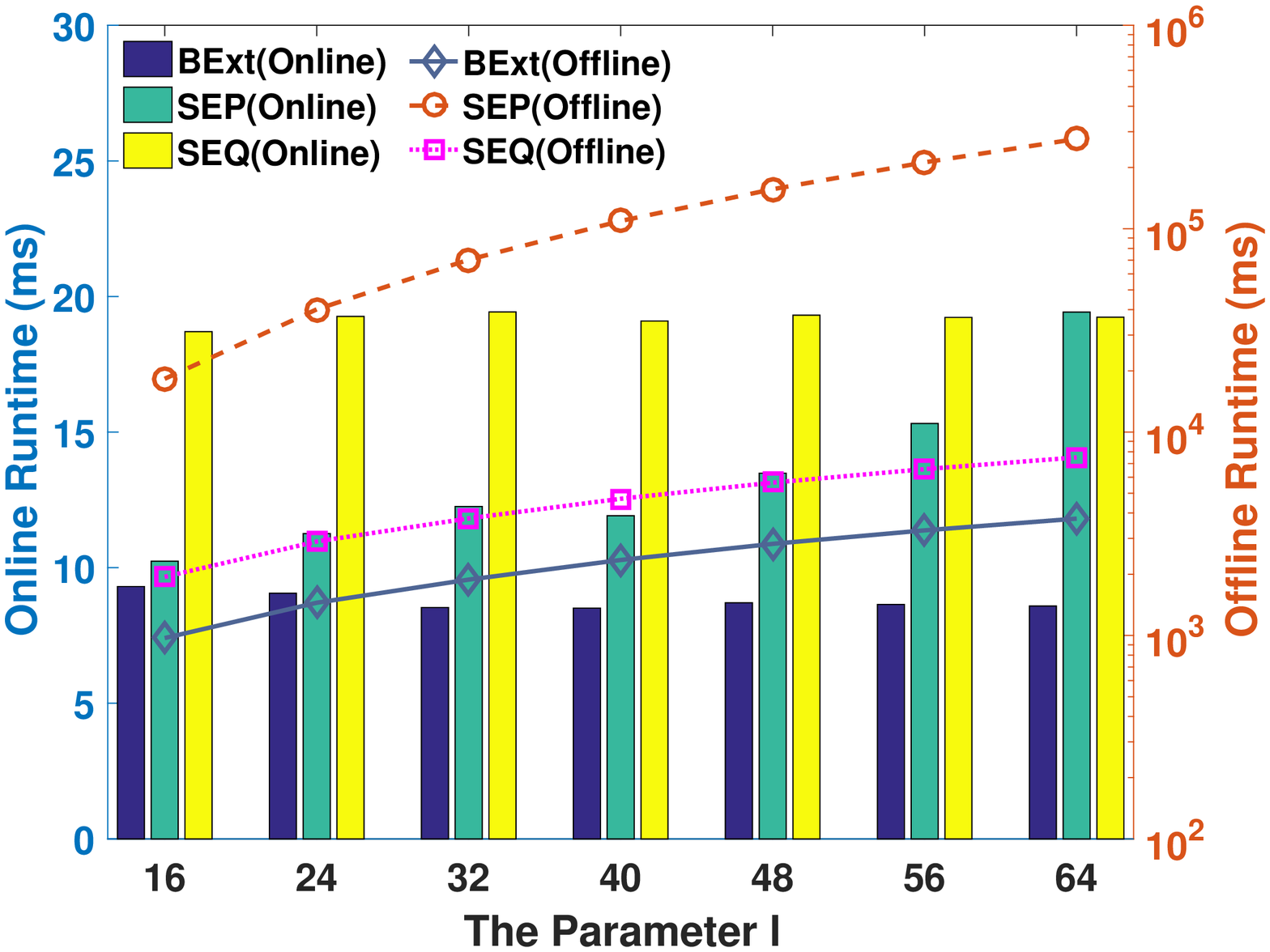}
        \label{fig:fff}
    }
    \subfigure[Runtime with $\ell$ (${\cal{P}} = 3$, $\|N\| =1024, H = 8$)]
    {
        \includegraphics[width=0.23\textwidth]{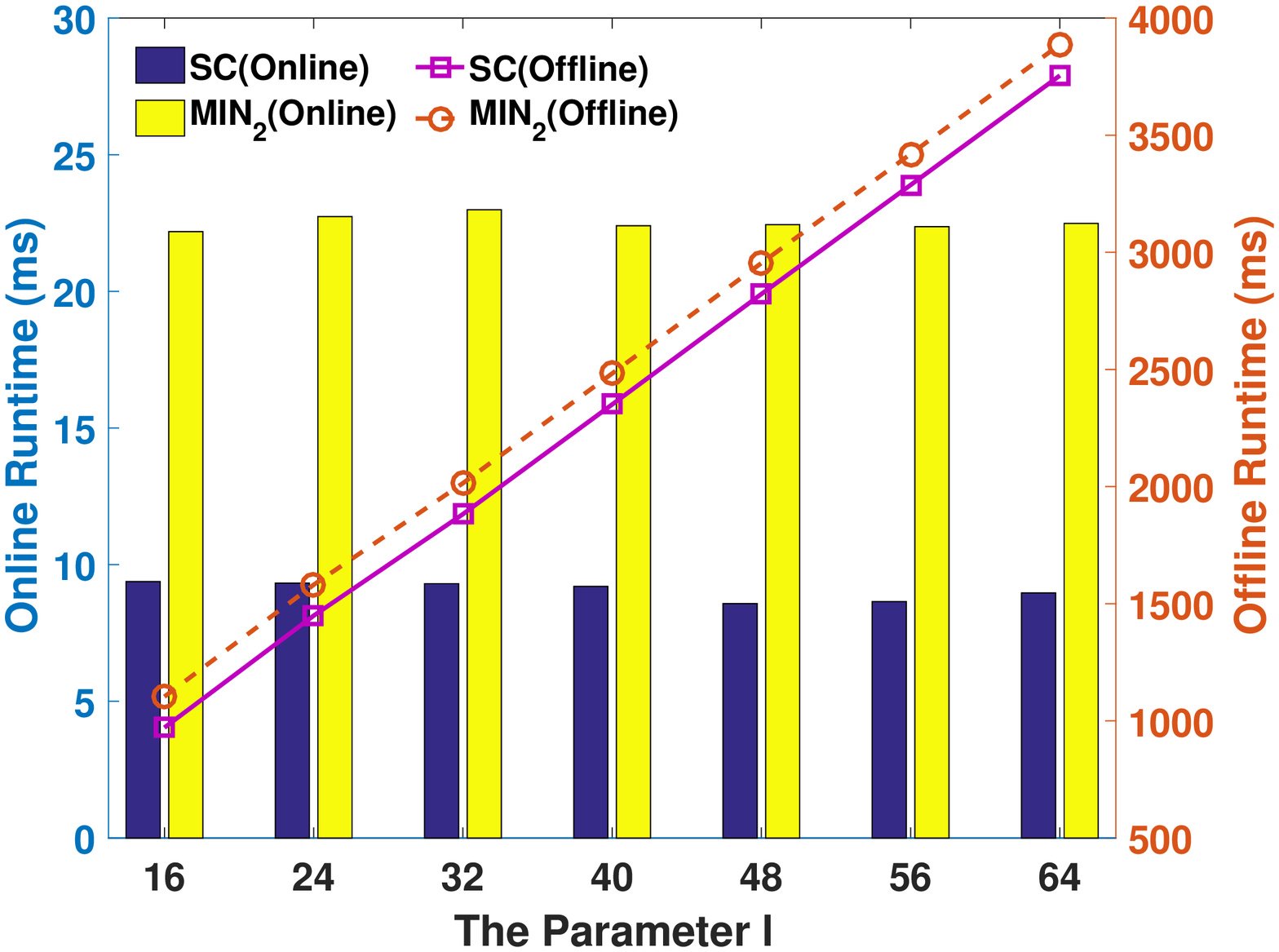}
        \label{fig:ggg}
    }
        \subfigure[Runtime with $\ell$ (${\cal{P}} = 3$, $\|N\| =1024, H = 8$)]
    {
        \includegraphics[width=0.23\textwidth]{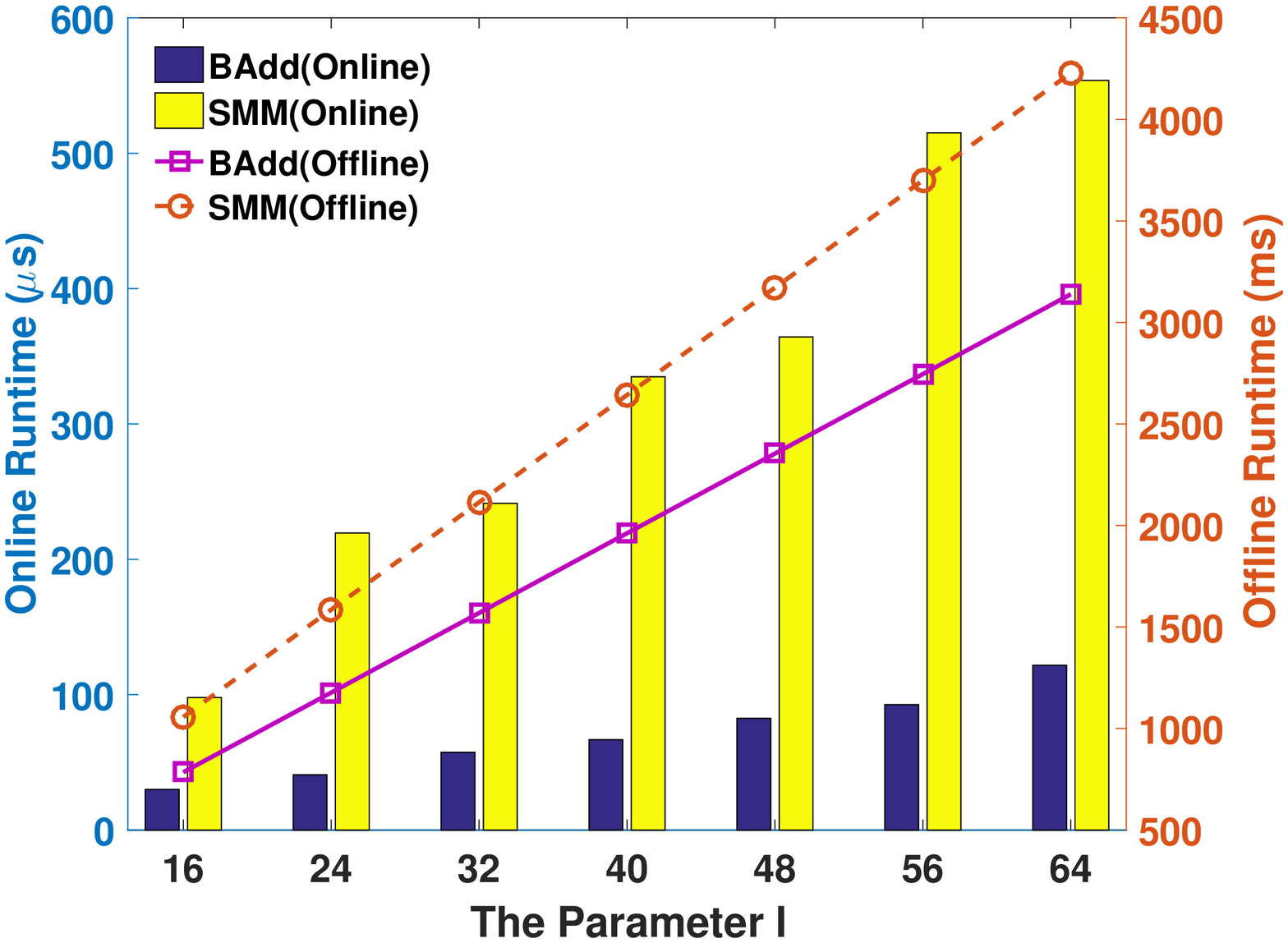}
        \label{fig:hhh}
    }
        \subfigure[Runtime with $\ell$ (${\cal{P}} = 3$, $\|N\| =1024, H = 8$)]
    {
        \includegraphics[width=0.23\textwidth]{FIGV9.eps}
        \label{fig:iii}
    }
    \subfigure[Runtime with $\ell$ (${\cal{P}} = 3$, $\|N\| =1024, H = 8$)]
    {
        \includegraphics[width=0.23\textwidth]{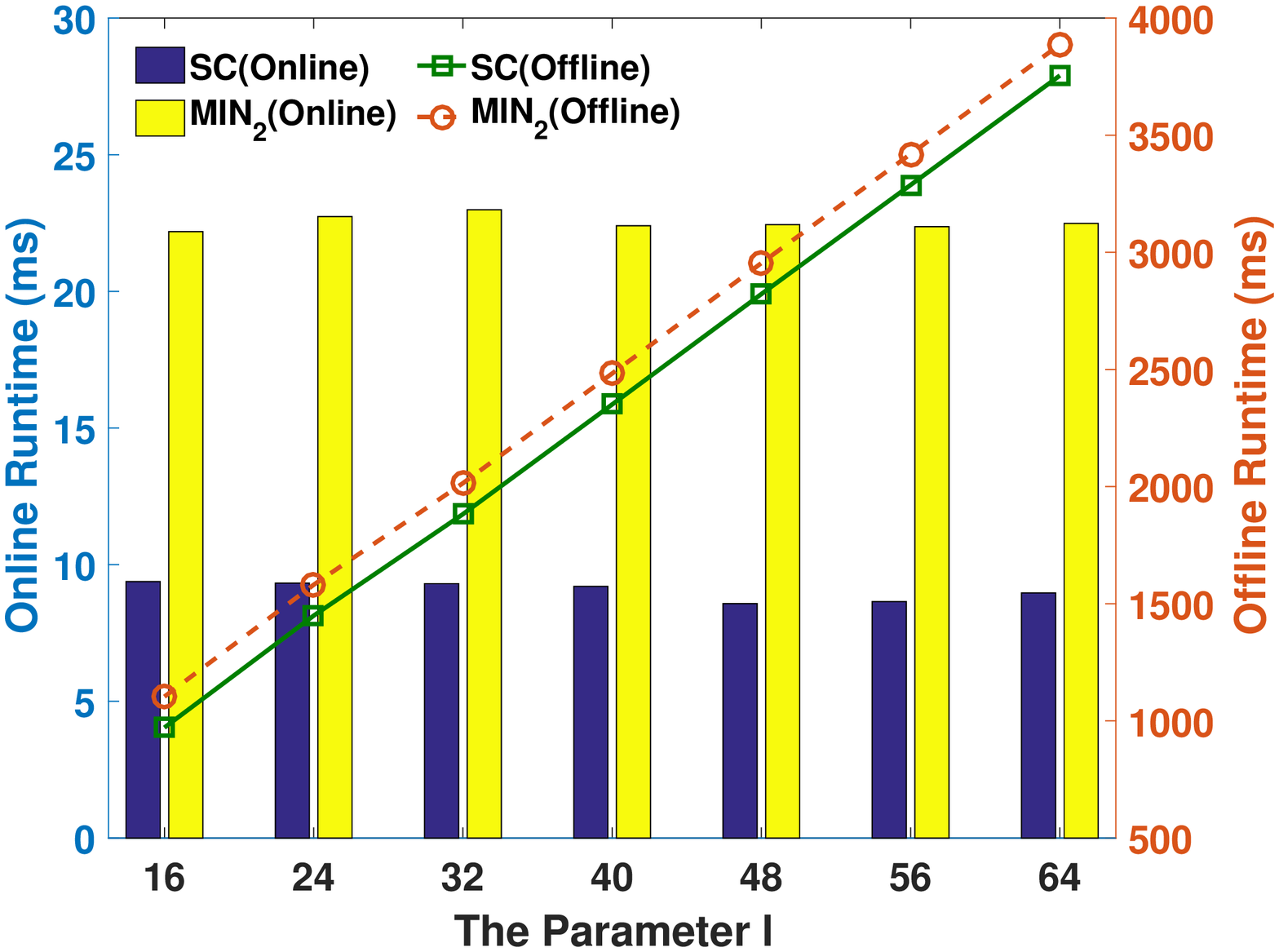}
        \label{fig:jjj}
    }
    \subfigure[Runtime with $\ell$ (${\cal{P}} = 3$, $\|N\| =1024, H = 8$)]
    {
        \includegraphics[width=0.23\textwidth]{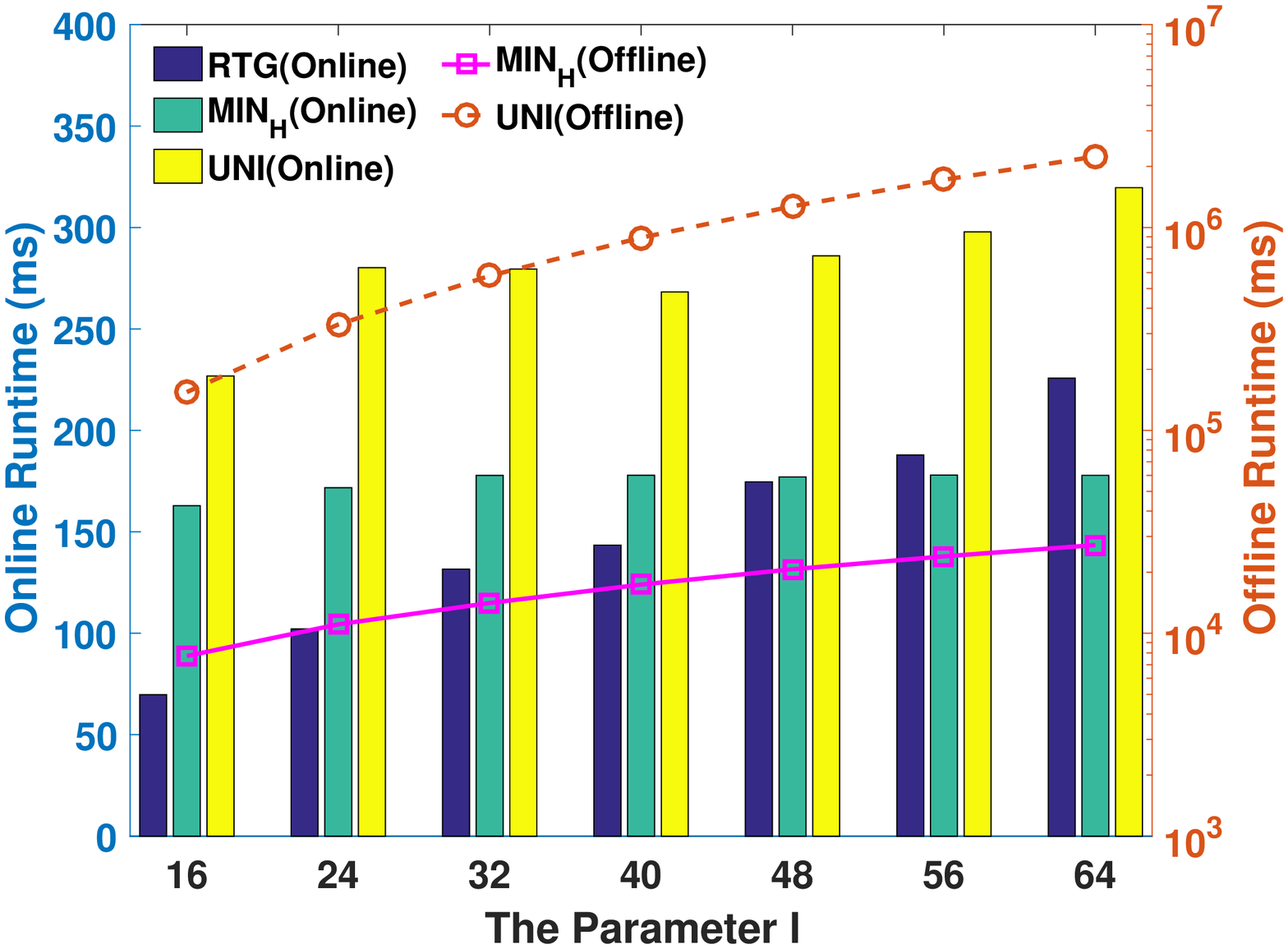}
        \label{fig:kkk}
    }
        \subfigure[Runtime with $H$ (${\cal{P}} = 3$, $\|N\| =1024, \ell = 32$)]
    {
        \includegraphics[width=0.23\textwidth]{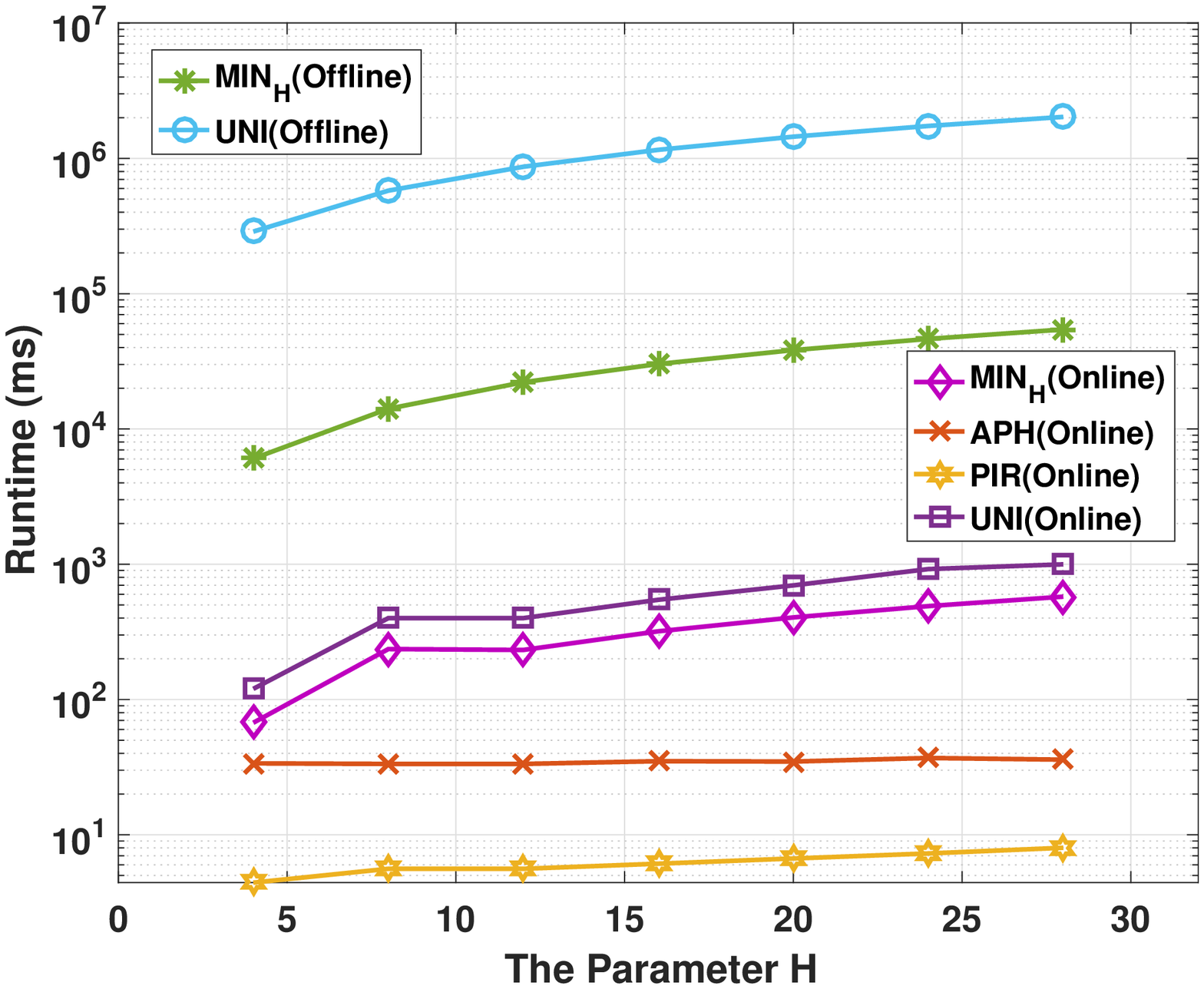}
        \label{fig:lll}
    }

      \caption{Simulation results of LightCom}
    \label{fig:DIM}
\end{figure*}

    \subsection{Theoretical  Analysis}

  Let  us  assume  that  one  regular  exponentiation  operation  with  an  exponent  of $\|N\|$ requires 1.5 $\|N\|$ multiplications \cite{knuth2014art}.  For PCDD, it takes $3\|N\|$ multiplications for \texttt{Enc}, $1.5\|N\|$ multiplications for \texttt{Dec}, $1.5\|N\|$ multiplications for \texttt{PDec}, $\cal{P}$ multiplications for \texttt{TDec}, $1.5\|N\|$ multiplications for \texttt{CR}. For the basic operation of LightCom, it takes $1.5 {\cal{P}} \|N\|$ multiplications to run $\texttt{SDD}$, $3\|N\|+t_{hash}$ multiplications for $\texttt{Seal}$, $1.5{\cal{P}}\|N\|+t_{hash}$ multiplications for $\texttt{UnSeal}$, ${\cal{O}} ((\ell+{\cal{P}})\|N\|)$ multiplications for $\texttt{RTG}$,  ${\cal{O}} ({\cal{P}}\|N\|)$ multiplications for $\texttt{B2I}$, $\texttt{I2B}$. For the integer and binary  protocol in LightCom, it takes ${\cal{O}} ({\cal{P}}\|N\|)$ multiplications for  offline phase of $\texttt{SBM}$ and $\texttt{SM}$, ${\cal{O}} (\ell {\cal{P}}\|N\|)$ multiplications for  offline phase of $\texttt{BAdd}$, $\texttt{BExt}$, $\texttt{SC}$, $\texttt{SEQ}$, $\texttt{Min}_2$,  ${\cal{O}} (\ell {\cal{P}}\|N\|)$ multiplications for both offline and online phase of $\texttt{SEP}$,   ${\cal{O}} (H {\cal{P}}\|N\|)$ multiplications for   offline     phase of $\texttt{APH}$ and $\texttt{PIR}$,  ${\cal{O}} (\lceil \log_2H \rceil \cdot \ell {\cal{P}}\|N\|)$ multiplications for   offline     phase of $\texttt{Min}_H$. 
  For the FPN computation in LightCom, it takes ${\cal{O}} (H \ell {\cal{P}}\|N\|)$ multiplications for  offline phase $\texttt{UNI}$ and $\texttt{FAdd}$,  ${\cal{O}} (\ell {\cal{P}}\|N\|)$ multiplications for offline phase $\texttt{FM}$,  $\texttt{FMM}$, $\texttt{FC}$,  $\texttt{FEQ}$,  $\texttt{FMin}_2$,   and ${\cal{O}} (\lceil \log_2H \rceil \cdot \ell {\cal{P}}\|N\|)$ multiplications for   offline     phase of $\texttt{FMin}_H$.       All the above protocols only need   ${\cal{O}}(1)$ multiplications in online phase, which is greatly fit for fast processing.

\section{Related Work}

\textit{Homomorphic Encryption}. Homomorphic encryption, allow third-party to do the computation on the ciphertext which reflected on the plaintext, is considered as the best solution to achieve the secure outsourced computation. The first construction of fully homomorphic encryption was proposed by Gentry in 2009 under the ideal lattices, which permits evaluation
of arbitrary circuits over the plaintext \cite{gentry2009fully}. Later, some of the new hard problems (such as Learning With Errors (LWE) \cite{brakerski2014efficient},  Ring-LWE  \cite{brakerski2014leveled}) are used to construct the FHE which can greatly reduce the storage overhead and increase the performance of the homomorphic operations \cite{chillotti2016faster, liu2018privacy}. 
However, the current FHE solutions and libraries  are still not practical enough for the real real-world scenarios \cite{doroz2015accelerating, liu2017privacy}. 
Somewhat homomorphic encryption \cite{damgaard2012multiparty, fan2012somewhat} can allow semi-honest  third-party to achieve the  arbitrary circuits with limited depth. The limited times of homomorphic operations are  restrict the   usage scope   of the    application.  Semi-homomorphic encryption (SHE) can only support additive \cite{paillier1999public} (or multiplicative \cite{elgamal1985public})  homomorphic operation. However, with the help of the extra semi-honest computation-aid server,  a new computation framework can be constructed to achieve commonly-used secure  rational number computation \cite{liu2018efficientTDSC}, secure multiple keys computation \cite{peter2013efficiently}, and floating-point number computation \cite{liu2016privacy}. The new framework can greatly balance the security and efficiency concerns, however, the extra server will still complex the system which brings more risk of information leakage.

\textit{Secret Sharing-based Computation}. The user's data in secret sharing-based (SS-based)  computation
 are separated into multiple shares with the secret sharing technique, and each shares are located in one server to guarantee the security.    Multiple parties can jointly together to securely achieve a computation without leaking the original data to the adversary.  
  Different from the heavyweight  homomorphic operation, the SS-based computation \cite{cramer2000general,chen2006algebraic,chida2018fast} can achieve the lightweight computation. Despite the theoretical construction, many real-word computation are constructed for practical usage, such as SS-based  set intersection \cite{dong2013private} and    top-$k$ computation \cite{burkhart2010fast}.  These basic computations can be used to  solve data security problem in data mining technique, such as deep learning \cite{huang2019lightweight}. 
  Emek{\c{c}}i  \textit{et al.} \cite{emekcci2007privacy} proposed a secure ID3 algorithm to construct a decision tree in a privacy-preserving manner. 
  Ma \textit{et al.} \cite{8668517} constructed a lightweight privacy-preserving    adaptive boosting (AdaBoost) for the face recognition. The new secure natural exponential and secure natural logarithm which can securely achieve the corresponding computation computation to balance accuracy and efficiency. Although many of the privacy-preserving  data mining techniques with secret sharing  are constructed \cite{ge2010privacy,gheid2016efficient},  the SS-based computation still need to build secure channel among these parties. Moreover, the   high communication rounds among the computation parties   still   become an obstacle for a large-scale application.
  
\textit{Intel$^\circledR$ Software Guard Extensions.}
Intel$^\circledR$ SGX is a kind of TEE which provides strong hardware-enforced confidentiality and integrity guarantees and protects  an application form the host OS, hypervisor, BIOS, and other software. 
Although an increasingly number of  real-world industry applications are securely executed in the untrusted remote platforms equipped with SGX, the SGX  still faces side-channel attack to expose the information during the computation.
G\"otzfried \textit{et al.} \cite{gotzfried2017cache} proposed a new attack called  root-level cache-timing attacks which can   obtain secret information
from an Intel$^\circledR$ SGX enclave. Lee \textit{et al.} \cite{lee2017inferring} gave a new side-channel attack cannled branch shadowing which reveals fine-grained control flows  in a SGX enclave.
Bulck  \textit{et al.} \cite{van2017telling}
constructed  two novel attack vectors that infer enclaved memory accesses. Chen  \textit{et al.} \cite{chen2018sgxpectre} presented  a new attack call SGXPECTRE   that can learn secrets
inside the enclave memory or its internal registers.  Currently, three types of  solutions are used to protect the side-channel attack: hardware method \cite{domnitser2012non, costan2016sanctum}, system method \cite{liu2016catalyst, zhou2016software}, and application method \cite{coppens2009practical, shih2017t}. 
These   methods can only  guarantee some dimension of  protection, and cannot be used for all-directional protection even against the unknown side-channel attack. 

   \begin{table*}[htbp]
      \begin{minipage}{\textwidth}
\caption{Comprehensive Comparison with the existing works}
\label{table1254}
\centering
 \renewcommand\arraystretch{0.75}
\begin{tabular}{cccccccccc}
 \hline
 Function/Algorithm  &  \cite{liu2018efficientTDSC} &  \cite{liu2016efficientx}   & \cite{liu2016privacy} & \cite{peter2013efficiently} &  \cite{samanthula2014k} & \cite{liu2018privacy}    & \cite{brakerski2014leveled} & \cite{dong2013private}    \\
\hline
Method& PHE & PHE & PHE & PHE &PHE &FHE &FHE & OT+SS\\
 User-side Non-interactive     &   \checkmark & \checkmark &  \checkmark & \checkmark & $\times$ & \checkmark &  \checkmark   & $\times$      \\ 
 Communication  Round (User)   & 1 & $1$ & $1$  & $1$  & ${\cal{O}}(1)$ & $1$  &  $1$  & ${\cal{O}}(n)$  \\
Against Side-channel Attack & \checkmark & \checkmark & \checkmark &   \checkmark  &    \checkmark  & \checkmark & \checkmark & \checkmark  \\
    Data Storage Server & One    & One & One & One & One &  One    &  One   &  One   \\
    Minimum  Number of  Servers  & Multiple & Two & Two & Two & Two & One& One& One  \\
   Function Type & Specific &  Specific  &  Specific  & Specific &  Specific  & Specific    & Linearly & Intersection     \\
     Multiple  Data Format &  \checkmark   & $\times$ & \checkmark& $\times$ & $\times$ & $\times$ & $\times$ & $\times$   \\
 Without Non-colluded Servers  & $\times$ & $\times$ &  $\times$ & $\times$ & $\times$ & \checkmark & \checkmark & \checkmark  \\
  Without TTP  & $\times$ & $\times$ &  $\times$ &  $\times$  & \checkmark & $\times$ & $\times$& \checkmark  \\
   Support Multiple Keys  & $\times$ & \checkmark &  $\times$ & \checkmark & $\times$ & \checkmark& $\times$  & $\times$ \\
   Server-Side Overhead  & Middle & Middle &  Middle & Middle & Middle & High & High & Middle  \\
 \hline
 \hline
 Function/Algorithm & \cite{burkhart2010fast} &  \cite{emekcci2007privacy}   & \cite{huang2019lightweight} &  \cite{8668517}   & \cite{shaon2017sgx} & \cite{kuccuk2016exploring}  & \cite{chandra2017securing} & Our       \\
\hline
Method& SS & SS & SS & SS &TEE &TEE &TEE & TEE+SS+PHE\\
 User-side Non-interactive     &   $\times$ &  $\times$ &  \checkmark & \checkmark & \checkmark & \checkmark &  \checkmark   & \checkmark     \\ 
 Communication  Round (User)   & ${\cal{O}}(kn^2)$ & ${\cal{O}}(n)$ & $1$  & $1$  & $1$ & $1$  &  $1$  & $1$  \\
Against Side-channel Attack & \checkmark & \checkmark & \checkmark &   \checkmark  &    $\times$  & $\times$ & \checkmark & \checkmark  \\
    Data Storage Server & Multiple    & Multiple & Two & Two & One &  One    &  One   &  One   \\
    Minimum  Number of  Servers  & Multiple & Two & Two & Two & One & One& One& One  \\
   Function Type & Top-$k$ &  Addition  &  Specific  & Adaboost &  Matrix  & Specific    & Specific & Generic \& Specific     \\
     Multiple  Data Format &   $\times$   & $\times$ & $\times$& $\times$ & $\times$ & $\times$ & $\times$ & \checkmark   \\
 Without Non-colluded Servers  & $\times$ & $\times$ &  $\times$ & $\times$ & $\times$ & \checkmark & \checkmark & \checkmark  \\
  Without TTP  & \checkmark & \checkmark &  $\times$ &  $\times$  & \checkmark & \checkmark & \checkmark & \checkmark  \\
   Support Multiple Keys  & \checkmark & \checkmark &  \checkmark & \checkmark & $\times$ &  $\times$  & $\times$  & \checkmark \\
   Server-Side Overhead  & Low  & Low &  Low & Low & Low & Low & Low & Low  \\ \hline
   \multicolumn{9}{c}{\textbf{Note:} In the table, `PHE' is short for  `Partially Homomorphic Encryption', `OT' is short for `Oblivious Transfer',}\\
       \multicolumn{9}{c}{`SS' is short for `Secret Sharing', TEE is short for 'Trusted Execution Environment'.}\\
\end{tabular}
\end{minipage}
\end{table*}
  
\section{Conclusion}
\label{sec:Conclusion}
In this paper, we   proposed LightCom, a framework for practical   privacy-preserving      outsourced  computation framework, which allowed a user to outsource encrypted data to a single cloud service provider for securely data storage  and   process.  We designed two types of   outsourced computation toolkits  which can securely guarantee the achieve secure integer computation and floating-point computation against side-channel attack. 
  The utility and performance  of our LightCom framework  was then demonstrated using simulations. Compared with the existing  secure outsourced computation framework, our LightCom takes 
  fast, scalable, and  secure outsourced  data processing  into account.

As a future research effort, we plan to apply our   LightCom in a specific applications, such as e-health cloud system. It allows us to refine the framework to handle more complex real-world computations.

 \section*{Acknowledgment}

 The work   is supported by the National Natural Science Foundation of China (Grant No.61702105, No.61872091).

\ifCLASSOPTIONcaptionsoff
  \newpage
\fi



%
\bibliographystyle{IEEEtran}
\bibliography{myref}

\begin{IEEEbiography}[{\includegraphics[width=1in,height=1.25in,clip,keepaspectratio]{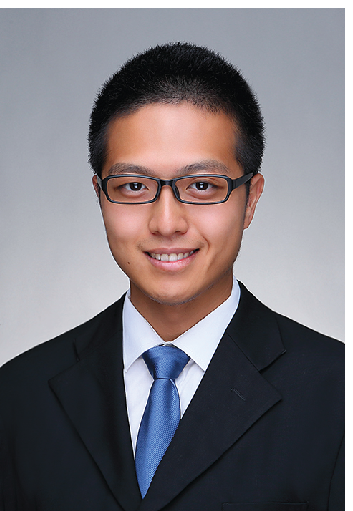}}]{Ximeng Liu}(S'13-M'16)
received the B.Sc. degree in electronic engineering from Xidian University, Xi’an, China, in 2010 and the Ph.D. degree in Cryptography from Xidian University, China, in 2015. Now he is the full professor in the College of Mathematics and Computer Science, Fuzhou University. Also, he is a research fellow at the School of Information System, Singapore Management University, Singapore. He has published more than 100 papers on the topics of cloud security and big
data security—including papers in IEEE Transactions on Computers, IEEE Transactions on Industrial Informatics, IEEE Transactions on Dependable and Secure Computing, IEEE Transactions on Service Computing, IEEE Internet of Things Journal, and so on.  He awards “Minjiang Scholars” Distinguished Professor, “Qishan Scholars” in Fuzhou University, and ACM SIGSAC China Rising Star Award (2018). His research interests include cloud security, applied cryptography and big data security. He is a member of the IEEE, ACM, CCF.
\end{IEEEbiography}

  \begin{IEEEbiography}[{\includegraphics[width=1in,height=1.25in,clip,keepaspectratio]{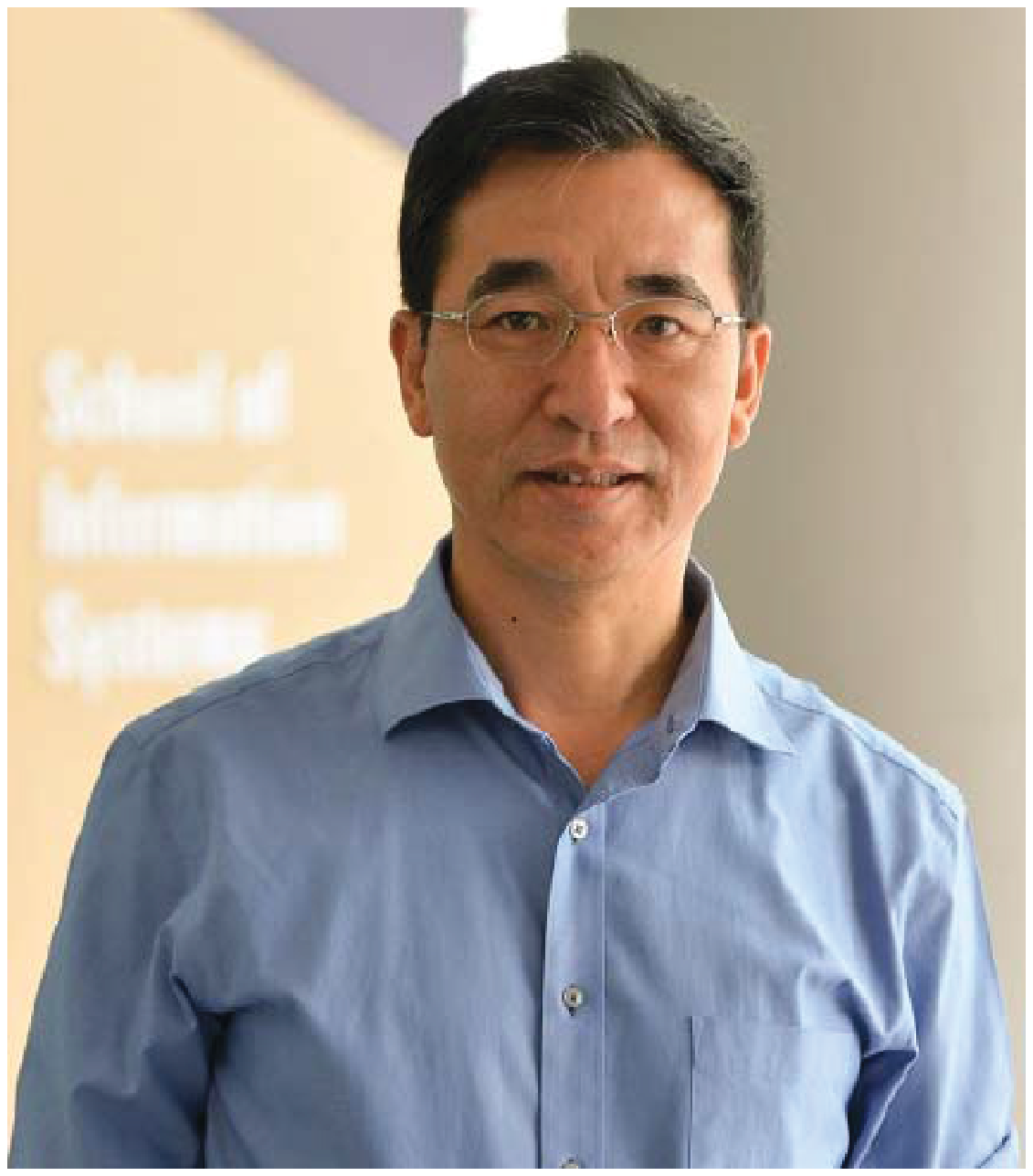}}]{Robert H. Deng} (F'16)
 is AXA Chair Professor of Cybersecurity and Professor of Information Systems in
the School of Information Systems, Singapore
Management University since 2004. Prior to this,
he was a principal scientist and a manager of
Infocomm Security Department, Institute for
Infocomm Research, Singapore.
His research
interests include data security and privacy,
multimedia security, network and system security.
He served/is serving on the editorial boards of many international journals, including the IEEE
Transactions on Information Forensics and
Security, and
IEEE Transactions on Dependable and Secure Computing.  
\end{IEEEbiography}

  \begin{IEEEbiography}[{\includegraphics[width=1in,height=1.25in,clip,keepaspectratio]{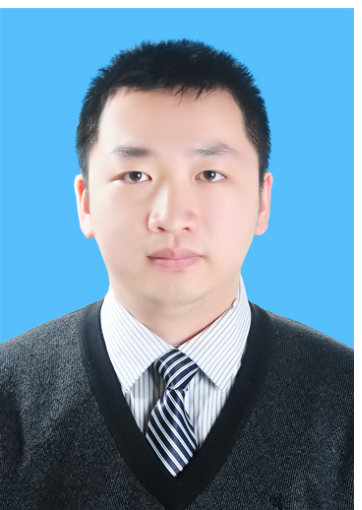}}]{Pengfei Wu} 
 received the B.Sc. degree in software engineering from Shandong University, Jinan, China, in 2016. He is currently pursuing the Ph.D. degree of Software Engineering in Peking University, Beijing, China. His research interests include cloud security and big data security.
\end{IEEEbiography}

  \begin{IEEEbiography}[{\includegraphics[width=1in,height=1.25in,clip,keepaspectratio]{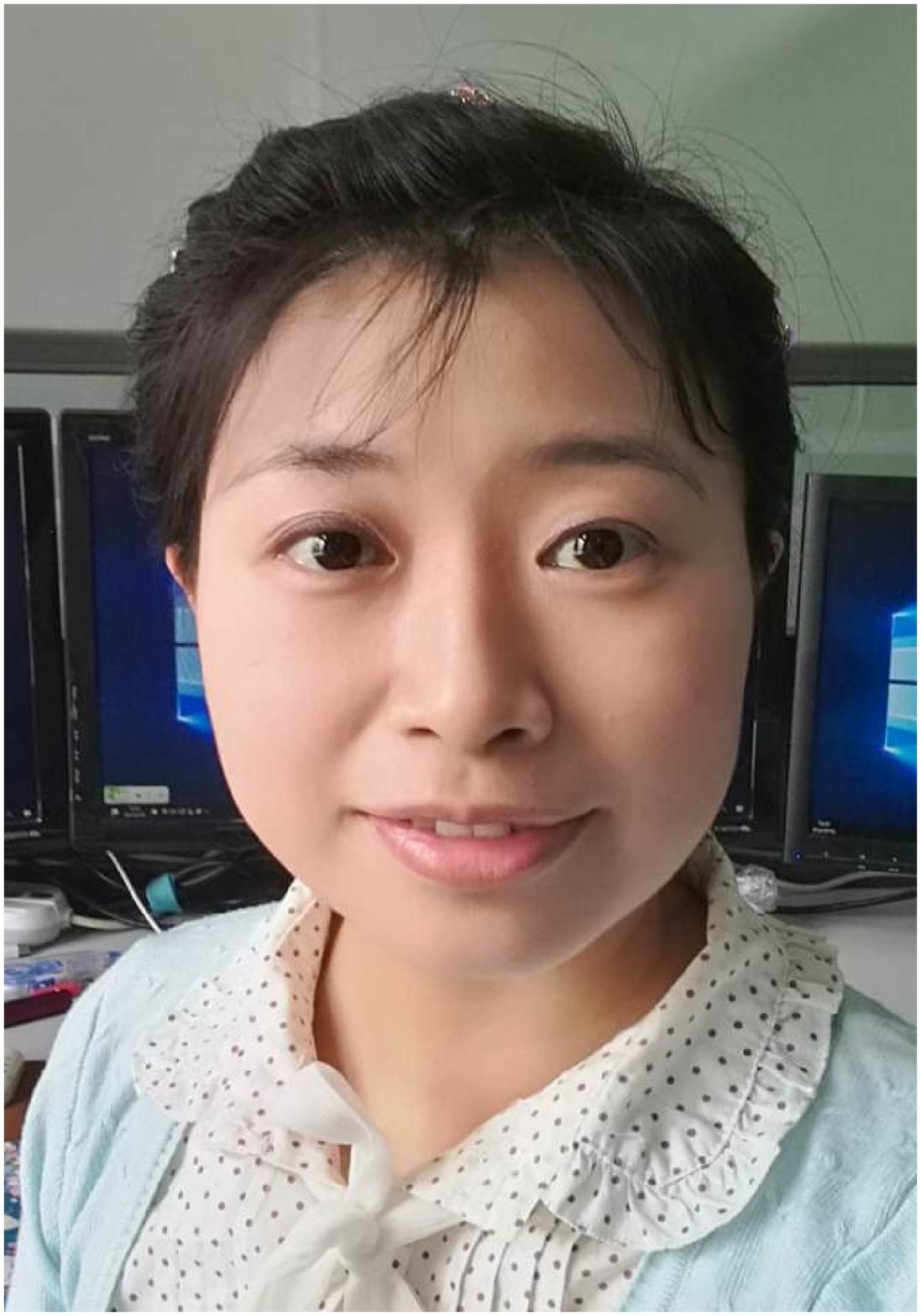}}]
{Yang Yang} (M'17) received the B.Sc. degree from Xidian University, Xi'an,
China, in 2006 and Ph.D. degrees from Xidian University, China, in
2012. She is a research fellow (postdoctor) under supervisor Robert
H. Deng in School of Information System, Singapore Management
University. She is also an associate professor in the college of
mathematics and computer science, Fuzhou University. Her research
interests are in the area of information security and privacy
protection.
    \end{IEEEbiography}

   \section*{Supplementary materials}
    \subsection*{A. Deterministic Turing Machines}
    
    Turing machines are a model of computation which anything   can
be computed that can be computed by a Turing Machine.  As all the function in our PVOA framework can be computed by Deterministic Turing Machine. 
The   rigorous definition is 
defined as follows
    
        \begin{definition}[Deterministic Turing Machine]
        
        A Deterministic Turing Machine contains  a tuple $(Q, \Sigma, \delta, s, h)$ where
1) $Q$ is a finite set of states which contains the states $s, q_{acc}, q_{rej}$.
2) $\Sigma$ is a finite alphabet which contains the symbol \#.
 3) Transition function $\delta: Q -\{q_{acc}, q_{rej}\} \times \Sigma \rightarrow Q  \times \Sigma \cup \{R, L\}$, where $L$ is left shift, $R$ is right shift.
4) $s\in Q $ is the start state, $q_{acc}$ is the accept state, $q_{rej}$ is the reject state.
  \end{definition}

  Suppose  $M$ is a deterministic Turing machine that halts on all inputs. Time complexity function $T_M: \mathbb{N} \rightarrow \mathbb{N}$ is defined as
  \begin{align}
   T_M (n) = \max \{m | \exists w \in \Sigma^*
,|w| = n  \notag\\  \text{such that the computation
of} \ M \ \text{on} \  w \ \text{takes} \ m  \ \text{moves}\}, \notag
      \end{align}
    where numbers are coded in binary format. 
   We call  a Turing machine is \textit{polynomial} if there exists a polynomial $p(n)$, such that  $T_M (n) \leq p(n)$, for all $n \in \mathbb{N}$.

   \subsection*{B. Hard Problem} 

\begin{assumption}$($DDH assumption over $ {\mathbb{Z}}_{N^2}^*$ \cite{DBLP:conf/asiacrypt/BressonCP03}$)$.
For every probabilistic
polynomial time algorithm $\cal{A}$, there exists a negligible function $negl(\cdot)$ such that for
sufficiently large $l$.
\end{assumption}
$$
 \Pr\left[
\begin{array}{l}{\cal{A}}(N,X,Y, \\Z_b \mod N)\\ =b\\\end{array}:\begin{array}{l}
pp\leftarrow Sys(l/2)\\
N=pq, g \leftarrow  \mathbb{G} \\
x,y,z \leftarrow [1, ord( \mathbb{G} )]\\
X=g^x \mod N^2\\
Y=g^y \mod N^2\\
Z_0=g^z \mod N^2\\
Z_1=g^{xy} \mod N^2\\
b \leftarrow \{0,1\}\\
\end{array}
\right] -\frac{1}{2}= negl(l).
$$
\
\begin{theorem}
Let $N$ be a composite modulus product of two large primes.
Let $\mathbb{G}$
be the cyclic group of quadratic residues modulo $N^2$. The  decisional Diffie-Hellman problem over ${\mathbb{Z}}_{N^2}^*$  (in $\mathbb{G}$)
cannot be harder than factoring.
\end{theorem}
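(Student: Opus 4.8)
The plan is to read ``cannot be harder than factoring'' as the existence of a probabilistic polynomial-time reduction that decides the DDH problem in $\mathbb{G}$ once it is handed the factorization of $N$ (equivalently, given access to a factoring oracle). Concretely, on input a challenge tuple $(g, A, B, C)$ with $A = g^x$, $B = g^y$ and $C$ equal either to $g^{xy}$ or to $g^z$ for uniformly random $z$, I would first invoke the factoring oracle to recover $N = pq$ and then compute the order $M = |\mathbb{G}|$. The only property of $M$ I will need is that $N \mid M$; this holds because the order-$N$ Paillier subgroup $K = \langle 1+N\rangle$ has odd order $N$, hence consists entirely of quadratic residues, and therefore lies in $\mathbb{G}$.

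The heart of the reduction is that, although the factors of $N$ are large, discrete logarithms inside $K$ are computed by a single division. I would project the whole challenge into $K$ by raising to the power $M/N$, setting $\gamma = g^{M/N}$, $\bar A = A^{M/N}$, $\bar B = B^{M/N}$ and $\bar C = C^{M/N}$; because $g$ generates the cyclic group $\mathbb{G}$ of order $M$, the element $\gamma$ has order exactly $N$ and generates $K$. Every element of $K$ equals $(1+N)^a \equiv 1 + aN \pmod{N^2}$, so the Paillier map $L(u) = (u-1)/N \bmod N$ recovers base-$(1+N)$ logarithms in $K$ in polynomial time. Applying $L$ yields residues $a_g = L(\gamma)$, $a_A = L(\bar A)$, $a_B = L(\bar B)$ and $a_C = L(\bar C)$ modulo $N$, which satisfy $a_A \equiv a_g x$, $a_B \equiv a_g y$ and $a_C \equiv a_g c \pmod{N}$, where $c$ is the exponent of $C$ in base $g$.

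The distinguisher then checks the single congruence $a_g \cdot a_C \equiv a_A \cdot a_B \pmod{N}$ and outputs ``Diffie--Hellman tuple'' exactly when it holds. Since $\gamma$ generates $K$, the residue $a_g$ is a unit modulo $N$, so this congruence is equivalent to $c \equiv xy \pmod{N}$. When $C = g^{xy}$ the test passes with certainty; when $C = g^z$ with $z$ uniform modulo $M$, the residue $z \bmod N$ is uniform over $\mathbb{Z}_N$ and coincides with $xy \bmod N$ with probability only $1/N$, which is negligible. Every step apart from the call to the factoring oracle runs in polynomial time, which yields the claimed reduction from DDH to factoring.

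The step I expect to require the most care is the group-theoretic bookkeeping behind the projection: verifying, under the precise definition of the cyclic quadratic-residue group, that $N \mid M$ and that $K = \langle 1+N\rangle$ genuinely sits inside $\mathbb{G}$, so that $\gamma = g^{M/N}$ is a bona fide generator of $K$. This rests on the CRT decomposition $\mathbb{Z}_{N^2}^* \cong \mathbb{Z}_{p^2}^* \times \mathbb{Z}_{q^2}^*$ and on the fact that the order-$N$ Paillier subgroup has odd order and is therefore composed of quadratic residues; the distinguishing analysis itself is routine. If this explicit structural argument proves delicate for the group as set up, I would instead invoke the DDH analysis of \cite{DBLP:conf/asiacrypt/BressonCP03} directly and present the projection to the Paillier subgroup only as the operative intuition.
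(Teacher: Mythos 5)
Your reduction is correct, but be aware that the paper does not actually prove this theorem: its entire ``proof'' is the single sentence deferring to \cite{DBLP:conf/asiacrypt/BressonCP03}, so there is no in-paper argument to match step by step. What you have written is, in substance, the canonical ``partial discrete logarithm'' argument from that reference, reconstructed correctly: the quadratic-residue group $\mathbb{G}$ has order $M = N p' q'$ with $N \mid M$; its unique subgroup of order $N$ is the Paillier subgroup $\langle 1+N\rangle$, which lies in $\mathbb{G}$ because it has odd order; discrete logarithms there are computable in polynomial time via $L(u) = (u-1)/N$; and projecting the challenge by raising to the power $M/N$ (computable once the factorization, hence $p'q'$, is known) reduces the DDH test to the single congruence $a_g\, a_C \equiv a_A\, a_B \pmod{N}$, which distinguishes with advantage $1 - 1/N$. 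The group-theoretic bookkeeping you flag as the delicate step does check out, with two caveats worth recording. First, the argument needs $\mathbb{G}$ to be cyclic so that $g^{M/N}$ generates the order-$N$ subgroup; the theorem statement grants this, and it holds for the paper's safe-prime moduli since $\gcd(pp', qq') = 1$. Second, the paper's own formalization of the assumption hands the adversary ``$Z_b \bmod N$,'' which must be a typo for $Z_b \bmod N^2$: your reduction genuinely requires the challenge element modulo $N^2$ in order to exponentiate it into $\langle 1+N\rangle$. In short, your self-contained proof is strictly more informative than what the paper provides; your stated fallback of simply invoking \cite{DBLP:conf/asiacrypt/BressonCP03} would reproduce exactly what the paper does.
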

\begin{proof}
The detailed proof can be found in \cite{DBLP:conf/asiacrypt/BressonCP03}.
\end{proof}

\subsection*{C. Paillier Cryptosystem Distributed Decryption   (PCDD) }
 
\label{sec:proposedBCRPP}

In order to realize  LightCom, our previous Paillier Cryptosystem Distributed Decryption  (PCDD) \cite{liu2018efficientTDSC} cryptosystem is  used and works as follows:

\textbf{KeyGen:}   Given a security parameter $k$ and two large prime numbers $p,q$, where ${\cal{L}}(p)={\cal{L}}(q)=k$, we have two strong primes $p',q'$, s.t.,  $p' = \frac{p-1}{2}$ and  $q' = \frac{q-1}{2}$ (due to the property of the strong primes). We then compute $N=pq$  and $\lambda=lcm(p-1, q-1)$, define a function $L(x)=\frac{x-1}{N}$, and choose a generator $g$ of order $(p-1)(q-1)/2$.  The public key  is  $pk=(N,g)$, and the corresponding   private key is $sk= \lambda$.

\textbf{Encryption (Enc):} Input a message $m \in \mathbb{Z}_N$, the \texttt{Enc}  chooses a random number $r \in \mathbb{Z}^*_{N^2}$, and output ciphertext  as $[\![m]\!] = g^m r^N \mod N^2$.

\textbf{Decryption (Dec):} Input a ciphertext $[\![m]\!] \in \mathbb{Z}_{N^2}$ and the private key $sk$, the \texttt{Dec}  compute 
$[\![m]\!]^{\lambda} = (1+mN\lambda) \mod N^2.$ Since $gcd(m, \lambda) =1$, the plaintext $m$ can be recovered as 
$ m = L([\![m]\!]^{\lambda}) \cdot \lambda^{-1} \mod N^2.$

\textbf{Private Key Splitting (KeyS):} Input  the private key $\lambda$, the \texttt{KeyS} separates $\lambda$ into $n$ shares as $\lambda^*_i$  such that $\lambda^*_1+\cdots+\lambda^*_n \equiv 0 \mod \lambda$ and $  \lambda^*_1+\cdots+\lambda^*_n \equiv 1  \mod N$.

\textbf{Partially decryption (PDec):}  Once $[\![m]\!]$ is received, with partially private key $\lambda_i^*$, the partially decrypted ciphertext $CT_i$ can be calculated as: 
  $CT_i  = [\![m]\!]^{ \lambda_i^*} \mod N^2.$

\textbf{Threshold decryption  (TDec):}  
Once  $n$ decrypted ciphertexts $CT_1, \cdots ,CT_n $  are received, the   \textbf{TDec} algorithm  can   calculates
  $T = \prod_{i=1}^{n} (CT_i) \mod N^2,$ and 
$m = L(T \mod N^2)$.


 Given $[\![x_1]\!],\cdots, [\![x_n]\!]$ and $a_1,\cdots,a_n$, we   show that our PCDD have the polynomial  homomorphism property (\texttt{Poly}): 
 \begin{center}
 $[\![a_1 \cdot x_1 + a_2 \cdot x_2 + \cdots a_n x_n ]\!]  \leftarrow [\![x_1]\!]^{a_1} \cdot [\![x_2]\!]^{a_2} \cdots [\![x_n]\!]^{a_n}$
  \end{center}

\underline{Homomorphic Properties of DT-PKC:} Here,  we give three homomorphic properties of DT-PKC as follows:

1)  \textit{Additive homomorphism}:  Given ciphertexts $[\![m_1]\!]$ and $[\![m_2]\!]$ under a same public key $pk$, the additive homomorphism can be achieved by ciphertext multiplication, i.e., compute
$[\![m_1]\!]_{pk} \cdot [\![m_2]\!]_{pk} = \{(1+ (m_1+m_2) \cdot N)\cdot h^{ r_1+r_2} \mod N^2, g^{r_1+r_2}  \mod N^2\}
= [\![m_1+m_2]\!]_{pk}$.

2) \textit{Scalar-multiplicative Homomorphism}: Given ciphertexts $[\![m]\!]_{pk}$  and a constant number $c \in \mathbb{Z}_N$, it has 
$([\![m]\!]_{pk})^{c}  =\{(1+m \cdot N)^c \cdot h^{c r_1} \mod N^2, g^{{c}r_1 }  \mod N^2\}
= [\![cm]\!]_{pk}.$ Specifically, let $c = N+1$ and we have
$([\![m]\!]_{pk})^{N-1}  =\{(1+(mN^2- m N)\cdot h^{(N-1)r_1} \mod N^2, g^{{(N-1)}r_1 }  \mod N^2\}
= [\![-m]\!]_{pk}.$

Without any ambiguity, all the ciphertexts below are encrypted under the same public key $pk$, and we use the notion $[\![x]\!]$ instead of $[\![x]\!]_{pk}$.

\end{document}